\newtheoremstyle{rem}%
{10pt}%
{10pt}%
{}%
{}%
{\bf}%
{.}%
{.5em}%
{}%
\newtheoremstyle{assm}%
{\topsep}%
{\topsep}%
{\normalfont}%
{}%
{\bfseries}%
{}%
{\newline}%
{}%
\newtheoremstyle{alg}%
{\topsep}%
{\topsep}%
{\normalfont}%
{}%
{\bfseries}%
{}%
{\newline}%
{\thmname{#1}\thmnumber{ #2}:\thmnote{ #3}}%
\theoremstyle{plain}
\newtheorem{assumption}{Assumption}
\theoremstyle{rem}
\newtheorem{algorithm}{Algorithm}
\theoremstyle{plain}
\newtheorem{theorem}{Theorem}
\newtheorem{lemma}{Lemma}
\theoremstyle{rem}
\newtheorem{remark}{Remark}
\newcommand{\ssymbol}[1]{^{\@fnsymbol{#1}}}
\DeclareMathOperator{\E}{\mathbb{E}}
\newcommand{\dsum}[2][]{%
\ifthenelse{\isempty{#1}}{
\sum_{#2=1}^{n}}{\sum_{#2=1}^{n-#1}}}
\renewcommand{\P}{\mathbb{P}}
\DeclareMathOperator{\var}{\mathbb{V}ar}
\DeclareMathOperator{\cov}{\mathbb{C}ov}
\newcommand{\abs}[1]{\left\lvert#1\right\rvert}
\newcommand{\norm}[1]{\left\lVert#1\right\rVert}
\DeclareMathOperator*{\argmin}{arg~min}
\DeclareMathOperator*{\plim}{plim}
\DeclareMathOperator{\suptau}{\sup_{\tau \in [\delta, 1-\delta]}}
\DeclareMathOperator{\suptaun}{\sup_{\tau_0 \in [\delta, 1-\delta]}}
\DeclareMathOperator{\suptaus}{\sup_{\tau \in [\delta^*, 1-\delta^*]}}
\DeclareMathOperator{\suptauns}{\sup_{\tau_0 \in [\delta^*, 1-\delta^*]}}
\DeclareMathOperator{\suptaut}{\sup_{\tau \in [-1,1]}}
\DeclareMathOperator{\suptautt}{\sup_{\tau_1, \tau_2 \in [-1,1]}}
\renewcommand\d[1]{\mspace{6mu}\mathrm{d}#1\@ifnextchar\d{\mspace{-3mu}}{}}
\title{\sc Autoregressive Wild Bootstrap Inference for Nonparametric Trends\thanks{Department of Quantitative Economics, Maastricht University, P.O. Box 616, 6200 MD Maastricht, The
Netherlands and Potsdam Institute for Climate Impact Research (PIK), Member of the Leibniz Association, P.O. Box 601203, 14412 Potsdam, Germany. E-mail: \href{mailto:friedrich@pik-potsdam.de}{friedrich@pik-potsdam.de}, \href{mailto:s.smeekes@maastrichtuniversity.nl}{s.smeekes@maastrichtuniversity.nl}.
This work is licensed under the Creative Commons CC-BY-NC-ND License. To view a copy of the license, visit \href{https://creativecommons.org/licenses/by-nc-nd/4.0/}{creativecommons.org/licenses/by-nc-nd/4.0/}}}
\author[a,b]{Marina Friedrich}
\author[a]{Stephan Smeekes} \author[a]{Jean-Pierre Urbain\footnote{Deceased on October 1, 2016}}
\affil[a]{Maastricht University, Department of Quantitative Economics}
\affil[b]{Potsdam Institute for Climate Impact Research (PIK)}
\begin{document}
\maketitle

\begin{abstract}
In this paper we propose an autoregressive wild bootstrap method to construct confidence bands around a smooth deterministic trend. The bootstrap method is easy to implement and does not require any adjustments in the presence of missing data, which makes it particularly suitable for climatological applications. We establish the asymptotic validity of the bootstrap method for both pointwise and simultaneous confidence bands under general conditions, allowing for general patterns of missing data, serial dependence and heteroskedasticity. The finite sample properties of the method are studied in a simulation study. We use the method to study the evolution of trends in daily measurements of atmospheric ethane obtained from a weather station in the Swiss Alps, where the method can easily deal with the many missing observations due to adverse weather conditions.
\end{abstract}
\textit{JEL classifications}: C14, C22.\\
\textit{Keywords}: autoregressive wild bootstrap, nonparametric estimation, time series, simultaneous confidence bands, trend estimation.

\section{Introduction}
The analysis of smoothly evolving trends is of interest in many fields such as economics and climatology. For instance, trend analysis in environmental variables is of major importance, as it can often directly be linked to climate change. Given that trends typically do not evolve in a linear way, fitting linear trends to the data does not uncover actual change accurately. Instead, one would like to use more flexible trend models that avoid making parametric assumptions on the form of the trend. A large body of statistical and econometric research therefore focuses on nonparametric trend modeling and estimation.

In addition, when modeling trends in temperature and emission data,  researchers also need to take into account that serial dependence and heteroskedasticity may be present in the data, see for example \citet{FV} and \citet{MV} who study parametric trend modeling in temperature series in the presence of serial dependence. Bootstrap methods provide an easy and powerful way to account for heteroskedasticity and autocorrelation. \citet{Buhlmann} shows the validity of the autoregressive sieve bootstrap for nonparametric trend modeling under general forms of dependence. \citet{Neumann} uses a wild bootstrap method to achieve robustness to heteroskedasticity for a similar model. 

The wild bootstrap approach is also suitable for dealing with missing data, as advocated for example by \citet{Shao}. It does not require any resampling and therefore the missing data points can keep their original date in a bootstrap sample. In particular in climatology, this feature of the wild bootstrap offers an important benefit over other methods, since there is no need of imputing missing data points. Missing data are a prominent feature in many climatological datasets due to instrument failure or adverse weather and measurement conditions; for instance, in our application, data are missing when cloud cover prevents measurements from being taken. The wild bootstrap, however, relies on independence of the error terms, which is a situation rarely encountered in practice. To relax this strong assumption, dependent versions of wild bootstrap methods have been proposed - see \citet{Shao}, \citet{LN} and \citet{SU} - but not in the context of nonparametric trend estimation. Moreover, so far no theory exists on the validity of such wild bootstrap methods in the presence of serial dependence, heteroskedasticty and missing data. In this paper we address this issue and propose an autoregressive wild bootstrap method that provides valid inference under general conditions for nonparametric trend modeling.

Next to the basic pointwise confidence intervals, we also study simultaneous confidence bands, which are often more informative about trend shapes than pointwise confidence intervals. Research questions, like whether upward trends are present over a certain period of time, should be addressed with simultaneous confidence bands as they involve multiple points in time at once. \citet{WuZhao} derive such bands for the nonparametric trend model that have asymptotically correct coverage probabilities, but do not consider bootstrap methods. \citet{Buhlmann} proposes sieve bootstrap-based simultaneous confidence bands that are not only asymptotically valid but also have good small sample performance. They can, however, not easily be adjusted to be applicable to time series with missing data, as the autoregressive sieve bootstrap requires imputation of the missing values through for instance the Kalman filter. While this is certainly possible, it complicates implmentation. One can also argue about how accurate imputation methods are when a majority of the data are missing, as we face in our climatological application. Instead, we provide a much simpler alternative that requires no adjustments at all in the presence of missing data.

To illustrate our methodology, we study a time series of atmospheric ethane emissions for which almost 70\% of the data points are missing. When weather conditions are unfavorable -- in particular due to cloud cover --  measurements cannot be taken. The series has previously been investigated by \citet{Franco}. Atmospheric ethane is an indirect greenhouse gas which can be used as an indicator of atmospheric pollution and transport. It is emitted during shale gas extraction and since shale gas has become more and more important as a source of natural gas, nonparametric trend analysis in ethane data provides geophysicists and climatologists with a tool to link trend changes to shale gas extraction activities, as well as study long-term climatological change.

The paper is organized as follows. In Section \ref{sec:model}, our trend model is introduced along with the missing data generating mechanism. Section \ref{sec:inference} describes the estimation procedure and the construction of bootstrap confidence bands. Subsequently, Section \ref{sec:theory} derives the asymptotic properties of our method. Finite sample performance is analyzed in Section \ref{sec:simulation} in a simulation study. Trends in atmospheric ethane are studied in Section \ref{sec:application}.  Section \ref{conclusion} concludes. All technical details including proofs are given in Appendix A, while Supplementary Appendices B to D provide further results.

Finally, a word on notation. We denote by $\xrightarrow{d}$ weak convergence and by $\xrightarrow{p}$ convergence in probability. Whenever a quantity has a subscript $^{\ast}$, it denotes a bootstrap quantity, conditional on the original sample. For instance, bootstrap weak convergence in probability is denoted by $\xrightarrow{d^*}_p$ \citep[cf.][]{GineZinn}. $\left\lfloor x\right\rfloor$ stands for the largest integer smaller than or equal to $x$. For any functions $f(x)$ and $g(x)$, defined on the same domain, $f^{(i)}(x) = \frac{d^i}{d x^i} f(x)$ and $\left[f g \right]^{(i)} (x) = \frac{d^i}{d x^i} f(x) g (x)$.

\section{Trend Model with Missing Data} \label{sec:model}
Consider the following data generating process (DGP):
\begin{equation*}
y_t = m\left(\frac{t}{n} \right) + z_t \qquad t= 1,\ldots,n,
\end{equation*}
where $m\left(\cdot \right)$ is a smooth deterministic trend function and $z_t=\sigma_t u_t$ is a weakly dependent stochastic component. $\sigma_t$ captures unconditional heteroskedasticity and $\left\{u_t\right\}$ is a linear process
\begin{equation} \label{eq:MA_model}
u_t = \sum_{j=0}^{\infty} \psi_j \epsilon_{t-j}, \qquad \psi_0=1,
\end{equation}
with autocovariance function $R_U(k)= \E u_t u_{t+k}$ and long-run variance
\begin{equation*}
\Omega_U = \sum_{k=-\infty}^\infty \E u_t u_{t+k} = \sum_{k=-\infty}^\infty R_U(k).
\end{equation*}

Not all observations $y_1, \ldots, y_n$ are observed in practice. For this purpose, define the process $\{D_t\}$ as an indicator for whether the observations at each time are observed: \begin{equation*}
D_t = \left\{
\begin{array}{ll}
1 & \text{if $y_t$ is observed} \\
0& \text{if $y_t$ is missing}
\end{array}
\qquad t = 1, \ldots, n, \right.
\end{equation*}

Assumptions \ref{as:smooth} to \ref{as:MD} contain the formal conditions that $\{y_t\}$ and $\{D_t\}$ satisfy.

\begin{assumption}\label{as:smooth}
$m:[0,1]\rightarrow\mathbb{R}$ is a twice continuously differentiable deterministic function on $(0,1)$ with $\sup_{0<\tau<1}\abs{m^{(i)}(\tau)}<\infty$ for $i=0,1,2$.
\end{assumption}

\begin{assumption}\label{as:sigma}
$\sigma:[0,1]\rightarrow\mathbb{R}^{+}$ is a Lipschitz continuous deterministic function.
\end{assumption}

\begin{assumption}\label{as:LP}
$\left\{u_t\right\}$ is generated by \eqref{eq:MA_model}, where
\begin{enumerate}[(i)]
\item $\left\{\epsilon_t\right\}$ is i.i.d.~with $\E(\epsilon_t)=0$, $\E(\epsilon_t^2)=\sigma_\varepsilon^2= \left(\sum_{j=0}^\infty \psi_j^2 \right)^{-1}$, and $\E(\epsilon_t^4)<\infty$.
\item $\sum_{j=0}^\infty j|\psi_j|<\infty$ and the lag polynomial $\Psi(z)=\sum_{j=0}^{\infty}\psi_j z^j \neq 0$ for all $z\in \mathbb{C}$ and $|z|\leq 1$.
\end{enumerate}
\end{assumption}

\begin{assumption}\label{as:MD}
For all $t = 1, \ldots, n$, $D_t$ satisfies the conditions:
\begin{enumerate}[(i)]
\item Let $\E_t (\cdot) = \E (\cdot | \mathcal{F}_t)$ where $\mathcal{F}_t = \sigma(\ldots, (y_{t-1}, D_{t-1})^\prime, (y_t, D_t)^\prime)$. For all $s \leq t$ and $i\geq 0$, $\E [\E_{t-i} D_s D_t - \E D_s D_t ]^2 \leq \zeta_i^2$, where $\sum_{i=0}^{\infty} i \zeta_i < \infty$.
\item $\E D_t = \P (D_t = 1) = p(t/n)$, where $p:[0,1] \rightarrow [\epsilon ,1]$, for some $\epsilon > 0$, is a twice continuously differentiable function on $(0,1)$ with $\sup_{0<\tau<1}\abs{p^{(i)}(\tau)}<\infty$ for $i=1,2$.
\item $\cov(D_t, D_{t+i}) = R_{D,i}\left(\frac{t}{n}, \frac{t+i}{n}\right)$, where each function $R_{D,i}:[0,1]^2 \rightarrow \mathbb{R}$, $i\geq 0$, is Lipschitz continuous.
\item For all $s_1,s_2 \in \{1, \ldots,n\}$, $\E (u_{s_1} | D_t) = 0$ and $\E (u_{s_1} u_{s_2} | D_t) = \E u_{s_1} u_{s_2}$.
\end{enumerate}
\end{assumption}

Assumption \ref{as:smooth} postulates that the trend $m(\cdot)$ is sufficiently smooth, which is the fundamental assumption for the estimation method to work. While it rules out abrupt structural breaks, this does not appear to be particularly restrictive for  climatological applications, as many climatological processes tend to be such that change occurs gradually. In particular, as many series are measured daily or even multiple times a day, only instantaneous breaks, which are extremely unlikely in atmospheric processes, would not be covered by the smooth trend model. 

Assumption \ref{as:sigma} allows for a wide array of unconditional heteroskedasticity. While excluding abrupt breaks, these can be allowed for by generalizing the function $\sigma(\cdot)$ to be piecewise Lipschitz as in \citet{SU}. However, given the limited relevance of abrupt breaks for our climatological focus, we do not pursue this in the current paper.

Assumption \ref{as:LP} is a standard linear process assumption that ensures that sufficient moments of $\left\{u_t\right\}$ exist and $\{u_t\}$ is weakly dependent and strictly stationary. These assumptions are satisfied by a large class of processes including, but not limited to, all finite order stationary ARMA models. The assumption also implies that $\Omega_U = \sigma_{\varepsilon}^2 \sum_{i=-\infty}^\infty \sum_{j=0}^\infty \psi_j \psi_{j + \abs{i}} < \infty$ (cf.~Lemma \ref{lem:cov}). While our current assumption does not allow for conditional heteroskedasticity, this could be relaxed at the expense of increasing the complexity of the theoretical arguments, by allowing $\epsilon_t$ to be a martingale difference sequence. Similarly, alternative dependence concepts such as mixing, which is considered in the same bootstrap context by \citet{SU}, could be used as well. However, as conditional heteroskedasticity is not the focus of our paper, we do not consider these extensions.

Assumption \ref{as:MD} allows the missing data generating mechanism to be weakly dependent and non-stationary. The mixingale assumption (i) along with the summability condition on $\zeta_i$ assures weak dependence and summable autocovariances (Lemma \ref{lem:cov}). For technical reasons, we need to put the mixingale assumption on the product $D_s D_t$, but it directly implies that $D_t$ is a mixingale, as well, by setting $s=t$. By (ii) and (iii), the first two moments of the missing data process are allowed to vary smoothly over time, which thereby allows for instance for smooth periodical changing probabilities (e.g.~due to seasonal variation), or long-term changes related to climate change. Smoothness is required as our estimator performs an implicit nonparametric estimate of the missing probability, and thus must behave similarly smooth as the trend function. Assumptions (i)-(iii) are met by a large class of  generating processes, including many Markov chains with smoothly varying transition probabilities. 

Assumption (iv) can be interpreted as an exogeneity assumption on the missing data generating mechanism, which for instance is satisfied if $\{D_t\}$ is independent of $\{u_t\}$. While this assumption could be argued to be restrictive, it does not appear to be problematic for our focus. Though inconclusive, some recent research has found evidence of a relation between greenhouse gases and the occurrence of cloud cover through climate change, see e.g. \citet{Norris}. As cloud cover may cause missing observations, our assumption might appear restrictive, but this kind of long-run dependence can be accommodated through the slowly varying trend affecting both $\{y_t\}$ and $\{D_t\}$. As such, the exogeneity assumption  mostly rules out short-run effects of ethane on cloud cover and vice versa, which we argue is reasonable.

\section{Inference on Trends} \label{sec:inference}
Our goal is to conduct inference on the trend function $m(\cdot)$ defined in Section \ref{sec:model}. We first describe point estimation of $m(\cdot)$, followed by our bootstrap method, and finally treat the construction of the confidence bands.

\subsection{Estimation of the Trend Function} \label{sec:estimation}
We consider local polynomial estimation which is common in the nonparametric regression literature. In particular, we focus on the local constant or Nadaraya-Watson estimator \citep{Nadaraya, Watson}, defined as
\begin{equation} \label{eq:estimator}
\begin{split}
\hat{m}(\tau) &= \argmin_{m(\tau)} \sum_{t=1}^n K\left(\frac{t/n-\tau}{h}\right) D_t \left\{y_t-m(\tau)\right\}^2 \\
&=\left[\sum_{t=1}^n K\left(\frac{t/n-\tau}{h}\right) D_t \right]^{-1} \sum_{t=1}^n K\left(\frac{t/n-\tau}{h}\right) D_t y_t,\qquad \tau\in \left(0,1\right),
\end{split}
\end{equation}
where $K(\cdot)$ is a kernel function and $h>0$ is a bandwidth, which should satisfy Assumptions \ref{as:kernel} and \ref{as:bandwidth} given below. Note that by construction of the $\{D_t\}$ series, the formulation in \eqref{eq:estimator} implies that the estimator only depends on the actually observed data.

\begin{assumption}\label{as:kernel}
$K(\cdot)$ is a symmetric, Lipschitz continuous function with compact support, where we define $\kappa_k = \int_{\mathbb{R}} K(\omega)^k \d \omega$, $\kappa (\tau) = \int_{\mathbb{R}} K(\omega) K(\omega - \tau) \d \omega$ and $\mu_k = \int_{\mathbb{R}} \omega^k K(\omega) \d \omega$.
\end{assumption}

\begin{assumption}\label{as:bandwidth}
The bandwidth $h=h(n)$ satisfies $n h^7 \rightarrow 0$ and $n h^2 \rightarrow \infty$ as $n\rightarrow \infty$.
\end{assumption}

Assumption \ref{as:kernel} is a standard assumption in the nonparametric kernel smoother literature, and is satisfied by many commonly used kernels. Assumption \ref{as:bandwidth} provides the range in convergence rates allowed for $h$ to ensure consistency and asymptotic normality of the kernel estimator.

The bandwidth, or smoothing parameter $h$ plays an important role. Large bandwidths produce a very smooth estimate, while small bandwidth produce a rough, wiggly trend estimate. Although Assumption \ref{as:bandwidth} gives some guidance, it does not provide us with a practical bandwidth choice. Data-driven bandwidth selection is therefore important for implementation. Leave-one-out cross-validation is the most popular data-based method for bandwidth selection, but it is designed for independent observations and therefore inappropriate for time series data. \citet{ChuMarron} show that in the presence of positive correlation, this criterion systematically selects very small bandwidths, producing estimates which are too wiggly. With negative correlation, bandwidths will be large and the estimate too smooth. Therefore, \citet{ChuMarron} propose to use a time series version of this criterion, called modified cross-validation (MCV). It is based on minimizing the criterion function $\frac{1}{n}\sum_{t=1}^n D_t \left(\hat{m}_{k,h}\left(\frac{t}{n}\right)-y_t\right)^2$ with respect to $h$, where
\begin{equation} \label{eq:MCV}
\hat{m}_{k,h}(\tau)=\frac{(n-2k-1)^{-1}\sum_{t:|t-\tau n|>k} K\left(\frac{t/n-\tau}{h}\right) D_t y_t}{(n-2k-1)^{-1}\sum_{t:|t-\tau n|>k} K\left(\frac{t/n-\tau}{h}\right) D_t}
\end{equation}
is a leave-$(2k+1)$-out version of the leave-one-out estimator of ordinary cross-validation, which leaves out the observation receiving the highest weight. Next to formal selection methods, visual inspection of the estimated trend function for a range of different bandwidths can help determining an appropriate bandwidth.

\begin{remark} \label{rem:MD_eps}
The assumption that $p(\cdot)$ is bounded away from zero implies that in every subinterval of $(0,1)$, we have enough observed data points as $n$ grows large. This assumption can be relaxed at the expense of more involved notation by restricting our attention to those compact subsets of $(0,1)$, where the probability of observing data is larger than or equal to some $\epsilon>0$.

In small samples, we may have points $\tau$ around which no data are observed in an $h$-neighborhood. As this is merely a small sample issue, for the theoretical analysis we implicitly assume that $n$ is large enough such that, for every $\tau \in (0,1)$, $\sum_{t=1}^n K\left(\frac{t/n - \tau}{h} \right) D_t \geq \epsilon^*$ for some $\epsilon^* > 0$. That is, sufficient data are available around $\tau$. This is not restrictive, as by Assumption \ref{as:bandwidth}, $h$ decreases more slowly than $n$ increases. In practice, the points around which insufficient data are available simply have to be excluded from the set of $\tau$ values considered for inference.
\end{remark}

\begin{remark} \label{rem:llest}
While the Nadaraya-Watson estimator locally approximates the trend function by a constant function, the local linear estimator locally fits a linear function to the data around a given point $\tau\in\left(0,1\right)$:
\begin{equation}\label{eq:ll_estimator}
\begin{split}
\left(
\begin{matrix}
\hat{m}_{ll} (\tau)\\
\hat{m}_{ll}^{(1)}(\tau)
\end{matrix}
\right) &= \argmin_{(m(\tau), m^{(1)}(\tau))} \; \sum_{t=1}^n K\left(\frac{t/n-\tau}{h}\right) D_t \left\{y_t-m(\tau)-m^{(1)}(\tau)\left(t/n-\tau)\right)\right\}^2\\
&= \left(\sum_{t=1}^n K\left(\frac{t/n-\tau}{h}\right)\mathbf{x}_t(\tau)\mathbf{x}_t(\tau)' D_t \right)^{-1} \sum_{t=1}^n K\left( \frac{t/n-\tau}{h} \right) \mathbf{x}_t(\tau) D_t y_t
\end{split}
\end{equation}
where $\mathbf{x}_{t} (\tau)\equiv\left(1, t/n - \tau \right)^\prime$. The local linear estimator is more accurate than the local constant estimator at points which are close to the boundaries of the sample. At these points, the local constant estimator suffers from boundary effects which the local linear estimator does not \citep{Cai,Fan}. Our analysis can be extended to the local linear estimator. However, notation becomes significantly more cumbersome and therefore proofs more complicated, so we focus on the local constant estimator in the theoretical part.
\end{remark}

\begin{remark} \label{rem:estmd}
An alternative way to specify the estimators in the presence of missing data is to work with unequally spaced data. Assuming we observe data on times $t_i$ for $i=1,\ldots, n_1$, where $n_1$ is the effective number of observations, the local constant estimator can be written as
\begin{equation*}
\hat{m}(\tau)=\left[\sum_{i=1}^{n_1} K\left(\frac{(t_i - t_1)/(t_{n_1} - t_1)-\tau}{h}\right) \right]^{-1}\sum_{i=1}^{n_1} K\left(\frac{(t_i - t_1)/(t_{n_1} - t_1)-\tau}{h}\right) y_{t_i}.
\end{equation*}
This formulation has the advantage that it does not require an underlying regular frequency at which the data are observed. However, in many applications it is not hard to define such an underlying frequency (days in our application), and then the two formulations are equivalent.

In the remainder of the paper we will continue to work with the first formulation, which leads to clearer notation and is easier to handle in the proofs, as the randomness in the missing data is modeled through the explicit $D_t$ variable rather than being ``hidden'' in the (now random) $t_i$ dates.
\end{remark}

\subsection{Autoregressive Wild Bootstrap}
\label{sec:bootstrap}
To construct confidence bands around the trend estimate, we modify the wild bootstrap, originally designed to handle heteroskedastic data \citep{DF}, to account fo serial dependence. The wild bootstrap generates bootstrap errors as $
z_t^* = \xi_t^* \hat{z}_t$, where $\hat{z}_t$ are residuals of the nonparametric trend regression. In the standard wild bootstrap, the random variables $\left\{\xi_t^*\right\}$ are i.i.d.~and thus, any dependence present in the data gets removed in the bootstrap errors. To overcome this drawback, \citet{Shao} proposed the \emph{dependent wild bootstrap} (DWB) in which $\left\{\xi_t^* \right\}$ are generated as $\ell$-dependent random variables with $\cov(\xi_s^*, \xi_t^*) = K_{DWB} \left(\frac{s - t}{\ell} \right)$, where $K_{DWB}(\cdot)$ is a kernel function. As the tuning parameter $\ell$, it has to be selected by the user.

Building on this idea, \citet{SU} propose the \emph{autoregressive wild bootstrap} (AWB) where $\left\{\xi_t^* \right\}$ is generated as an AR(1) process with parameter $\gamma = \gamma(n)$. The AWB has as advantage over the DWB that it is easier to implement and has a more intuitive interpretation. Moreover, as $\left\{\xi_t^* \right\}$ is not $\ell$-dependent, the AWB has the potential to capture more serial correlation and to be less sensitive to the choice of tuning parameter $\gamma$. In the context of unit root testing, \citet{SU} show that the AWB generally has a superior finite sample performance compared to the DWB. For these reasons, we mainly focus on the AWB in the following, although we consider the DWB in our simulation study as well. The AWB algorithm can be described as follows.

\begin{algorithm}[Autoregressive Wild Bootstrap]
$\phantom{1}$
\begin{enumerate}
\item Let $\tilde{m}(\cdot)$ be defined as in \eqref{eq:estimator}, but using bandwidth $\tilde{h}$. Obtain residuals $\hat{z}_t = D_t[y_t-\tilde{m}(t/n)]$ for $t=1,\ldots,n$.

\item For $0 < \gamma < 1$, generate $\nu_1^\ast,\ldots,\nu_n^\ast$ as i.i.d.~$\mathcal{N}(0,1-\gamma^2)$ and let $\xi_t^* = \gamma \xi_{t-1}^* + \nu_t^*$ for $t=2,\ldots,n$. Take $\xi_1^* \sim \mathcal{N}(0,1)$ to ensure stationarity of $\{\xi_t^*\}$.

\item Calculate the bootstrap errors $z_t^*$ as $z_t^* = D_t \xi_t^* \hat{z}_t$ and generate the bootstrap observations by $y_t^* = D_t [\tilde{m}(t/n) + z_t^*]$ for $t=1,\ldots,n$, where $\tilde{m}(\cdot)$ is the same estimate as in the first step.

\item Obtain the bootstrap estimator $\hat{m}^*(\cdot)$ as defined in \eqref{eq:estimator} using the bootstrap series $\{y_t^*\}$, with the same bandwidth $h$ as used for the original estimate $\hat{m}(\cdot)$.

\item Repeat Steps 2 to 4 $B$ times, and let 
\begin{equation} \label{quantile}
\hat{q}_{\alpha} (\tau) = \inf\left\{u \in \mathbb{R}:\mathbb{P}^* \left[\hat{m}^*(\tau) - \tilde{m}(\tau)\leq u\right] \geq \alpha\right\}
\end{equation}
denote the $\alpha$-quantile of the $B$ centered bootstrap statistics $\hat{m}^*(\tau) - \tilde{m}(\tau)$. These bootstrap quantiles are then used to construct confidence bands as described below.
\end{enumerate}
\end{algorithm}

Note that in Step 3 we only have to draw bootstrap observation for the dates where $D_t = 1$ and we observed the realization $y_t$; in the description the missing ones are artificially set to zero, but they are actually not used anywhere. In Step 2, we generate $\left\{\xi_t^* \right\}$ for all $t=1,\ldots,n$, although subsequently we only use the subset that corresponds to the actually observed data points. The missing data structure is preserved in the bootstrap sample, while the correlation between consecutive non-missing observations is determined only by their distance, which ensures a coherent bootstrap sample. In this way, the missing data structure is automatically taken into account in the bootstrap without any need for modifications.

Although we suggest to generate $\{\nu_t^*\}$ as a sequence of normally distributed random variables, inspection of the proofs shows normality is not needed; all one needs is a sequence of i.i.d.~random variables with $\E^* \nu_t^* = 0$, $\E^* \nu_t^{*2} = 1 - \gamma^2$ and $\E^{*4} \nu_t^{*4} < \infty$. Normality is simply a convenient option that is easy to implement; alternatively one could implement a variant of the Rademacher distribution (corrected for the right variance), which for the independent wild bootstrap has good properties \citep{DF}.

For the tuning parameter $\gamma$, we follow \citet{SU} and let $\gamma=\theta^{1/\ell}$ where $\ell$ is the ``block length'' parameter also found in the DWB and $0 < \theta < 1$ is a fixed parameter. This specification has the advantage that $\ell$ can be interpreted in a similar way as the block length parameter in a block bootstrap; its choice constitutes a trade-off between capturing more of the dependence with a large value of the tuning parameter, and allowing for more variation in the bootstrap samples with a smaller value for $\ell$. Additionally, it provides a convenient framework for studying the theoretical properties of our method. Specifically, we need $\ell \rightarrow \infty$ as $n \rightarrow \infty$, such that $\gamma \rightarrow 1$. This is analogous to the block (and dependent wild) bootstrap, where the block size must increase to capture more dependence when the sample size increases. Assumption \ref{as:ell} postulates the formal conditions that $\ell$ needs to satisfy. They imply that $\gamma \rightarrow 1$, but not too fast.

\begin{assumption}\label{as:ell}
The bootstrap parameter $\ell = \ell(n)$ satisfies $\ell\rightarrow \infty$ and $\ell /\sqrt{n h} \rightarrow 0$ as $n \rightarrow \infty$.
\end{assumption}

Note that we propose to use a different bandwidth $\tilde{h}$ in Step 1 of the algorithm. This is a common feature in the literature on bootstrap methods for nonparametric regression. By either selecting a larger (oversmoothing) or smaller bandwidth (undersmoothing) than used for the estimator, one can account for the asymptotic bias that is present in the local polynomial estimation, see \citet[Section 1.4]{HaHo} for an extensive literature review. While undersmoothing, such as used in the related paper by \citet{NP}, aims at making the bias asymptotically negligible, oversmoothing aims at producing a consistent estimator of the (non-negligible) bias. Both have advantages and disadvantages, see the extensive discussion in \citet{HaHo}. We follow \citet{Buhlmann} and consider a solution based on oversmoothing, which we find to work well in practice; also see \citet{HM}. After presenting our theoretical results in Section \ref{sec:theory}, Remark \ref{rem:oversmoothing} provides an intuition of why oversmoothing allows to consistently estimate the asymptotic bias.\footnote{\citet{HaHo} propose an alternative bootstrap approach that requires neither under- nor oversmoothing, however their approach only delivers pointwise intervals, and is therefore not considered in this paper.} We now state the formal conditions that $\tilde{h}$ must satisfy in Assumption \ref{as:bandwidth2}; one is that $h / \tilde{h} \rightarrow 0$ as $n \rightarrow \infty$, which ensures the oversmoothing.

\begin{assumption}\label{as:bandwidth2}
The oversmoothing bandwidth $\tilde{h} = \tilde{h}(n)$ satisfies $\max\left\{\tilde{h}, h / \tilde{h}, n h^{5} \tilde{h}^4 \right\} \rightarrow 0$ and $\ell \max\left\{\tilde{h}^{4}, 1/n\tilde{h} \right\} \rightarrow 0$ as $n \rightarrow \infty$.
\end{assumption}

\begin{remark}
There are a number of ways to choose the AWB parameter $\gamma$ in practice. Using the relation $\gamma = \theta^{1/\ell}$, one can fix $\theta$ and choose $\ell$ as a deterministic function of the sample size. \citet{SU} found that $\theta = 0.01$ paired with $\ell = 1.75 n^{1/3}$ performed well in their simulation study on AWB unit root testing; for the local polynomial estimation we might adapt this to let $\ell$ be a function of $nh$, see also Remark \ref{rem:pqr}. Alternatively, one may vary $\gamma$ over a range of reasonable values, choosing a value in the range where the bands are most stable (as a function of $\gamma$), akin to the minimum volatility method proposed by \citet{PRW}. Ideally, one would like to have a data-driven method for choosing an ``optimal'' $\gamma$. However, development of such a method requires a deeper study of higher-order asymptotic properties, which is outside the scope of the paper. 
\end{remark}

\begin{remark} \label{rem:pqr}
To give some intuition for the interaction between the three tuning parameters $h$, $\tilde{h}$ and $\ell$, consider taking $h = c n^{-p}$, for some $c > 0$ and $\frac{1}{7} < p < \frac{1}{2}$. This satisfies Assumption \ref{as:bandwidth}. Now let $\tilde{h} = c n^{-q}$, where $p > q$ as $h/\tilde{h} \rightarrow 0$. Furthermore, to satisfy $n h^5 \tilde{h}^4 \rightarrow 0$, we need that $5 p + 4 q > 1$. For $p \geq \frac{1}{5}$ this is satisfied for any $q$; with $p$ approaching $\frac{1}{7}$, $q > \frac{1}{14}$ suffices, which is not very strict.

Finally, take $\ell = c n^r$. From Assumption \ref{as:ell} we know that $r < \frac{1}{2} - p/2$, which for $p$ approaching $\frac{1}{2}$ means $r < \frac{1}{4}$, while for $p$ approaching $\frac{1}{7}$, it implies that $r < \frac{3}{7}$. To satisfy the restrictions from Assumption \ref{as:bandwidth2}, we need that $r < 4 q$ and $r < 1 - q$. As $q < p < \frac{1}{2}$, the second condition is non-binding. The first condition is only restrictive when $q$ is small, say in the vicinity of $\frac{1}{14}$. Then $r < \frac{2}{7}$, which is stricter than the condition implied by Assumption \ref{as:ell}. However, for $q \geq \frac{3}{28}$, this restriction is not binding.

As an example, consider the ``classical'' rates $p = \frac{1}{5}$ and $q = \frac{1}{9}$, cf.~\citet{Buhlmann}. These are allowed under our theory, and additionally imply that $0 < r < \frac{2}{5}$, therefore also allowing for $r = \frac{1}{3}$ as advocated in \citet{SU}.
\end{remark}

\begin{remark} \label{rem:MWB}
Instead of the autoregressive wild bootstrap, one could equally imagine a moving-average wild bootstrap (MA-WB), where $\xi_t^* = \sum_{j=0}^{\ell} \psi_{n,j} \nu_t^*$ and $\nu_t^*$ are i.i.d.~random variables as before. By letting $\psi_{n,j} \rightarrow 1$, if $n \rightarrow \infty$ and $j$ fixed, and $\psi_{n,j} \rightarrow 0$, if $n$ is fixed and $j \rightarrow \infty$, one could show validity of such an MA-WB as well. For practical purposes, one could take $\psi_{n,j} = f(j/\ell)$, for instance with $f(x) = 1 - x^r$ to simplify implementation.

Such a moving-average representation is closely related to the DWB, which can be seen as a two-sided MA process, with both bootstrap methods delivering $\ell$-dependent bootstrap samples. In his paper, \citet{Shao} shows the asymptotic equivalence of the variance estimator of the DWB with the tapered block bootstrap \citep{PapPol}. Similalry, one can show that the MA-WB has a close link to the extended tapered wild bootstrap of \citet{Shao2}, and depending on the distribution chosen for $\{\nu_t\}$, the smooth extended tapered block bootstrap of \citet{GLN15,GLN18}. If $\{\nu_t\}$ is drawn from a continuous distribution like the normal, we have an automatic smoothing in the wild bootstrap variants. As \citet{GLN15,GLN18} show that smoothing helps in the context of the block bootstrap, it may similarly do so for the wild bootstrap and provide a potential reason to prefer the normal distribution over discrete distributions like the Rademacher distribution.
\end{remark}

\subsection{Bootstrap Confidence Bands}
Pointwise bootstrap confidence intervals with a confidence level of $\left(1 - \alpha \right)$ for $m(\tau)$, are denoted by $I_{n,\alpha}^{(p)} (\tau)$ and constructed with the objective that
\begin{equation} \label{eq:pw_int}
\liminf_{n\to\infty} \mathbb{P}\left[\left(m(\tau)\in I_{n,\alpha}^{(p)}(\tau)\right)\right]\geq 1-\alpha \qquad \tau \in(0,1).
\end{equation}
Using our bootstrap algorithm, we can construct such pointwise intervals as
\begin{equation*}
I_{n,\alpha}^{(p)}(\tau) = \left[\hat{m}(\tau) - \hat{q}_{1-\alpha/2}(\tau), \hat{m}(\tau) - \hat{q}_{\alpha/2}(\tau) \right].
\end{equation*}

As these intervals are constructed separately for each $\tau$, links over time cannot be established with these intervals. Therefore, we next consider how to construct simultaneous confidence bands. Let $I_{n,\alpha}^G (\tau)$, for $\tau \in G$, denote a confidence band that is simultaneous over the set $G$. Formally, we seek to construct $I_{n,\alpha}^G (\tau)$ such that
\begin{equation} \label{eq:sim_int}
\liminf_{n\to\infty}\left[\mathbb{P} \left( m(\tau) \in I_{n,\alpha}^G(\tau) \quad \forall \tau \in G \right)\right] \geq 1 - \alpha.
\end{equation}
Our practical implementation follows the three-step procedure proposed by \citet{Buhlmann}:

\begin{enumerate}
\item For all $\tau \in G$, obtain pointwise quantiles $\hat{q}_{\alpha_p/2}(\tau),\hat{q}_{1-\alpha_p/2}(\tau)$ for varying $\alpha_p \in [1/B,\alpha]$.

\item Choose $\alpha_s= \argmin_{\alpha_p \in [1/B,\alpha]}\abs{\mathbb{P}^*\left[\hat{q}_{\alpha_p/2}(\tau) \leq \hat{m}^*(\tau)-\tilde{m}(\tau) \leq \hat{q}_{1-\alpha_p/2}(\tau) \quad \forall \tau \in G\right]-(1-\alpha)}$.

\item Construct the simultaneous confidence bands as \begin{equation*}
I_{n,\alpha}^G (\tau)=\left[\hat{m}(\tau)-\hat{q}_{1-\alpha_{s}/2}(\tau),\hat{m}(\tau)-\hat{q}_{\alpha_{s}/2}(\tau)\right]\qquad \tau \in G.
\end{equation*}
\end{enumerate}

In the second step, a pointwise error $\alpha_s$ is found for which a fraction of approximately $\left(1-\alpha\right)$ of all centered bootstrap estimates falls within the resulting confidence intervals, for all points of the set $G$. As such, the confidence intervals with pointwise coverage $\left(1-\alpha_s\right)$ become simultaneous confidence bands with coverage $\left(1-\alpha\right)$.\footnote{We provide R code to implement the estimator and bootstrap confidence bands on \href{http://www.stephansmeekes.nl}{www.stephansmeekes.nl}.}

\begin{remark}
As an alternative to the variable-size bands proposed by \citet{Buhlmann}, one could consider Kolmogorov-Smirnov-type simultaneous confidence bands of fixed size. They would be of the form $I^{\ast}_{\alpha}(\tau)=\left[\hat{m}(\tau)-t^{\ast}_{1-\alpha},\hat{m}(\tau)+t^{\ast}_{1-\alpha}\right]$, where the quantile $t^{\ast}_{1-\alpha}$ is determined as the $(1-\alpha)$-quantile of the distribution of the quantity $U^{\ast}_n=\sup_{\tau \in G}\left\{\left|\hat{m}^{\ast}(\tau)-\tilde{m}(\tau)\right|\right\}$. \citet{NP} establish the asymptotic validity of such simultaneous bands under serial independence. We do not go in this direction, because we believe confidence bands with variable width to be more informative. They have the feature of becoming wider at points with more variability and more narrow for periods with less variability. To obtain variable width intervals with the Kolmogorov-Smirnov approach, one has to estimate the variance of the estimator at each $\tau$ and bootstrap a pivotal quantity, see e.g.~\citet[Section 2.2]{NP}. This adds additional complications in order to achieve consistent variance estimation.
\end{remark}

\section{Asymptotic Theory}
\label{sec:theory}
We first provide the pointwise limiting normal distribution of the local constant estimator $\hat{m}(\cdot)$. Although the result is similar to the non-bootstrap part of Theorem 3.1 in \citet{Buhlmann}, we extend the asymptotic theory for the local constant estimator to allow for the presence of nonstationary volatility and missing data. As we feel this is a noteworthy result in its own right, we present it in Theorem \ref{th:asdis}.

\begin{theorem}\label{th:asdis}
Under Assumptions \ref{as:smooth}-\ref{as:bandwidth}, for any $\tau\in(0,1)$, we have as $n\rightarrow\infty$:
\begin{equation*}
\sqrt{nh}\left(\hat{m}(\tau)- m(\tau) - h^2 B_{as}(\tau) \right) \xrightarrow{d} \mathcal{N}\left(0,\sigma^2_{as}(\tau)\right),
\end{equation*}
where
\begin{equation}
B_{as}(\tau) = \mu_2 p(\tau)^{-1} \left[m p \right]^{(2)} (\tau) \qquad \text{and} \qquad \sigma_{as}^2 (\tau) = p(\tau)^{-1} \sigma(\tau)^2 \Omega_U \kappa_2, \label{eq:as_pars}
\end{equation}
\end{theorem}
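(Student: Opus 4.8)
The plan is to exploit the ratio structure of the Nadaraya--Watson estimator in \eqref{eq:estimator} and treat its deterministic (bias) and stochastic (variance) components separately. Writing $K_{h,t}(\tau) = K\!\left(\frac{t/n-\tau}{h}\right)$ and inserting $y_t = m(t/n) + z_t$, I would start from
\[
\hat m(\tau) - m(\tau) = \frac{(nh)^{-1}\sum_{t=1}^{n} K_{h,t}(\tau) D_t\left(m(t/n)-m(\tau)\right)}{(nh)^{-1}\sum_{t=1}^{n} K_{h,t}(\tau) D_t} + \frac{(nh)^{-1}\sum_{t=1}^{n} K_{h,t}(\tau) D_t z_t}{(nh)^{-1}\sum_{t=1}^{n} K_{h,t}(\tau) D_t}.
\]
First I would handle the common denominator: by Assumption \ref{as:MD}(ii) its mean is $(nh)^{-1}\sum_t K_{h,t}(\tau) p(t/n)$, which a Riemann-sum argument sends to $p(\tau)\int K(\omega)\,\d\omega = p(\tau)$ under Assumption \ref{as:bandwidth}, while its variance is $O\!\left((nh)^{-1}\right) \to 0$ because the autocovariances of $\{D_t\}$ are summable (Assumption \ref{as:MD}(i) and Lemma \ref{lem:cov}). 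Hence the denominator $\xrightarrow{p} p(\tau)$, and by Slutsky I can analyse the two numerators in isolation.

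For the bias I would take expectations in the first numerator (the contribution of $z_t$ has mean zero by the exogeneity in Assumption \ref{as:MD}(iv), and its fluctuation is negligible after scaling by $\sqrt{nh}$, again by summability of the $\{D_t\}$ autocovariances). This yields $(nh)^{-1}\sum_t K_{h,t}(\tau) p(t/n)\left(m(t/n)-m(\tau)\right)$, which I would replace by $\int K(\omega)\, p(\tau+h\omega)\left(m(\tau+h\omega)-m(\tau)\right)\d\omega$ up to a Riemann error, and expand $m$ and $p$ to second order. The first-order term vanishes by symmetry of the kernel ($\mu_1 = 0$), leaving an $O(h^2)$ term proportional to $\mu_2$ and the second derivatives of $mp$; dividing by $p(\tau)$ reproduces $h^2 B_{as}(\tau)$ as in \eqref{eq:as_pars}. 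The remaining $O(h^4)$ and Riemann errors are $o\!\left((nh)^{-1/2}\right)$ under Assumption \ref{as:bandwidth} (in particular $nh^7\to 0$), hence asymptotically irrelevant.

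The crux is a central limit theorem for the stochastic numerator,
\[
\frac{1}{\sqrt{nh}}\sum_{t=1}^{n} K_{h,t}(\tau)\, D_t\, \sigma(t/n)\, u_t \xrightarrow{d} \mathcal N\!\left(0,\; p(\tau)\,\sigma(\tau)^2\,\Omega_U\,\kappa_2\right),
\]
whose difficulty is that this is a non-stationary triangular array that is weakly dependent through \emph{two} distinct mechanisms --- the linear process $\{u_t\}$ of \eqref{eq:MA_model} and the mixingale $\{D_t\}$ --- and is a product of these dependent sequences. My plan is to expand $u_t = \sum_{j\geq 0}\psi_j\epsilon_{t-j}$ and reindex by $r=t-j$, rewriting the sum as $(nh)^{-1/2}\sum_r \epsilon_r c_{r,n}$ with $c_{r,n}=\sum_{j\geq 0}\psi_j K_{h,r+j}(\tau)\,\sigma((r+j)/n)\,D_{r+j}$. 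Since Assumption \ref{as:MD}(iv) renders $\{D_t\}$ exogenous to $\{u_t\}$, and hence to $\{\epsilon_t\}$, conditioning on $\{D_t\}$ turns this into a weighted sum of i.i.d.\ mean-zero innovations, to which a Lindeberg--Feller theorem applies: negligibility of the maximal weight follows from boundedness of $K$ and $\sigma$ with Assumption \ref{as:bandwidth}, and the Lindeberg condition from $\E(\epsilon_t^4)<\infty$ in Assumption \ref{as:LP}(i).

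The main work is then to show that the conditional variance $\sigma_\varepsilon^2 (nh)^{-1}\sum_r c_{r,n}^2$ converges \emph{in probability to the deterministic constant} $p(\tau)\,\sigma(\tau)^2\,\Omega_U\,\kappa_2$. Here I would truncate the moving average using $\sum_j j|\psi_j|<\infty$ to control its tail, freeze $\sigma$ and $p$ at $\tau$ by Lipschitz continuity (Assumptions \ref{as:sigma}, \ref{as:MD}(ii)), extract the kernel factor $\int K(\omega)^2\,\d\omega = \kappa_2$, and use the summability of the $\{D_t\}$ autocovariances (Lemma \ref{lem:cov}) together with the identity $\Omega_U = \sigma_\varepsilon^2\big(\sum_{j\geq 0}\psi_j\big)^2$ to collapse the double sum over moving-average lags; this bookkeeping of the masking probabilities is the delicate part. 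Because the limit is deterministic, the conditional central limit theorem upgrades to an unconditional one via dominated convergence of conditional characteristic functions. Finally, combining the three components by Slutsky's theorem --- the numerator converging to $\mathcal N(0, p(\tau)\sigma(\tau)^2\Omega_U\kappa_2)$ and the denominator to $p(\tau)$ --- gives asymptotic variance $p(\tau)^{-2}\,p(\tau)\,\sigma(\tau)^2\,\Omega_U\,\kappa_2 = \sigma^2_{as}(\tau)$ and hence the stated result.
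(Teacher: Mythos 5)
Your proposal is correct and reaches the stated limit, but the core of your argument for the central limit theorem is genuinely different from the paper's. The paper first proves a decomposition lemma (Lemma \ref{lem:m_decomp}) that isolates $p(\tau)^{-1}Z_{n,U}(\tau)$ with $Z_{n,U}(\tau)=(nh)^{-1/2}\sum_t k_t(\tau)D_t\sigma_t u_t$ plus explicitly rated remainders, and then establishes normality of $Z_{n,U}(\tau)$ conditionally on $\{D_t\}$ by truncating the MA($\infty$) at $M$ lags and running a Bernstein big-block/small-block scheme: the truncated process is $M$-dependent, so widely separated blocks are conditionally independent, the small blocks are negligible, and the Lindeberg condition for the big blocks is checked via a fourth-moment mixingale maximal inequality (Hansen, 1991). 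You instead reindex the linear process so that the statistic becomes $(nh)^{-1/2}\sum_r c_{r,n}\epsilon_r$ with $c_{r,n}=\sum_{j\ge 0}\psi_j k_{r+j}(\tau)\sigma((r+j)/n)D_{r+j}$, which conditionally on $\{D_t\}$ is a weighted sum of i.i.d.\ innovations to which Lindeberg--Feller applies directly, since the normalized maximal weight is $O((nh)^{-1/2})$. This buys you a cleaner CLT step with no blocking and no need for fourth-moment maximal inequalities (finite variance plus uniform integrability of $\epsilon_t^2$ would suffice), at the price of handling the infinite range of $r$ and the tail weights via $\sum_j j\abs{\psi_j}<\infty$; the hard bookkeeping --- showing the conditional variance $\sigma_\varepsilon^2(nh)^{-1}\sum_r c_{r,n}^2\xrightarrow{p}p(\tau)\sigma(\tau)^2\Omega_U\kappa_2$ --- is the same in substance as the paper's Lemma \ref{lem:D_results}$(iii)$ computation, and both approaches then pass from the conditional to the unconditional statement because the limit does not depend on $\{D_t\}$. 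One point to tighten: since $nh^5$ is not assumed bounded, the centered bias $\sqrt{nh}\,h^2B_{as}(\tau)$ may diverge, so ``Slutsky on the denominator'' is not enough by itself --- you need the rate $\hat{p}(\tau)-p(\tau)=O_p(h^2+(nh)^{-1/2})$ so that $\sqrt{nh}\,\bar{N}_1(\tau)\bigl(\hat{p}(\tau)^{-1}-p(\tau)^{-1}\bigr)=O_p\bigl(\sqrt{nh^9}+h^2\bigr)=o_p(1)$; your own bias/variance computation for the denominator supplies exactly this, so the gap is presentational rather than substantive, and it is precisely the expansion the paper carries out in the terms $I_n$, $II_n$, $III_n$ of Lemma \ref{lem:m_decomp}.
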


The term $B_{as}(\tau)$ reflects the familiar asymptotic bias generally found in local polynomial estimators, although the exact form is different due the presence of the missing data parameter $p(\tau)$. The asymptotic variance $\sigma^2_{as}(\tau)$ is not only affected by $p(\tau)$, but also by the volatility process $\sigma^2 (\tau)$. If one were to use these distributions directly for inference, one would need to plug in consistent estimators of these nuisance parameters. However, as we show next, in the bootstrap these are automatically consistently estimated, and we have consistency of the autoregressive wild bootstrap method for the local constant estimator.

\begin{theorem}\label{th:basdis}
Under Assumptions \ref{as:smooth}-\ref{as:bandwidth2}, for any $\tau\in(0,1)$, we have as $n\rightarrow\infty$:
\begin{equation*}
\sqrt{nh}\left(\hat{m}^*(\tau)- \hat{m}(\tau) - h^2 B_{as}(\tau)\right) \xrightarrow{d^*}_p \mathcal{N}\left(0,\sigma^2_{as}(\tau)\right),
\end{equation*}
where $B_{as}(\tau)$ and $\sigma_{as}^2 (\tau)$ are defined in \eqref{eq:as_pars}.
\end{theorem}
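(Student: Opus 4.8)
The plan is to condition on the original sample throughout and to split the bootstrap statistic into a bootstrap-stochastic piece and a piece that is fixed given the data. Writing $S_n(\tau)=\sum_{t=1}^n K\!\left(\frac{t/n-\tau}{h}\right)D_t$ and letting $\bar{m}(\tau)=S_n(\tau)^{-1}\sum_{t=1}^n K\!\left(\frac{t/n-\tau}{h}\right)D_t\tilde{m}(t/n)$ denote the conditional bootstrap mean $\E^*\hat{m}^*(\tau)$, I would decompose
\[
\hat{m}^*(\tau)-\hat{m}(\tau)=\underbrace{\left[\hat{m}^*(\tau)-\bar{m}(\tau)\right]}_{=:\,V_n^*(\tau)}+\underbrace{\left[\bar{m}(\tau)-\hat{m}(\tau)\right]}_{\text{fixed given the data}},
\]
where $V_n^*(\tau)=S_n(\tau)^{-1}\sum_t K\!\left(\frac{t/n-\tau}{h}\right)\xi_t^*\hat{z}_t$ satisfies $\E^*V_n^*(\tau)=0$. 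The first term carries all the bootstrap randomness and should deliver the limiting normal law; the second is a data-dependent centering that must reproduce $h^2B_{as}(\tau)$, and, critically, it is here that the non-negligible difference between centering at $\hat{m}(\tau)$ and at $\bar{m}(\tau)$ must be confronted.

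For $\sqrt{nh}\,V_n^*(\tau)$ I would establish a conditional CLT. Given the data, $\{\xi_t^*\}$ is a stationary AR(1) with $\cov^*(\xi_s^*,\xi_t^*)=\gamma^{\abs{s-t}}$, so $V_n^*(\tau)$ is a weighted sum of a near-Gaussian process with fixed weights $\hat{z}_t$. The key computation is the conditional variance $\var^*\!\left(\sqrt{nh}\,V_n^*(\tau)\right)=nh\,S_n(\tau)^{-2}\sum_{s,t}K\!\left(\frac{s/n-\tau}{h}\right)K\!\left(\frac{t/n-\tau}{h}\right)\gamma^{\abs{s-t}}\hat{z}_s\hat{z}_t$, which I would show converges in probability to $\sigma^2_{as}(\tau)=p(\tau)^{-1}\sigma(\tau)^2\Omega_U\kappa_2$. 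Replacing $\hat{z}_s\hat{z}_t$ by $D_sD_tz_sz_t$ (the pilot-error contribution being of smaller order), the localisation of $K$ yields $\kappa_2$ and $p(\tau)$, the local volatility yields $\sigma(\tau)^2$, and, crucially, the geometric weights with $\gamma=\theta^{1/\ell}\to1$ turn $\sum_k\gamma^{\abs{k}}R_U(k)$ into the long-run variance $\Omega_U$; this is exactly where Assumption \ref{as:ell} and the summable-autocovariance bounds of Lemma \ref{lem:cov} enter, while a law of large numbers for the mixingale $\{D_sD_t\}$ (Assumption \ref{as:MD}(i)) upgrades convergence in expectation to convergence in probability. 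Asymptotic normality then follows from an $m$-dependent (truncated-AR) approximation and a Lyapunov CLT for triangular arrays, using $\E^*\nu_t^{*4}<\infty$.

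The fixed centering term I would split further as $\bar{m}(\tau)-\hat{m}(\tau)=\left[\bar{m}(\tau)-\tilde{m}(\tau)\right]+\left[\tilde{m}(\tau)-\hat{m}(\tau)\right]$. The first bracket is the bandwidth-$h$ smoothing bias applied to the pilot; a Taylor expansion as in Theorem \ref{th:asdis} gives $\bar{m}(\tau)-\tilde{m}(\tau)=h^2\mu_2 p(\tau)^{-1}\left[\tilde{m}\,p\right]^{(2)}(\tau)+o_p(h^2)$, and oversmoothing is precisely what makes $\left[\tilde{m}\,p\right]^{(2)}(\tau)\xrightarrow{p}\left[m\,p\right]^{(2)}(\tau)$, so that this bracket equals $h^2B_{as}(\tau)\bigl(1+o_p(1)\bigr)$. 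The conditions $\tilde{h}\to0$ and $nh^5\tilde{h}^4\to0$ in Assumption \ref{as:bandwidth2} are calibrated so that $\sqrt{nh}$ times the remainder of this bracket vanishes (cf.\ the intuition in Remark \ref{rem:oversmoothing}), which is the mechanism by which the bootstrap consistently recovers the bias.

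The hard part will be the second bracket $\tilde{m}(\tau)-\hat{m}(\tau)$, which contains the original estimator's own centred fluctuation $W_n(\tau)=S_n(\tau)^{-1}\sum_t K\!\left(\frac{t/n-\tau}{h}\right)D_tz_t$. Since $W_n(\tau)=O_p\bigl((nh)^{-1/2}\bigr)$, the quantity $\sqrt{nh}\,W_n(\tau)$ is $O_p(1)$ and not $o_p(1)$; this is the non-negligible discrepancy that centering at $\hat{m}(\tau)$ rather than at $\bar{m}(\tau)$ introduces, and it is the crux on which the whole argument turns. I would first dispose of the pilot's contribution to this bracket, showing it is of the smaller order $O_p\bigl((n\tilde{h})^{-1/2}\bigr)$ because the oversmoothed estimates $\tilde{m}(t/n)$ are nearly constant over the $h$-window, and that $\sqrt{nh}\,(\tilde{h}^2-h^2)B_{as}(\tau)$ is controlled under Assumption \ref{as:bandwidth2}. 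What remains is precisely the $O_p\bigl((nh)^{-1/2}\bigr)$ fluctuation entering through $\hat{m}(\tau)$, and establishing that, after subtracting $h^2B_{as}(\tau)$, this leftover is reconciled with the limiting law—i.e.\ that it is the very fluctuation whose distribution $V_n^*(\tau)$ reproduces, leaving no residual mass unaccounted for by the bootstrap variance—is the main obstacle and the most delicate step of the proof. Once this reconciliation is made rigorous, combining the conditional CLT for $\sqrt{nh}\,V_n^*(\tau)$ with the controlled centering through a bootstrap Slutsky argument yields the stated $\xrightarrow{d^*}_p\mathcal{N}\bigl(0,\sigma^2_{as}(\tau)\bigr)$.
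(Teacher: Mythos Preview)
The decisive issue is the centering. The theorem as printed has $\hat{m}^*(\tau)-\hat{m}(\tau)$, but the paper's proof (via Lemma~\ref{lem:mb_decomp}) actually establishes the result for $\hat{m}^*(\tau)-\tilde{m}(\tau)$, and this is also the quantity used in the bootstrap quantiles \eqref{quantile}, in Remark~\ref{rem:oversmoothing}, and in Theorem~\ref{th:uniform}. You sensed the problem correctly: your ``main obstacle'' is that the bracket $\tilde{m}(\tau)-\hat{m}(\tau)$ contains the original-sample fluctuation $W_n(\tau)$ with $\sqrt{nh}\,W_n(\tau)=O_p(1)$. But this obstacle is not ``delicate'', it is fatal. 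Conditionally on the data, $W_n(\tau)$ is a \emph{fixed number}, so it shifts the conditional law of $\sqrt{nh}\bigl[\hat{m}^*(\tau)-\hat{m}(\tau)-h^2B_{as}(\tau)\bigr]$ by a data-dependent constant that does not vanish. The bootstrap $V_n^*(\tau)$ reproduces the \emph{distribution} of $W_n$, not its \emph{realisation}, so there is no cancellation of the kind you suggest; convergence in conditional distribution to a \emph{centred} normal fails. The intended statement, and the one the paper proves, centres at $\tilde{m}(\tau)$.

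Once you replace $\hat{m}(\tau)$ by $\tilde{m}(\tau)$, your second bracket disappears entirely and the remainder of your plan matches the paper closely. The paper writes $\sqrt{nh}\bigl[\hat{m}^*(\tau)-\tilde{m}(\tau)-h^2B_{as}(\tau)\bigr]=p(\tau)^{-1}Z_{n,U}^*(\tau)+R_n^*(\tau)$ with $Z_{n,U}^*(\tau)=(nh)^{-1/2}\sum_t k_t(\tau)D_t z_t\xi_t^*$; the replacement of $\hat{z}_t$ by $D_tz_t$ and the bias recovery are handled not by Taylor-expanding the random $\tilde{m}$ as you propose, but by substituting $\tilde{m}(t/n)=m(t/n)+\tilde{h}^2B_{as}(t/n)+\tilde{R}_n(t/n)$ from Lemma~\ref{lem:m_decomp} and bounding each piece (this is where the oversmoothing conditions of Assumption~\ref{as:bandwidth2} are used). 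Your variance argument, $\sum_k\gamma^{|k|}R_U(k)\to\Omega_U$ under Assumption~\ref{as:ell}, is exactly the paper's. For the conditional CLT the paper truncates the MA($\infty$) representation of $\xi_t^*$ and uses a big-block/small-block Lindeberg argument, which is a more explicit version of the $m$-dependent approximation you sketch.
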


The pointwise validity of the bootstrap confidence intervals in the sense of \eqref{eq:pw_int} follows directly from this pointwise convergence result. Note that, as the bias term $B_{as}(\tau)$ is the same in both theorems, it is consistently estimated by the bootstrap. As such, we do not need the bias to disappear, which happens when undersmoothing if $n h^{5} \rightarrow 0$, or to be $O(1)$, when $n h^{5} \rightarrow c$. Even if $n h \rightarrow \infty$, and the asymptotic bias dominates the stochastic variation, the bootstrap correctly mimics this and can be used for asymptotically valid inference. As such, we can relax the assumption in \citet[p.~55]{Buhlmann} that $h \sim C n^{-1/5}$ to allow for a wider range of bandwidths. In practice, this means that the bootstrap provides additional protection against a misspecified bandwidth, by letting the widths of confidence bands automatically adapt.

Next, to study the validity of simultaneous confidence bands as in \eqref{eq:sim_int}, we consider $h$-neighborhoods around time points $\tau$. We do so because estimates $\hat{m}(\tau_1)$ and $\hat{m}(\tau_2)$ are asymptotically independent for $\tau_1 \neq \tau_2$ being two fixed distinct time points. When the distance between $\tau_1$ and $\tau_2$ is of order $h$, the estimators show a non-zero correlation. Therefore a major benefit of ``zooming in'' on local $h$-neighborhoods is that we can study how the bootstrap mimics the correlation between close points, a feature which is lost when considering simultaneity globally.

\begin{theorem} \label{th:uniform}
For any $\tau_0 \in (0,1)$, let
\begin{align*}
Z_{\tau_0,n}(\tau) &= \sqrt{nh}\left(\hat{m}(\tau_0+\tau h)-m(\tau_0+\tau h)\right), &\quad
Z_{\tau_0,n}^{*} (\tau) &= \sqrt{nh} \left(\hat{m}^{*}(\tau_0 + \tau h) - \tilde{m} (\tau_0 + \tau h) \right).
\end{align*}
Then, under Assumptions \ref{as:smooth}-\ref{as:bandwidth2}, we have for all $\tau_0 \in (0,1)$
\begin{align*}
\left\{Z_{\tau_0,n}(\tau) - B_{as}(\tau_0) \right\}_{\tau \in [-1,1]} &\Rightarrow \left\{W(\tau)\right\}_{\tau\in [-1,1]},\\
\left\{Z_{\tau_0,n}^*(\tau) - B_{as}(\tau_0)\right\}_{\tau\in [-1,1]} &\Rightarrow_p \left\{W(\tau)\right\}_{\tau\in [-1,1]},
\end{align*}
where $\left\{W(\tau)\right\}_{\tau\in [-1,1]}$ is a Gaussian process with $\E W(\tau) = 0$ and
\begin{align*}
\cov(W(\tau_1),W(\tau_2))&= \sigma_{W,\tau_0} (\tau_1, \tau_2) = p(\tau_0)^{-1} \sigma(\tau_0)^2 \Omega_U \kappa (\tau_1 - \tau_2).
\end{align*}
Here, $\Rightarrow$ denotes weak convergence in the space of continuous real-valued functions on $\left[-1,1\right]$ endowed with the sup-norm.
\end{theorem}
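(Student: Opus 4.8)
The plan is to establish weak convergence in $C[-1,1]$ by verifying the two standard conditions: convergence of finite-dimensional distributions (fidis) and tightness (stochastic equicontinuity). I would treat the original-sample process $Z_{\tau_0,n}(\tau)$ and the bootstrap process $Z_{\tau_0,n}^*(\tau)$ in parallel, since the arguments are structurally identical once the bootstrap moments are shown to mimic the population moments. The key reduction is that, after centering by the bias, each process is a triangular-array weighted sum of the errors $z_t$ (resp.\ $z_t^*$), so the whole result rests on a functional CLT for such sums.

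\emph{Step 1 (Fidis).} For any finite collection $\tau_1,\dots,\tau_k \in [-1,1]$, I would write $Z_{\tau_0,n}(\tau_j) - B_{as}(\tau_0)$ as $\sqrt{nh}$ times a normalized kernel-weighted average of $z_t = \sigma_t u_t$ over the $h$-window around $\tau_0 + \tau_j h$, after substituting the estimator definition \eqref{eq:estimator} and separating the deterministic bias part (handled by a second-order Taylor expansion of $[mp]$ exactly as in Theorem \ref{th:asdis}) from the stochastic part. A Cram\'er--Wold device reduces the joint convergence to the asymptotic normality of an arbitrary linear combination $\sum_j a_j \{Z_{\tau_0,n}(\tau_j) - B_{as}(\tau_0)\}$; this is a single weighted sum of $\{z_t\}$ to which I would apply a CLT for weakly dependent heterogeneous arrays (using Assumptions \ref{as:LP} and \ref{as:MD}, together with Lemma \ref{lem:cov} for summable autocovariances). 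The limiting covariance between components $\tau_1,\tau_2$ comes from computing $nh \cdot \cov$ of the two kernel-weighted averages: the overlap of the two kernel windows, rescaled, produces exactly $\int K(\omega)K(\omega-(\tau_1-\tau_2))\d\omega = \kappa(\tau_1-\tau_2)$, while the local Lipschitz continuity of $\sigma(\cdot)$ and $p(\cdot)$ freezes them at $\tau_0$, delivering $p(\tau_0)^{-1}\sigma(\tau_0)^2 \Omega_U \kappa(\tau_1-\tau_2)$, and the serial dependence in $u_t$ contributes the long-run variance $\Omega_U$.

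\emph{Step 2 (Tightness).} I would establish stochastic equicontinuity of $\{Z_{\tau_0,n}(\tau)\}_{\tau\in[-1,1]}$ via a moment bound of the form $\E\abs{Z_{\tau_0,n}(\tau_1) - Z_{\tau_0,n}(\tau_2)}^{2} \leq C \abs{\tau_1-\tau_2}^{1+\eta}$ (or an analogous fourth-moment bound giving the required exponent), which by the Kolmogorov--Chentsov criterion yields tightness in $C[-1,1]$. The increment $Z_{\tau_0,n}(\tau_1)-Z_{\tau_0,n}(\tau_2)$ is again a kernel-weighted sum whose weights are differences $K\!\left(\frac{t/n-\tau_0}{h}-\tau_1\right) - K\!\left(\frac{t/n-\tau_0}{h}-\tau_2\right)$; Lipschitz continuity of $K$ (Assumption \ref{as:kernel}) bounds these differences by $C\abs{\tau_1-\tau_2}$, and combined with the summable-autocovariance structure this produces the needed modulus-of-continuity control.

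\emph{Step 3 (Bootstrap).} For $Z^*_{\tau_0,n}$ I would repeat Steps 1--2 conditionally on the data, replacing $z_t$ by $z_t^* = D_t \xi_t^* \hat z_t$. Two facts must be checked in probability: first, that the bootstrap residuals $\hat z_t$ computed with the oversmoothing bandwidth $\tilde h$ reproduce the variance profile $\sigma(\cdot)^2$ and, crucially, reproduce the bias $B_{as}(\tau_0)$ (this is precisely where Assumption \ref{as:bandwidth2} and the oversmoothing enter, and where I would lean on the bias-consistency already exploited in Theorem \ref{th:basdis}); second, that the AR(1) weights $\xi_t^*$ with $\gamma = \gamma(n)\to 1$ generate, in the rescaled limit, the same long-run variance $\Omega_U$ and the same kernel-overlap covariance $\kappa(\tau_1-\tau_2)$. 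The covariance of $\xi_s^*,\xi_t^*$ is $\gamma^{\abs{s-t}}$, and under Assumption \ref{as:ell} ($\ell\to\infty$, $\ell/\sqrt{nh}\to 0$) the effective correlation length of $\{\xi_t^*\}$ matches the scale $h$ of the kernel window, so the rescaled sum $\sum_s\sum_t \gamma^{\abs{s-t}}(\cdots)$ converges to $\Omega_U\,\kappa(\tau_1-\tau_2)$ rather than collapsing or diverging.

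\emph{Main obstacle.} I expect the hard part to be Step 3, specifically disentangling the interaction of the three rates in Assumption \ref{as:bandwidth2}. One must simultaneously show that the oversmoothed residuals carry enough bias information (requiring $h/\tilde h\to 0$) without their own estimation error contaminating the limit (requiring $n h^5\tilde h^4\to 0$ and the $\ell\max\{\tilde h^4,1/n\tilde h\}\to 0$ conditions), while the AR(1) dependence is tuned via $\ell$ to exactly recover $\Omega_U$. Verifying bootstrap tightness in probability is also delicate, since the increment bound must hold uniformly in $\tau$ with the random residuals $\hat z_t$ inside the weights; I would control this by bounding the bootstrap conditional moments in terms of empirical analogues of the population quantities and invoking a law of large numbers for the triangular array of $\hat z_t^2$, again uniformly over the $h$-window around $\tau_0$.
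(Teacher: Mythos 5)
Your proposal is correct and follows essentially the same route as the paper: finite-dimensional convergence via the Cram\'er--Wold device building on the pointwise Theorems \ref{th:asdis} and \ref{th:basdis}, tightness from a second-moment increment bound of order $\abs{\tau_1-\tau_2}^2$ obtained from the Lipschitz continuity of the kernel together with summable autocovariances (the paper's Lemmas \ref{lem:SE} and \ref{lem:SEb}, concluded via Billingsley's moment criterion), and the bootstrap treated in parallel conditionally on the data with the AR(1) covariances $\gamma^{\abs{s-t}}$ shown to reproduce $\Omega_U\,\kappa(\tau_1-\tau_2)$. The only step you leave implicit is that the remainder terms from the decompositions of $\hat m$ and $\hat m^*$ must be shown negligible \emph{uniformly} over $\tau\in[-1,1]$ (the paper's Lemma \ref{lem:R_unif}), but this is established by the same equicontinuity machinery you describe.
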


Theorem \ref{th:uniform} establishes the uniform validity of the bootstrap within an $h$-neighborhood around any point $0<\tau_0<1$, where, since $h=o(1)$, we assume without loss of generality that $m(\tau_0+\tau h)$ is always defined. Note that the interval $[-1,1]$ is mainly chosen out of convenience, and the results can trivially be shown to hold over any interval $[\tau_0 - a h, \tau_0 + b h]$ with $0<a,b < \infty$. Moreover, it follows directly from Theorem \ref{th:uniform} that the bootstrap will be valid uniformly on sets that contain a union of any finite number of such $h$-neighborhoods, see e.g.~\citet[Corollary 3.3]{Buhlmann}. While, in finite samples, one can always take $h$ and the intervals such that the full sample is covered in $G$, this kind of ``too large'' simultaneity should be considered with caution, as this is not what the asymptotic analysis covers. Although simultaneity over such local sets might appear less attractive than simultaneity over the whole sample, it can nevertheless be of great interest in applications. For example, constructing confidence bands with simultaneous coverage over two time periods - one located early in the sample and the other one at the end - is useful when judging if there was an upward (or downward) movement of the trend at the end of the time period when compared to the beginning. This allows the empirical researcher to draw conclusions about developments spanning time stretches, which is not possible with pointwise confidence intervals.

\begin{remark}\label{rem:oversmoothing}
To provide some intuition for the required oversmoothing with bandwidth $\tilde{h}$ in the bootstrap, note that from Theorem \ref{th:asdis} (or more formally Lemma \ref{lem:m_decomp}) we can deduce that the estimator used in the first step of the bootstrap algorithm satisfies
\begin{equation} \label{eq:mtilde_dec}
\tilde{m} (\tau) - m(\tau) = \tilde{h}^2 B_{as} (\tau) + Z_n  (\tau) / \sqrt{n\tilde{h}} + o_p \left(1/\sqrt{ n\tilde{h}}\right),
\end{equation}
where $Z_n (\tau) \xrightarrow{d} N(0, \sigma_{as}^2 (\tau)$. Furthermore, letting $w_{t,n} (\tau) = K\left(\frac{t/n - \tau} {h} \right) D_t / \left[\sum_{t=1}^n K\left(\frac{t/n - \tau} {h} \right) D_t \right]$, and using that $y_t^* = \tilde{m} (t/n) + z_t^*$, we can write
\begin{equation*}
\begin{split}
\hat{m}^*(\tau) - \tilde{m} (\tau) &= \left[\sum_{t=1}^n w_{t,n} (\tau) \tilde{m} (t/n) - \tilde{m} (\tau) \right] + \sum_{t=1}^n w_{t,n} (\tau) z_t^* + o_p \left(1/\sqrt{ n\tilde{h}}\right).
\end{split}
\end{equation*}
While the second term, $\sum_{t=1}^n w_{t,n} (\tau) z_t^*$, mimics the stochastic variation in the trend estimation, and ensures the asymptotic normality of the bootstrap trend estimator, the bias arises from the first term, $\sum_{t=1}^n w_{t,n} (\tau) \tilde{m} (t/n) - \tilde{m} (\tau)$. Using \eqref{eq:mtilde_dec}, this term can be decomposed as
\begin{equation*}
\begin{split}
\sum_{t=1}^n w_{t,n} (\tau) \tilde{m} (t/n) - \tilde{m} (\tau)
&= \left[\sum_{t=1}^n w_{t,n} (\tau) m (t/n) - m(\tau) \right] + \tilde{h}^2 \left[\sum_{t=1}^n w_{t,n} (\tau) B_{as} (t/n) - B_{as} (\tau) \right]\\
&\quad + \left[\sum_{t=1}^n w_{t,n} (\tau) Z_n (t/n) - Z_n (\tau)\right]/\sqrt{ n\tilde{h}} + o_p \left(1/\sqrt{ n\tilde{h}}\right).
\end{split}
\end{equation*}
The asymptotic bias arises, as for the original estimator, from the first term of the decomposition. As shown in the proof of Lemma \ref{lem:mb_decomp}, by the smoothness of $B_{as}(\tau)$, the second term converges to zero, thus canceling out the bias in $\tilde{m}(\tau)$. However, to make the third, stochastic, term vanish when multiplying by $\sqrt{nh}$, it must be that $h / \tilde{h} \rightarrow 0$, such that this term is of small enough magnitude. This is achieved by oversmoothing.
\end{remark}

\begin{remark}\label{rem:boundary}
Although it is not visible from the theorems -- as we only consider pointwise $\tau$ (or $\tau_0$ in Theorem \ref{th:uniform}) -- $\tau$ needs to be bounded away from the boundaries (0 and 1) to make the results hold. As the estimator exhibits edge effects, the quality of $\tilde{m}(\tau)$ can only be guaranteed for $\tau$ away from 0 or 1. Formally, we need to take a small $\delta > 0$ and then consider $\tau \in [\delta, 1- \delta]$. Consequently, in the bootstrap we then obtain the limit distribution in a slightly smaller set $\tau \in [\delta^*, 1- \delta^*]$, for some $\delta^* > \delta$. However, as we can take $\delta$ and $\delta^*$ as small as we like, this does not affect the pointwise statements of the theorems.\footnote{See Lemmas \ref{lem:m_decomp} and \ref{lem:mb_decomp} in Appendix A for statements where these constants do appear.}

Because of the above reasons, \citet[p.~53]{Buhlmann} suggests in his bootstrap algorithm to obtain residuals $\hat{z}_t = D_t[y_t-\tilde{m}(t/n)]$ for $t=[n \delta] + 1,\ldots, [n (1 - \delta)]$, as the residuals too close to the boundary may contaminate the bootstrap sample when they are sampled. This is not a problem for our method, as the AWB doesn't involve resampling: boundary residuals remain at the boundary in the bootstrap sample, and can therefore not affect results away from the boundaries.
\end{remark}

\section{Simulation Study} \label{sec:simulation}
For the simulation exercise, we simulate time series with a trending behavior in both mean and variance, inspired by patterns observed in climatological time series, and allow for similar patterns of missing data. We will first describe the setting and then present and discuss the results.

\subsection{Simulation Setup}
We consider the following smooth transition model:
\begin{equation}
y_t = m(t/n) + \sigma_t u_t, \qquad m(\tau) = \beta_1 \tau + \beta_2 \tau G(\tau,\lambda,c),
\label{eq:model_simu}
\end{equation}
where for $\lambda>0$,
\begin{equation}
G(\tau,\lambda,c)=\left(1+\exp\left\{-\lambda(\tau-c)\right\}\right)^{-1}.
\label{eq:trans_function}
\end{equation}
The error term $\left\{u_t\right\}$ follows an ARMA$(1,1)$ model
\begin{equation}
u_t=\phi u_{t-1}+\psi \epsilon_{t-1}+\epsilon_t\hspace{20mm}\epsilon_t\sim \mathcal{N}\left(0,\frac{(1-\phi^2)/4}{1+\psi^2-2\phi\psi}\right),
\end{equation}
where we vary the parameters $\phi$ and $\psi$ to investigate the impact of serial correlation on our method. The variance of $\epsilon_t$ is normalized such that the signal to noise ratio does not depend on the specific choice of the AR and MA parameter. Furthermore, we introduce heteroskedasticity with the process $\left\{\sigma_t\right\}$. We consider two scenarios, where  $\sigma_t$ is constant over time or $\sigma_t = \sigma(t/n)$, with the volatility process $\sigma(\tau)$ given by
\begin{equation}\label{eq:variance}
\sigma (\tau) = \sigma_{0} + (\sigma_{\ast}-\sigma_{0})(\tau)+a\cos\left(2\pi k \tau \right).
\end{equation}
Equation \eqref{eq:model_simu} is a shifting mean model as considered by \citet{GT}, and can be seen as a smooth transition version of a broken trend model with one break. The function $G(\tau,\lambda,c)$ as given in \eqref{eq:trans_function} is the transition function with time as transition variable. Its inputs apart from time are the location of the shift -- the parameter $c$ -- as well as the smoothness of the shift, determined by $\lambda$. For large values of $\lambda$ the shift happens almost instantaneous, while it is smoother for smaller values of this parameter. In our simulations, we fix $\lambda=10$. The other parameters of our DGP will be chosen in such a way that the time series experiences a downward trend during the first three quarters which turns into a steeper upward trend in the last quarter.

\begin{figure}[tbh]
\centering
\subfigure[The trend function $m(\tau)$]
{\includegraphics[width=0.47\linewidth]{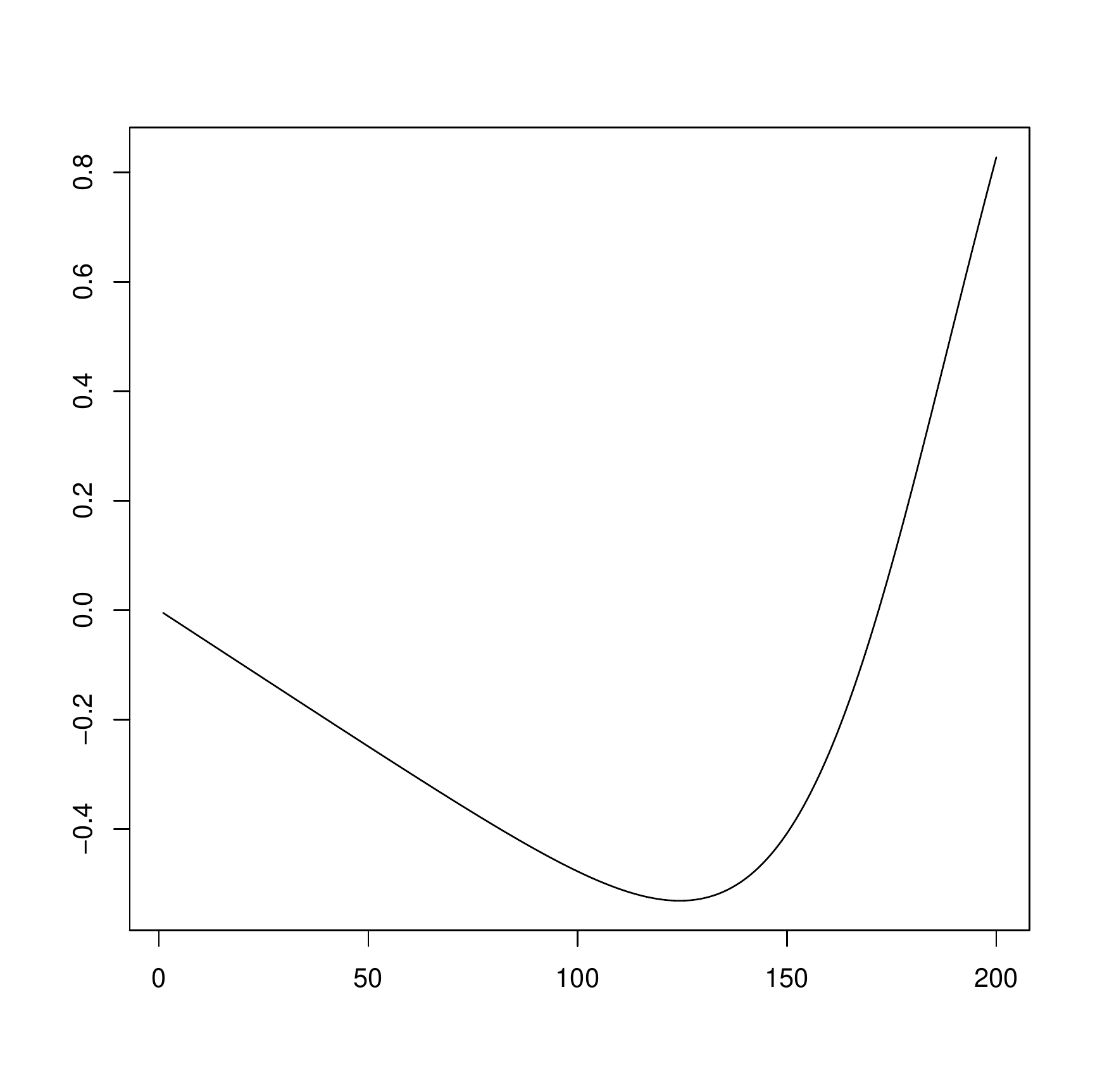}
\label{fig:trend}}
\subfigure[The variance process $\sigma(\tau)$ with $k=4$ and $a=0.5$]
{\includegraphics[width=0.47\linewidth]{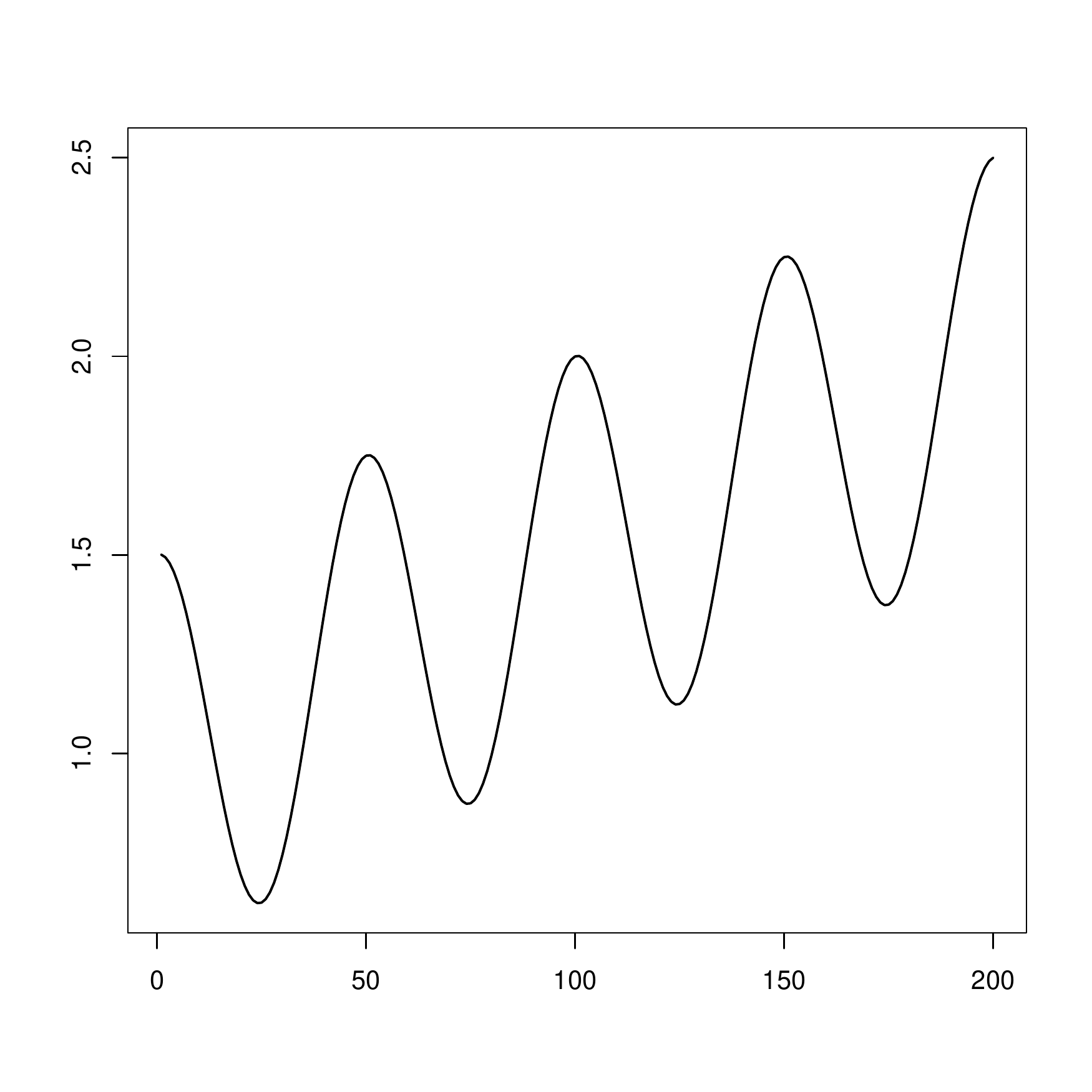}
\label{fig:sigma}}	
\caption{Trends in mean and variance in the simulation DGP}
\label{fig:DGP}
\end{figure}

This mimics the general pattern which is expected to occur in atmospheric ethane time series and therefore fits our application well. More specifically, this means we set the location of the shift to occur at $c=0.9$. The slope of the trend gradually changes from $\beta_1=-1$ before the shift to $\beta_2=2.5$ after the shift. This is illustrated in Figure \ref{fig:trend}. For the variance process, inspired by the series considered in the empirical application, we consider a cyclical component with trend. We have to choose four parameters in \eqref{eq:variance}: the start and end point of the trend -- $\sigma_0$ and $\sigma_{\ast}$ -- as well as the specifics of the cyclical component. The parameter $a$ fixes the amplitude of the cycle, while $k$ determines how many cycles there are. We set $\sigma_0=1$, $\sigma_{\ast}=2$ and consider different combinations of values for $a$ and $k$. We let $a=0.3,0.5,0.7$ and $k=2,3,4$. An example of this process is displayed in Figure \ref{fig:sigma}.

We also consider different degrees of dependence by varying the AR and MA parameters. For the AR parameter we take $\phi=0,0.2,0.5,-0.5$, while the MA parameter varies between $\psi=0$, $\psi=0.2$ and $\psi=0.5$. We only look at pure AR or MA processes with these coefficient values. The different specifications will be abbreviated in the tables with self-explanatory names, e.g.~we write $AR_{-0.5}$ for $\phi=-0.5$, $\psi=0$ and use $MA_{0.5}$ when $\phi=0$ and $\psi=0.5$.

In addition, we consider cases of missing data for which we generate a missing pattern that is representative for the ethane data. We implement a first-order Markov Chain for $D_t$ with transition probabilities
\begin{equation} \label{eq:missings_data}
\bordermatrix{
& D_{t} = 0 & D_t = 1  \cr
D_{t-1} = 0 & 0.80 & 0.20 \cr
D_{t-1} = 1 & 0.45 & 0.55 \cr
},
\end{equation}
which are estimated from the ethane time series considered in Section \ref{sec:application}. This transition matrix results in an average fraction of around 70\% missing observations.

In the estimation step, we apply the local constant estimator based on the Epanechnikov kernel which is given by the function $K(x)=\frac{3}{4}(1-x^2)\mathbbm{1}_{\left\{|x|\leq 1\right\}}$. For the bandwidth parameter $h$ we use $h=0.02$, $h=0.04$ and $h=0.06$. In the first step of the bootstrap procedure we follow the recommendation of \citet{Buhlmann} to use $\tilde{h}=Ch^{5/9}$ with $C=2$. In the second step of the bootstrap, we consider different values for the AR parameter $\gamma$; next to $\gamma=0$, which reduces the AWB to a standard wild bootstrap (WB), we also consider $\gamma=0.2,0.4,0.6$.

For each specification, we run 5000 Monte Carlo simulations. We report average pointwise as well as simultaneous coverage for a sample size of $n=200$, based on $B=999$ bootstrap replications. For ease of comparison, we choose the sample size in cases with missing data such that we have approximately 200 data points remaining. Given the large fraction of missings, an expected effective sample size of 200 translates into a original sample size for our Markov Chain of $n=666$.

The nominal coverage in all cases is 95\%. We also report the average median length of the confidence intervals in parenthesis underneath the respective coverage. For simultaneous coverage, the trend curve has to lie within the confidence bands for all points of the considered set $G$, for which we take the two sets $G_{sub}$ and $G$ considered by \citet{Buhlmann}, where $G_{sub}=U_1(h)\cup U_4(h)$ and $G=\bigcup_{i=1}^4U_i(h)$, with $U_i(h)=\left\{(i/5)-h+j/100;\; j=0,...,[200h]\right\}$.

To compare the performance of our AWB method to related bootstrap methods, we also implement the dependent wild bootstrap (DWB) and the sieve wild bootstrap (SWB), with a standard normal distribution for generation of the wild bootstrap errors. Since the SWB cannot easily be adapted to work with missing data, we provide results for this method only for the cases with no missing data. For the DWB, we convert the $\gamma$ parameter into the corresponding value for the tuning parameter $\ell$, using the formula $\gamma=\theta^{1/\ell}$. \citet{SU} found in their simulation study that $\theta=0.01$ provides a sensible conversion between the AWB and DWB, in the sense of yielding comparable performance of the two methods, therefore we use $\theta=0.01$ as well to convert the AWB parameter into the DWB parameter. This tuning parameter does not exist in the case of the SWB; instead, the lag length has to be selected, for which we use AIC.

\subsection{Simulation Results}

First, we report results for equally spaced data with no missing observations and a variance process with the same specifications ($k=4$ and $a=0.5$) as displayed in Figure \ref{fig:sigma}. The results on pointwise coverage are given in Table \ref{pointwise}, while Tables \ref{simultaneous1} and \ref{simultaneous2} show simultaneous coverage probabilities for the two sets $G.sub$ and $G$, respectively. The tables consist of three main blocks, one for each bandwidth. Within each block, the individual rows contain results for different combinations of AR and MA parameters. Results for other choices of the variance parameters $k$ and $a$, including the homoskedastic case, are available in Supplementary Appendix C. Qualitatively, these settings yield the same conclusions as the ones considered here.

Table \ref{pointwise} shows that the autoregressive wild bootstrap provides confidence intervals with good pointwise coverage in the presence of heteroskedasticity and mild autocorrelation. For the independent case and the cases with negative or small positive correlation, the coverage probabilities are close to the nominal level. The only specifications for which the coverage lies below the nominal level are when $\phi=0.5$ and $\psi=0.5$. In these cases, the data deviate from the trend line in clusters due to the strong positive correlation. This causes the nonparametric estimate to go through these clusters and thus, to deviate significantly from the true trend. The confidence bands are in these situations not wide enough to cover the true trend, resulting in too low coverage. Interestingly, all methods/tuning parameters are similarly affected.

Concerning the autoregressive parameter of the wild bootstrap, we can observe that whenever the data are serially correlated, the autoregressive wild bootstrap ($\gamma\neq 0$) provides better coverage than the standard wild bootstrap ($\gamma=0$). In addition, with stronger correlation, a larger value for $\gamma$ should be preferred, except that the case $\gamma=0.6$ provides consistently lower coverage, indicating that simply going for a very large value of $\gamma$ is not sensible in practice. However, even if we do see these patterns, in general, the coverage probabilities do not vary substantially with the autoregressive parameter and therefore, the sensitivity to this parameter appears to be fairly limited.

When we look at the different blocks of Table \ref{pointwise}, the bootstrap shows a similar overall performance regardless of the value we select for the bandwidth parameter. Since the bandwidth plays such an important role in nonparametric estimation, yet there are no fully satisfactory ways to select optimal bandwidth from the data in most applications, robustness to bandwidth ``misspecifcation'' is an important finding. This implies that the bootstrap can correct for poorly chosen bandwidths.

We observe similar patterns in Tables \ref{simultaneous1} and \ref{simultaneous2}, while overall coverage is lower for the set $G$ than for $G_{sub}$. This is not surprising, since the former set covers twice as many points as the latter. Interestingly, the confidence bands are consistently more narrow with $G$ than they are with $G_{sub}$. This appears counterintuitive at first, as $G$ is twice as large as $G_{sub}$. However, while $G_{sub}$ is made up entirely of points relatively close to the boundaries, for which estimation is more variable, $G$ additionally contains ``stable'' regions closer to the center. It may be that the stability of these regions has an offsetting effect compared with the boundary regions, reducing the size of the intervals. As a side effect, overall coverage is also reduced. As such, if one is only interested in coverage near the beginning and the end of the sample, it may be wiser to only attempt to achieve uninformity over these regions, rather than over the full sample.

Similar to the pointwise coverage results, the simultaneous coverage is close to the nominal level for the two cases with negative correlation as well as the independent case. Weak positive correlation can also be handled decently. The cases $\phi=0.5$ and $\psi=0.5$ are more problematic, as coverage drops to around 60\% for $G$ and 70\% for $G_{sub}$. In these cases, the smallest bandwidth $h=0.02$ seems to be preferred.

Comparing the AWB to the other two bootstrap methods, we can see that in almost all cases, AWB and DWB show similar results and they outperform the sieve version. Often, the AWB results in slightly higher coverage with shorter intervals. Only when $\gamma=0.6$, the DWB displays better coverage. Further increases of this parameter did not lead to improvements. An exception is the DGP with negative correlation, where the DWB displays coverage that is too high, independent of the choice of $\gamma$. In such cases, the AWB is often more accurate for $\gamma=0.4$. In all other cases, we see that for both methods, the best performance is similar in magnitude but obtained at a different value of the tuning parameter. Since the tuning parameters do not have exactly the same meaning in both methods, there is no reason to expect identical variation. We chose $\theta=0.01$ to link the two methods; changing this value will likely change the relation between the methods as well.

\begin{table}
\centering
\begin{tabular}{@{}lllllllllll@{}}
\toprule
 &  & \multicolumn{1}{c}{$\gamma=0$} & \multicolumn{2}{c}{$\gamma=0.2$} & \multicolumn{2}{c}{$\gamma=0.4$} & \multicolumn{2}{c}{$\gamma=0.6$} & \multicolumn{1}{c}{$-$} \\ \cmidrule(l){3-3} \cmidrule(l){4-5} \cmidrule(l){6-7} \cmidrule(l){8-9} \cmidrule(l){10-10}
$h$ & \multicolumn{1}{l}{DGP}  & WB & AWB & DWB & AWB  & DWB & AWB & DWB & SWB\\
\midrule
\multirow{ 12}{*}{$0.02$} & $0$ & 0.946 & 0.952 & 0.946 & 0.942 & 0.945 & 0.905 & 0.945 & 0.925 \\
&& (0.732) & (0.715) & (0.733) & (0.658) & (0.733) & (0.555) & (0.733) & (0.719) \\
 & $AR_{0.2}$ & 0.902 & 0.918 & 0.902 & 0.912 & 0.902 & 0.871 & 0.902 & 0.861 \\
&& (0.710) & (0.711) & (0.711) & (0.672) & (0.711) & (0.582) & (0.711) & (0.688) \\
 & $AR_{0.5}$ & 0.778 & 0.818 & 0.776 & 0.828 & 0.778 & 0.791 & 0.777 & 0.672 \\
&& (0.604) & (0.630) & (0.605) & (0.621) & (0.604) & (0.562) & (0.603) & (0.536) \\
 & $AR_{-0.5}$ & 0.989 & 0.988 & 0.989 & 0.982 & 0.989 & 0.958 & 0.989 & 0.960 \\
&& (0.623) & (0.575) & (0.624) & (0.499) & (0.624) & (0.397) & (0.624) & (0.552) \\
 & $MA_{0.2}$ & 0.910 & 0.925 & 0.911 & 0.918 & 0.909 & 0.879 & 0.910 & 0.872 \\
&& (0.712) & (0.712) & (0.712) & (0.671) & (0.712) & (0.580) & (0.712) & (0.690) \\
 & $MA_{0.5}$ & 0.864 & 0.891 & 0.864 & 0.891 & 0.864 & 0.854 & 0.864 & 0.773 \\
&& (0.638) & (0.657) & (0.639) & (0.636) & (0.638) & (0.564) & (0.639) & (0.571) \\
\midrule
\multirow{ 12}{*}{$0.04$} & $0$ & 0.950 & 0.952 & 0.951 & 0.940 & 0.951 & 0.905 & 0.951 & 0.922 \\
&& (0.567) & (0.561) & (0.568) & (0.527) & (0.568) & (0.460) & (0.568) & (0.548) \\
 & $AR_{0.2}$ & 0.903 & 0.911 & 0.901 & 0.905 & 0.901 & 0.872 & 0.902 & 0.862 \\
&& (0.553) & (0.563) & (0.553) & (0.544) & (0.553) & (0.487) & (0.554) & (0.525) \\
 & $AR_{0.5}$ & 0.768 & 0.803 & 0.766 & 0.818 & 0.767 & 0.795 & 0.765 & 0.667 \\
&& (0.478) & (0.511) & (0.479) & (0.519) & (0.478) & (0.489) & (0.477) & (0.409) \\
 & $AR_{-0.5}$ & 0.996 & 0.994 & 0.996 & 0.989 & 0.996 & 0.969 & 0.996 & 0.970 \\
&& (0.492) & (0.457) & (0.492) & (0.405) & (0.492) & (0.338) & (0.493) & (0.423) \\
 & $MA_{0.2}$ & 0.912 & 0.922 & 0.913 & 0.913 & 0.912 & 0.880 & 0.912 & 0.873 \\
&& (0.553) & (0.563) & (0.554) & (0.542) & (0.554) & (0.485) & (0.554) & (0.526) \\
 & $MA_{0.5}$ & 0.867 & 0.890 & 0.869 & 0.890 & 0.868 & 0.861 & 0.867 & 0.784 \\
&& (0.501) & (0.528) & (0.503) & (0.522) & (0.502) & (0.479) & (0.503) & (0.436) \\
\midrule
\multirow{ 12}{*}{$0.06$} & $0$ & 0.955 & 0.957 & 0.956 & 0.946 & 0.956 & 0.917 & 0.956 & 0.925 \\
&& (0.492) & (0.492) & (0.493) & (0.471) & (0.493) & (0.422) & (0.493) & (0.465) \\
 & $AR_{0.2}$ & 0.910 & 0.917 & 0.907 & 0.913 & 0.907 & 0.887 & 0.909 & 0.865 \\
&& (0.480) & (0.494) & (0.481) & (0.486) & (0.480) & (0.447) & (0.481) & (0.444) \\
 & $AR_{0.5}$ & 0.781 & 0.814 & 0.777 & 0.831 & 0.779 & 0.815 & 0.776 & 0.669 \\
&& (0.420) & (0.455) & (0.421) & (0.468) & (0.420) & (0.453) & (0.420) & (0.347) \\
 & $AR_{-0.5}$ & 0.997 & 0.996 & 0.997 & 0.992 & 0.997 & 0.977 & 0.997 & 0.979 \\
&& (0.432) & (0.408) & (0.432) & (0.373) & (0.432) & (0.325) & (0.433) & (0.358) \\
 & $MA_{0.2}$ & 0.918 & 0.928 & 0.920 & 0.921 & 0.918 & 0.894 & 0.918 & 0.875 \\
&& (0.480) & (0.495) & (0.481) & (0.485) & (0.481) & (0.444) & (0.481) & (0.446) \\
 & $MA_{0.5}$ & 0.878 & 0.900 & 0.879 & 0.901 & 0.878 & 0.878 & 0.878 & 0.789 \\
&& (0.439) & (0.467) & (0.440) & (0.470) & (0.440) & (0.441) & (0.440) & (0.370) \\
\bottomrule
\end{tabular}
\caption{Pointwise coverage probabilities (average median interval length) for $k=4$ and $a=0.5$.}
\label{pointwise}
\end{table}

\begin{table}
\centering
\begin{tabular}{@{}lllllllllll@{}}
\toprule
 &  & \multicolumn{1}{c}{$\gamma=0$} & \multicolumn{2}{c}{$\gamma=0.2$} & \multicolumn{2}{c}{$\gamma=0.4$} & \multicolumn{2}{c}{$\gamma=0.6$} & \multicolumn{1}{c}{$-$} \\ \cmidrule(l){3-3} \cmidrule(l){4-5} \cmidrule(l){6-7} \cmidrule(l){8-9} \cmidrule(l){10-10}
$h$ & \multicolumn{1}{l}{DGP}  & WB & AWB & DWB & AWB  & DWB & AWB & DWB & SWB\\
\midrule
\multirow{ 12}{*}{$0.02$} & $0$ & 0.941 & 0.944 & 0.936 & 0.934 & 0.935 & 0.870 & 0.937 & 0.917 \\
&& (0.672) & (0.656) & (0.673) & (0.604) & (0.672) & (0.510) & (0.673) & (0.661) \\
 & $AR_{0.2}$ & 0.884 & 0.908 & 0.883 & 0.888 & 0.879 & 0.822 & 0.884 & 0.805 \\
&& (0.652) & (0.653) & (0.653) & (0.617) & (0.653) & (0.534) & (0.653) & (0.632) \\
 & $AR_{0.5}$ & 0.700 & 0.759 & 0.689 & 0.769 & 0.700 & 0.687 & 0.700 & 0.427 \\
&& (0.555) & (0.578) & (0.556) & (0.570) & (0.555) & (0.515) & (0.553) & (0.493) \\
 & $AR_{-0.5}$ & 0.982 & 0.981 & 0.983 & 0.974 & 0.987 & 0.920 & 0.984 & 0.985 \\
&& (0.572) & (0.528) & (0.573) & (0.459) & (0.573) & (0.365) & (0.573) & (0.507) \\
 & $MA_{0.2}$ & 0.899 & 0.913 & 0.891 & 0.900 & 0.888 & 0.835 & 0.895 & 0.820 \\
&& (0.654) & (0.653) & (0.654) & (0.616) & (0.654) & (0.533) & (0.654) & (0.634) \\
 & $MA_{0.5}$ & 0.830 & 0.860 & 0.821 & 0.853 & 0.810 & 0.792 & 0.821 & 0.619 \\
&& (0.586) & (0.604) & (0.587) & (0.584) & (0.586) & (0.518) & (0.587) & (0.524) \\
\midrule
\multirow{ 12}{*}{$0.04$} & $0$ & 0.930 & 0.931 & 0.929 & 0.907 & 0.932 & 0.818 & 0.933 & 0.902 \\
&& (0.521) & (0.516) & (0.522) & (0.484) & (0.522) & (0.423) & (0.522) & (0.504) \\
 & $AR_{0.2}$ & 0.849 & 0.858 & 0.836 & 0.834 & 0.835 & 0.759 & 0.848 & 0.774 \\
&& (0.508) & (0.517) & (0.508) & (0.500) & (0.508) & (0.448) & (0.509) & (0.483) \\
 & $AR_{0.5}$ & 0.592 & 0.648 & 0.564 & 0.685 & 0.577 & 0.604 & 0.573 & 0.376 \\
&& (0.439) & (0.470) & (0.440) & (0.476) & (0.440) & (0.449) & (0.439) & (0.376) \\
 & $AR_{-0.5}$ & 0.997 & 0.991 & 0.995 & 0.980 & 0.994 & 0.917 & 0.995 & 0.993 \\
&& (0.452) & (0.420) & (0.453) & (0.373) & (0.452) & (0.310) & (0.453) & (0.390) \\
 & $MA_{0.2}$ & 0.859 & 0.877 & 0.857 & 0.860 & 0.864 & 0.774 & 0.863 & 0.790 \\
&& (0.508) & (0.517) & (0.509) & (0.498) & (0.509) & (0.446) & (0.510) & (0.483) \\
 & $MA_{0.5}$ & 0.767 & 0.814 & 0.770 & 0.807 & 0.771 & 0.733 & 0.766 & 0.594 \\
&& (0.461) & (0.485) & (0.462) & (0.480) & (0.461) & (0.440) & (0.462) & (0.401) \\
\midrule
\multirow{ 12}{*}{$0.06$} & $0$ & 0.930 & 0.929 & 0.931 & 0.907 & 0.939 & 0.819 & 0.936 & 0.906 \\
&& (0.452) & (0.453) & (0.453) & (0.433) & (0.453) & (0.388) & (0.453) & (0.428) \\
 & $AR_{0.2}$ & 0.852 & 0.845 & 0.832 & 0.835 & 0.840 & 0.757 & 0.851 & 0.783 \\
&& (0.441) & (0.455) & (0.442) & (0.447) & (0.442) & (0.411) & (0.442) & (0.409) \\
 & $AR_{0.5}$ & 0.564 & 0.624 & 0.554 & 0.663 & 0.565 & 0.594 & 0.537 & 0.366 \\
&& (0.386) & (0.418) & (0.387) & (0.431) & (0.387) & (0.416) & (0.386) & (0.319) \\
 & $AR_{-0.5}$ & 0.998 & 0.995 & 0.998 & 0.987 & 0.998 & 0.937 & 0.998 & 0.995 \\
&& (0.397) & (0.376) & (0.398) & (0.343) & (0.397) & (0.299) & (0.398) & (0.330) \\
 & $MA_{0.2}$ & 0.862 & 0.870 & 0.857 & 0.856 & 0.865 & 0.771 & 0.862 & 0.796 \\
&& (0.442) & (0.455) & (0.443) & (0.446) & (0.443) & (0.409) & (0.443) & (0.410) \\
 & $MA_{0.5}$ & 0.768 & 0.801 & 0.761 & 0.811 & 0.773 & 0.732 & 0.762 & 0.611 \\
&& (0.404) & (0.430) & (0.405) & (0.432) & (0.404) & (0.405) & (0.405) & (0.341) \\
\bottomrule
\end{tabular}
\caption{Simultaneous coverage probabilities (average median interval length) over $G_{sub}$ for $k=4$ and $a=0.5$.}
\label{simultaneous1}
\end{table}

\begin{table}
\centering
\begin{tabular}{@{}lllllllllll@{}}
\toprule
 &  & \multicolumn{1}{c}{$\gamma=0$} & \multicolumn{2}{c}{$\gamma=0.2$} & \multicolumn{2}{c}{$\gamma=0.4$} & \multicolumn{2}{c}{$\gamma=0.6$} & \multicolumn{1}{c}{$-$} \\ \cmidrule(l){3-3} \cmidrule(l){4-5} \cmidrule(l){6-7} \cmidrule(l){8-9} \cmidrule(l){10-10}
$h$ & \multicolumn{1}{l}{DGP}  & WB & AWB & DWB & AWB  & DWB & AWB & DWB & SWB\\
\midrule
\multirow{ 12}{*}{$0.02$} & $0$ & 0.925 & 0.939 & 0.925 & 0.919 & 0.926 & 0.845 & 0.925 & 0.894 \\
&& (0.527) & (0.515) & (0.528) & (0.474) & (0.528) & (0.400) & (0.528) & (0.518) \\
 & $AR_{0.2}$ & 0.855 & 0.887 & 0.855 & 0.863 & 0.853 & 0.768 & 0.866 & 0.738 \\
&& (0.512) & (0.513) & (0.512) & (0.485) & (0.512) & (0.420) & (0.513) & (0.496) \\
 & $AR_{0.5}$ & 0.608 & 0.690 & 0.604 & 0.691 & 0.602 & 0.590 & 0.606 & 0.266 \\
&& (0.435) & (0.454) & (0.435) & (0.449) & (0.435) & (0.407) & (0.434) & (0.386) \\
 & $AR_{-0.5}$ & 0.982 & 0.979 & 0.979 & 0.972 & 0.986 & 0.915 & 0.984 & 0.984 \\
&& (0.448) & (0.413) & (0.449) & (0.359) & (0.449) & (0.285) & (0.449) & (0.398) \\
 & $MA_{0.2}$ & 0.874 & 0.896 & 0.869 & 0.876 & 0.865 & 0.792 & 0.870 & 0.770 \\
&& (0.513) & (0.513) & (0.513) & (0.484) & (0.513) & (0.419) & (0.513) & (0.497) \\
 & $MA_{0.5}$ & 0.780 & 0.833 & 0.779 & 0.812 & 0.764 & 0.728 & 0.776 & 0.498 \\
&& (0.459) & (0.474) & (0.460) & (0.459) & (0.460) & (0.408) & (0.460) & (0.411) \\
\midrule
\multirow{ 12}{*}{$0.04$} & $0$ & 0.914 & 0.915 & 0.915 & 0.882 & 0.917 & 0.762 & 0.917 & 0.875 \\
&& (0.401) & (0.397) & (0.402) & (0.373) & (0.402) & (0.326) & (0.402) & (0.388) \\
 & $AR_{0.2}$ & 0.812 & 0.822 & 0.793 & 0.798 & 0.799 & 0.688 & 0.806 & 0.697 \\
&& (0.391) & (0.398) & (0.391) & (0.385) & (0.391) & (0.345) & (0.392) & (0.372) \\
 & $AR_{0.5}$ & 0.461 & 0.544 & 0.439 & 0.571 & 0.435 & 0.480 & 0.441 & 0.206 \\
&& (0.338) & (0.362) & (0.339) & (0.368) & (0.338) & (0.347) & (0.338) & (0.289) \\
 & $AR_{-0.5}$ & 0.996 & 0.991 & 0.995 & 0.981 & 0.995 & 0.907 & 0.995 & 0.993 \\
&& (0.348) & (0.323) & (0.348) & (0.287) & (0.348) & (0.239) & (0.349) & (0.300) \\
 & $MA_{0.2}$ & 0.824 & 0.848 & 0.824 & 0.818 & 0.824 & 0.705 & 0.832 & 0.718 \\
&& (0.391) & (0.398) & (0.392) & (0.384) & (0.392) & (0.344) & (0.392) & (0.372) \\
 & $MA_{0.5}$ & 0.694 & 0.755 & 0.700 & 0.743 & 0.692 & 0.646 & 0.697 & 0.464 \\
&& (0.355) & (0.374) & (0.356) & (0.370) & (0.355) & (0.339) & (0.356) & (0.309) \\
\midrule
\multirow{ 12}{*}{$0.06$} & $0$ & 0.913 & 0.911 & 0.912 & 0.881 & 0.922 & 0.761 & 0.922 & 0.884 \\
&& (0.345) & (0.346) & (0.346) & (0.331) & (0.346) & (0.297) & (0.346) & (0.327) \\
 & $AR_{0.2}$ & 0.794 & 0.804 & 0.781 & 0.776 & 0.784 & 0.663 & 0.790 & 0.705 \\
&& (0.337) & (0.348) & (0.338) & (0.342) & (0.338) & (0.315) & (0.338) & (0.313) \\
 & $AR_{0.5}$ & 0.415 & 0.497 & 0.395 & 0.529 & 0.396 & 0.450 & 0.390 & 0.196 \\
&& (0.295) & (0.320) & (0.296) & (0.330) & (0.295) & (0.319) & (0.295) & (0.244) \\
 & $AR_{-0.5}$ & 0.999 & 0.995 & 0.997 & 0.988 & 0.998 & 0.932 & 0.998 & 0.996 \\
&& (0.304) & (0.287) & (0.304) & (0.262) & (0.304) & (0.228) & (0.304) & (0.252) \\
 & $MA_{0.2}$ & 0.815 & 0.835 & 0.817 & 0.807 & 0.827 & 0.682 & 0.826 & 0.740 \\
&& (0.338) & (0.348) & (0.339) & (0.341) & (0.338) & (0.313) & (0.338) & (0.313) \\
 & $MA_{0.5}$ & 0.680 & 0.740 & 0.686 & 0.729 & 0.683 & 0.632 & 0.683 & 0.480 \\
&& (0.308) & (0.329) & (0.310) & (0.331) & (0.309) & (0.310) & (0.309) & (0.260) \\
\bottomrule
\end{tabular}
\caption{Simultaneous coverage probabilities (average median interval length) over $G$ for $k=4$ and $a=0.5$.}
\label{simultaneous2}
\end{table}

Next, we consider the setting with missing data. Given the previous findings, we restrict ourselves to one bandwidth ($h=0.06$) but consider all AR and MA models. The results for pointwise as well as simultaneous coverage probabilities are given in Table \ref{missings}; further results, with similar conclusions, are available in Supplementary Appendix C. 

The first block presents pointwise coverage, while the second and third blocks show results for the sets $G_{sub}$ and $G$. The AWB performs well even if a significant proportion of the data are missing, as both pointwise and simultaneous coverage is close to the nominal level for almost all models. There is a significant increase in pointwise coverage for the cases with strong positive correlation, which is now close to 90\%. The same increase is visible for both $G_{sub}$ and $G$. This phenomenon does not come as a surprise, as the missing data points create space between consecutive observations, thus effectively reducing the serial dependence between observed points. Comparing the AWB with the DWB, we can see that the coverage is slightly closer to 95\% for the AWB in many cases. As before, the best performance is obtained at different values of $\gamma$ for the AWB and the DWB. The general pattern, however, is as in the previous setting. The DWB outperforms the AWB for higher values of $\gamma$, while the AWB obtains the most accurate coverage for smaller values of this parameter. A notable exception are again the cases with negative autocorrelation. Results for other bandwidths are similar, and presented in Supplementary Appendix C.

Overall, this simulation study indicates that the autoregressive wild bootstrap performs well in most of our considered scenarios. The pointwise confidence intervals show coverage close to the nominal level in the presence of heteroskedasticity and serial correlation. In addition, the method still performs well in the presence of missing data. The AWB provides simultaneous confidence bands with good coverage as long as the correlation does not become too strong. It outperforms the sieve wild bootstrap whenever we could compare results. In comparison to the dependent wild bootstrap, we saw that both methods provide very similar coverage probabilities, while the DWB produces slightly wider intervals. 

\begin{table}
\centering
\begin{tabular}{@{}llllllllll@{}}
\toprule
 &  & \multicolumn{1}{c}{$\gamma=0$} & \multicolumn{2}{c}{$\gamma=0.2$} & \multicolumn{2}{c}{$\gamma=0.4$} & \multicolumn{2}{c}{$\gamma=0.6$} \\ \cmidrule(l){3-3} \cmidrule(l){4-5} \cmidrule(l){6-7} \cmidrule(l){8-9}
 & \multicolumn{1}{l}{DGP}  & WB & AWB & DWB & AWB  & DWB & AWB & DWB\\
\midrule
\multirow{ 12}{*}{pw} & $0$ & 0.960 & 0.959 & 0.960 & 0.949 & 0.960 & 0.923 & 0.961 \\
&& (0.303) & (0.303) & (0.303) & (0.290) & (0.303) & (0.264) & (0.304) \\
 & $AR_{0.2}$ & 0.939 & 0.940 & 0.937 & 0.933 & 0.939 & 0.906 & 0.938 \\
&& (0.297) & (0.302) & (0.297) & (0.295) & (0.298) & (0.272) & (0.298) \\
 & $AR_{0.5}$ & 0.885 & 0.897 & 0.884 & 0.894 & 0.883 & 0.874 & 0.883 \\
&& (0.267) & (0.280) & (0.267) & (0.280) & (0.267) & (0.266) & (0.268) \\
 & $AR_{-0.5}$ & 0.989 & 0.986 & 0.989 & 0.978 & 0.989 & 0.957 & 0.989 \\
&& (0.271) & (0.263) & (0.272) & (0.246) & (0.272) & (0.221) & (0.272) \\
 & $MA_{0.2}$ & 0.942 & 0.945 & 0.942 & 0.936 & 0.943 & 0.911 & 0.943 \\
&& (0.298) & (0.302) & (0.298) & (0.294) & (0.297) & (0.271) & (0.298) \\
 & $MA_{0.5}$ & 0.924 & 0.931 & 0.925 & 0.926 & 0.924 & 0.903 & 0.924 \\
&& (0.276) & (0.285) & (0.275) & (0.282) & (0.275) & (0.263) & (0.276) \\
\midrule
\multirow{ 12}{*}{$G$} & $0$ & 0.937 & 0.936 & 0.941 & 0.906 & 0.937 & 0.844 & 0.943 \\
&& (0.237) & (0.237) & (0.237) & (0.227) & (0.237) & (0.207) & (0.238) \\
 & $AR_{0.2}$ & 0.898 & 0.899 & 0.903 & 0.863 & 0.885 & 0.797 & 0.888 \\
&& (0.233) & (0.236) & (0.233) & (0.231) & (0.233) & (0.213) & (0.233) \\
 & $AR_{0.5}$ & 0.769 & 0.797 & 0.773 & 0.782 & 0.767 & 0.716 & 0.766 \\
&& (0.209) & (0.219) & (0.209) & (0.219) & (0.209) & (0.208) & (0.209) \\
 & $AR_{-0.5}$ & 0.988 & 0.981 & 0.985 & 0.968 & 0.987 & 0.918 & 0.985 \\
&& (0.212) & (0.206) & (0.213) & (0.192) & (0.213) & (0.173) & (0.213) \\
 & $MA_{0.2}$ & 0.900 & 0.903 & 0.901 & 0.875 & 0.899 & 0.813 & 0.911 \\
&& (0.233) & (0.236) & (0.233) & (0.230) & (0.233) & (0.212) & (0.233) \\
 & $MA_{0.5}$ & 0.856 & 0.873 & 0.862 & 0.852 & 0.861 & 0.793 & 0.868 \\
&& (0.216) & (0.223) & (0.216) & (0.221) & (0.215) & (0.206) & (0.216) \\
\midrule
\multirow{ 12}{*}{$G_{sub}$} & $0$ & 0.945 & 0.949 & 0.953 & 0.923 & 0.942 & 0.890 & 0.951 \\
&& (0.273) & (0.273) & (0.273) & (0.262) & (0.273) & (0.238) & (0.274) \\
 & $AR_{0.2}$ & 0.923 & 0.919 & 0.917 & 0.899 & 0.911 & 0.858 & 0.913 \\
&& (0.268) & (0.273) & (0.268) & (0.266) & (0.268) & (0.245) & (0.269) \\
 & $AR_{0.5}$ & 0.828 & 0.855 & 0.843 & 0.844 & 0.835 & 0.809 & 0.834 \\
&& (0.241) & (0.252) & (0.241) & (0.253) & (0.241) & (0.240) & (0.241) \\
 & $AR_{-0.5}$ & 0.987 & 0.982 & 0.985 & 0.972 & 0.985 & 0.946 & 0.987 \\
&& (0.245) & (0.237) & (0.245) & (0.222) & (0.245) & (0.199) & (0.245) \\
 & $MA_{0.2}$ & 0.918 & 0.928 & 0.927 & 0.903 & 0.918 & 0.868 & 0.926 \\
&& (0.268) & (0.272) & (0.268) & (0.265) & (0.268) & (0.244) & (0.268) \\
 & $MA_{0.5}$ & 0.891 & 0.910 & 0.904 & 0.890 & 0.893 & 0.857 & 0.902 \\
&& (0.248) & (0.257) & (0.248) & (0.254) & (0.248) & (0.237) & (0.249) \\
\bottomrule
\end{tabular}
\caption{Coverage probabilities (average median interval length) with missing data ($h=0.06$).}
\label{missings}
\end{table}

\section{Trends in Atmospheric Ethane}
\label{sec:application}
We use our methodology to investigate the trending behavior of a time series of atmospheric ethane emissions which is derived from observations performed at the Jungfraujoch station in the Swiss Alps. This station can be found on the saddle between the Jungfrau and the M\"onch, located at 46.55$^{\circ}$ N, 7.98$^{\circ}$ E, 3580 m altitude. Ethane is the most abundant hydrocarbon gas in the atmosphere after methane and it is used as a measure of atmospheric pollution. It contributes to the formation of ground-level ozone and it influences the lifetime of methane which classifies it as an indirect greenhouse gas. This series, which has been studied by \citet{Franco}, is available from the Network for the Detection of Atmospheric Composition Change website at \href{ftp://ftp.cpc.ncep.noaa.gov/ndacc/station/jungfrau/hdf/ftir/}{ftp://ftp.cpc.ncep.noaa.gov/ndacc/station/jungfrau/hdf/ftir/}. It is argued in \citet{Franco} that the measurement conditions are very favorable at the Jungfraujoch location due to high dryness and low local pollution. Further details on the ground-based station and on how the measurements are obtained can be found in the aforementioned reference. It is a time series consisting of daily ethane columns (i.e. the number of molecules integrated between the ground and the top of the atmosphere) recorded under clear-sky conditions between September 1994 and August 2014 with a total of 2260 data points. Whenever more than one measurement is taken on one day, a daily mean is considered.

The average number of data points per year is 112.6 - giving an indication of the severity of the missing data problem present in this series. This shows that, in line with the above discussion, it is of major importance to use a bootstrap method which can replicate the missing data pattern correctly. The estimated transition probabilities of a first order Markov Chain reported in \eqref{eq:missings_data} already indicated the presence of (weak) serial dependence in the missing data generating mechanism. As a further exploration of this mechanism, note that the local constant estimator implicitly provides an estimator for the smoothly varying proportion of non-missing data $p(\tau)$. We can write \eqref{eq:estimator} as
\begin{equation*}
\begin{split}
\hat{m}(\tau) &= \hat{p}(\tau)^{-1} \frac{1}{nh} \sum_{t=1}^n K\left(\frac{t/n-\tau}{h}\right) D_t y_t, \qquad \text{where} \quad \hat{p} (\tau) = \frac{1}{nh} \sum_{t=1}^n K\left(\frac{t/n-\tau}{h}\right) D_t,
\end{split}
\end{equation*}
and $\hat{p}(\tau)$ can be seen as an estimator of $p(\tau)$.\footnote{Lemma \ref{lem:D_results} establishes the consistency of this estimator.} In Figure \ref{fig:Missingness}, we plot $\hat{p}(\cdot)$ as a diagnostic tool to investigate how data availability evolves over time. It fluctuates around the average proportion of 0.3, with a maximum of almost 0.4 and a minimum of 0.2. Although no overall trend appears to be present, the fluctuations are serious enough to cast doubt on the stationarity of the missing data generating mechanism; however, our method can handle this without problems.

\begin{figure}[htb]
\centering
\includegraphics[width=0.8\linewidth, clip, trim = {0 1.5cm 0 1cm}]{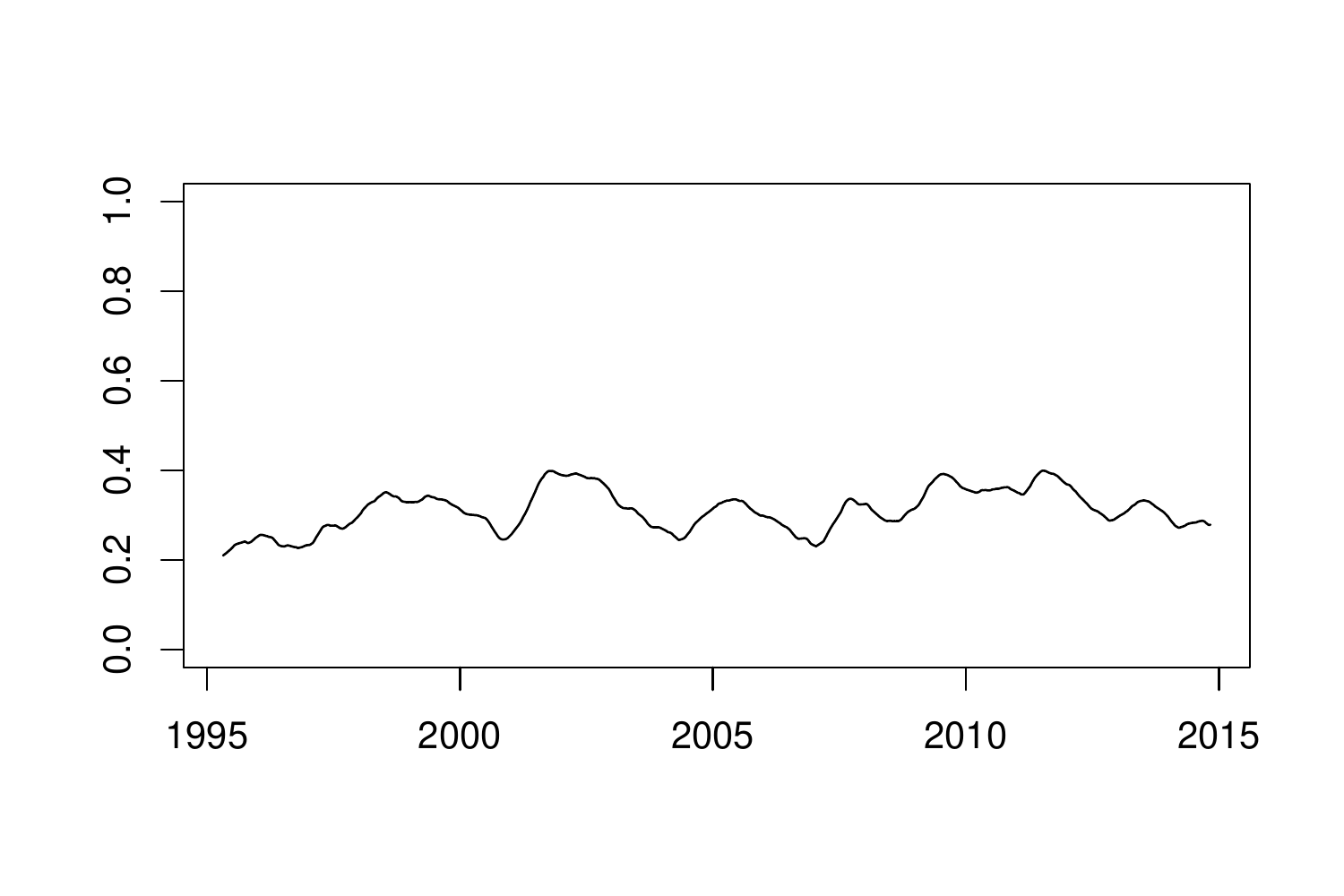}		
\caption{Time-varying proportion of non-missing data $\hat{p}$}\label{fig:Missingness}
\end{figure}

In addition, the data exhibit strong seasonality, as ethane degrades faster in summer than it does in winter, causing the series to displays peaks during winter and troughs during summer. \citet{Franco} take care of this seasonality with the help of Fourier terms by fitting the following model to ethane measurements $x_t$:
\begin{equation}
x_t=\sum_{j=1}^Ma_j\cos(2j\pi t)+b_j\sin(2j\pi t)+y_t.
\end{equation}
They continue their analysis with the residuals from this estimation, where $M=3$ is selected by inspecting the residual variance. To investigate the sensitivity of the choice of $M$, we perform a frequency domain analysis to give more insight about the form of the periodic pattern present in our data. Due to the missing data, we use the Lomb-Scargle periodogram, which is suitable in this situation (see \citet{Lomb76,Scargle82}). Figure \ref{fig:perio} plots the periodogram of the Jungfraujoch series with the frequency, rescaled to years, on the horizontal axis.

\begin{figure}[t]
	\centering
	\subfigure[Full periodogram]
    {
    \includegraphics[width=0.47\linewidth]{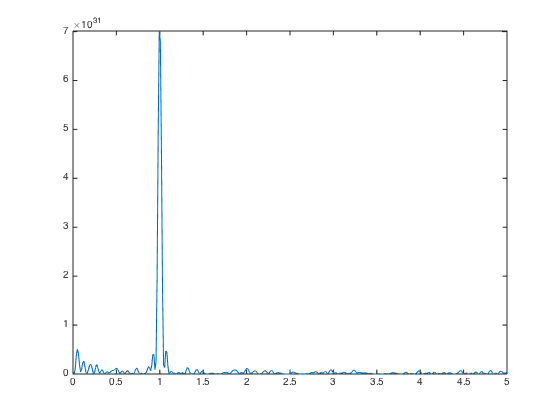}	
    \label{fig:full}
     }
     \subfigure[Zoomed version]
     {
      \includegraphics[width=0.47\linewidth]{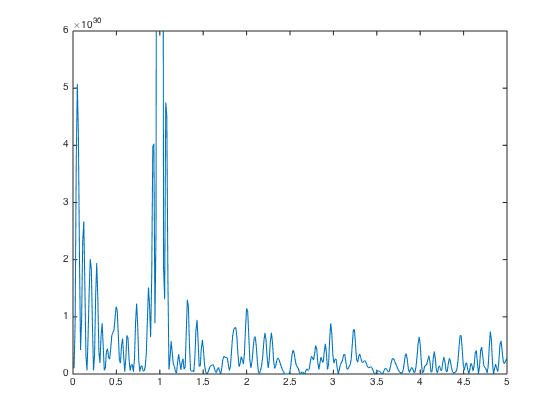}
      \label{fig:zoom}
     }
		\caption{The Lomb-Scargle periodogram of the ethane series}
		\label{fig:perio}
\end{figure}

The peaks around zero are the smooth long-run trend we model with our trend estimator. The present seasonality induces additional peaks at higher frequencies. There is a large peak at 1, representing an annual periodicity, which is so pronounced that it obscures peaks at other frequencies. Therefore, the right panel  displays the same spectrum as the left panel, but with a smaller vertical axis such that other peaks are observed more clearly. Moreover, we can observe that there are peaks at 2 and 3. They are, however, not as clear-cut as the peak at 1 and might not contribute as much to the seasonality, yet provide further justification for the choice $M=3$. In Supplementary Appendix D, we consider the periodograms of the residuals of the regression on 1 up to 4 Fourier terms. These show that while inclusion of one term is clearly needed, including more terms does indeed further reduce periodicity, although for increasing $M$, the effect becomes less pronounced. 

To corroborate our results and to be able to compare our findings to \citet{Franco}, we additionally look at the Akaike and Bayesian information criteria as well as the residual variance (MSE in \citet{Franco}) from the above regression for different values of $M$. The results are summarized in Table \ref{fourier}. While the Akaike criterion (AIC) is indifferent between adding 4 or 5 Fourier terms, BIC selects 2. The residual variance (MSE) decreases by only small increments when more than 3 terms are included. Based on our analysis, it is not clear how many Fourier terms we should include;  any value of $M$ between 1 and 4 seems reasonable. In the following, we report results for applying the trend estimation on the residuals of the regression with $M=3$ Fourier terms, as in \citet{Franco}. In Supplementary Appendix D we perform the same analysis with $M$ varying between 1 and 4, with hardly any difference in the results.

\begin{table}
\centering
\begin{tabular}{@{}clll@{}}
\toprule
\multicolumn{1}{l}{$M$} & \multicolumn{1}{c}{$AIC$} & \multicolumn{1}{c}{$BIC$} & \multicolumn{1}{c}{$MSE$}
                            \\ \cmidrule(l){1-1} \cmidrule(lr){2-2} \cmidrule(lr){3-3} \cmidrule(l){4-4}
1 & 158447.5 & 158464.6 & 2.79909$\times10^{30}$\\
2 & 158420.0 & 158448.6* & 2.76038$\times10^{30}$ \\
3 & 158413.4 & 158453.5 & 2.74750$\times10^{30}$ \\
4 & 158411.3* & 158462.8 & 2.74010$\times10^{30}$ \\
5 & 158411.3* & 158474.3 & 2.73521$\times10^{30}$ \\
6 & 158413.7 & 158488.1 & 2.73329$\times10^{30}$ \\
7 & 158416.8 & 158502.7 & 2.73219$\times10^{30}$ \\
\bottomrule
\end{tabular}
\caption{Fourier term investigation}
\label{fourier}
\end{table}

We next investigate bandwidth selection. As suggested in Section \ref{sec:estimation}, we determine a possible bandwidth using modified cross-validation. In line with the discussion in \citet{ChuMarron}, for our series, the ordinary leave-one-out cross-validation criterion selects a bandwidth which is too small ($h_{cv}=0.0006$). This value for the bandwidth parameter gives almost no smoothing of the data and the resulting trend curve is too wiggly. Leaving out $k=5$ observations on each side of any point, the modified criterion yields a value of $h_{mcv}=0.03$. Albeit a still small bandwidth, this value gives a much more reasonable picture of the trend estimate. The resulting nonparametric estimate as well as 95\% simultaneous confidence bands are depicted in Figure \ref{fig:Jungfraujoch}. The confidence bands are simultaneous over the whole sample. Although the validity has not been established, the algorithm works when we cover the whole sample and the results are easier to interpret. The bands are obtained using $B=999$ replications of the bootstrap procedure and an autoregressive parameter of $\gamma=0.5$.

As a robustness check, we also applied the trend estimation to the data without explicitly modeling the seasonality using Fourier terms. The nonparametric kernel estimator can be interpreted as a low pass filter which suppresses high frequency oscillations. A sufficiently large bandwidth should introduce enough smoothing to provide a trend curve which is not driven by the seasonal component. The bandwidth selected by MCV is too small for this effect to appear in our data. When we increase the bandwidth to $h=0.06$, the resulting trend shows the same pattern as the one in Figure \ref{fig:Jungfraujoch}. This analysis shows that when strong seasonality is present in the data, the nonparametric kernel estimator can be used to filter out the seasonality if the bandwidth is large enough. In this case, however, bandwidth selection becomes a critical issue and the proposed MCV criterion should be applied with care. Further details can be found in Supplementary Appendix D. 

\begin{figure}[!htb]
	\centering
		\includegraphics[width=\linewidth, clip, trim = {0 1.5cm 0 1cm}]{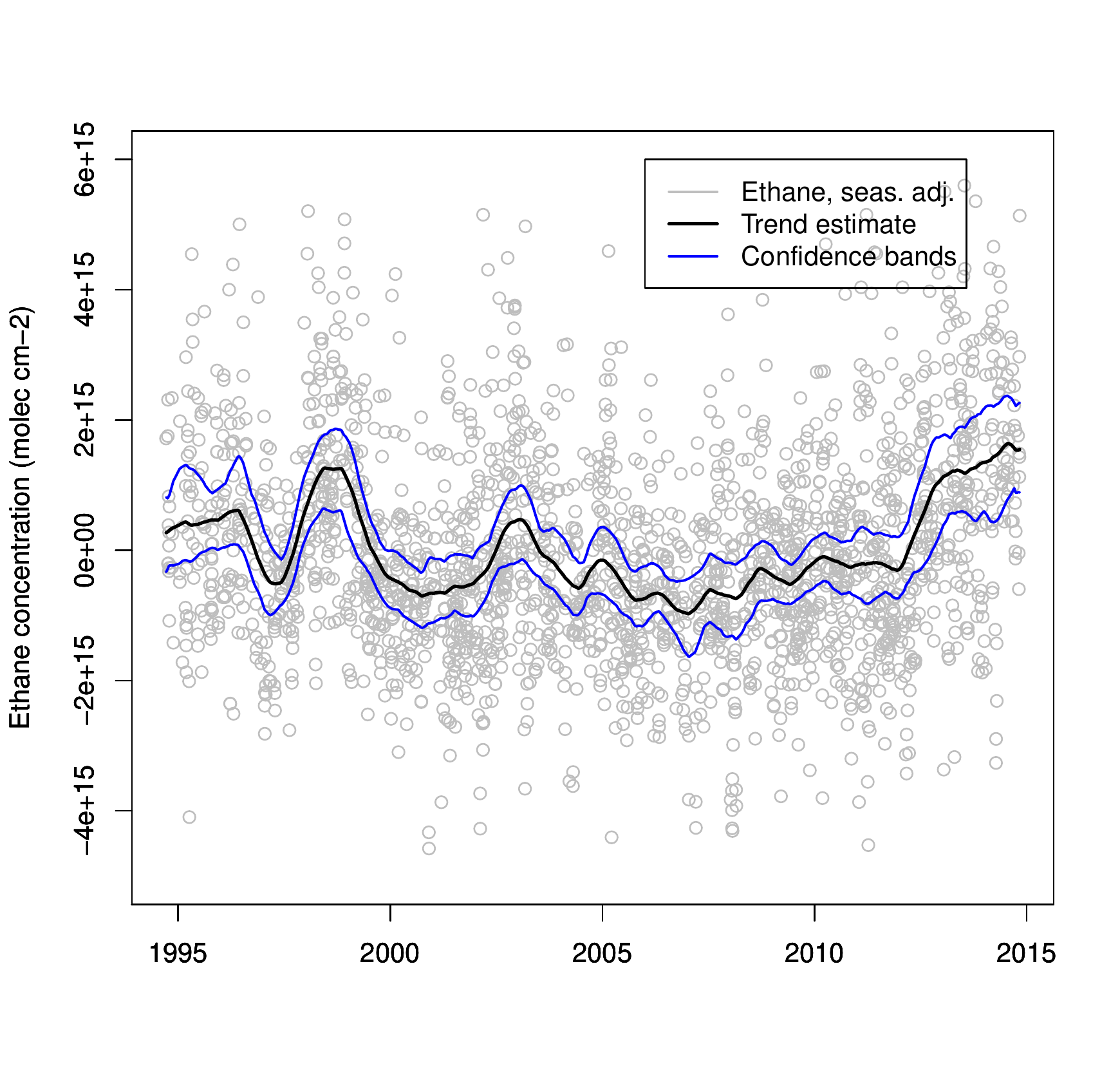}		
		\caption{Trend estimate with uniform 95\% confidence bands for ethane concentration at the Jungfraujoch (with bandwidth selected by MCV).}
		\label{fig:Jungfraujoch}
\end{figure}

We observe a slight downward trend of the ethane time series until around 2009, with local peaks in 1998 and 2002-2003, and an upward trend thereafter. This general development of the trend supports the findings in \citet{Franco} who estimate a linear trend model with a break at the beginning of 2009. They find a negative slope of the trend line before the break and a positive slope after the break. As mentioned by \citet{Franco}, the initial downward trend can be explained by a general emission reduction since the mid 1980's, of the fossil fuel sources in the Northern Hemisphere. This has also been reported by \citet{Simpson}. The upward trend seems to be a more recent phenomenon. Studies attribute it to the recent growth in the exploitation of shale gas and tight oil reservoirs, taking place in North America, see e.g. \citet{Vinci} and \citet{Franco2}. Since previous studies have mainly used methods based on linear trends, the two local peaks have to our knowledge not yet been analyzed. They can potentially be explained by boreal forest fires which were taking place mainly in Russia during both periods. Geophysical studies have investigated these events in association with anomalies in carbon monoxide emissions \citep{Yurganov1, Yurganov2}. In such fires, carbon monoxide is co-emitted with ethane, such that these events are likely explanations for the peaks we observe. 

As a final step, we look at the standard deviation of the residuals. When estimating it with a nonparametric kernel smoother, we see a cyclical pattern with upward trend, similar to the process we generate in our simulations. We plot the estimated standard deviation in Figure \ref{fig:Variance}. This clearly shows that the residuals are heteroskedastic which further underlines the importance of a flexible bootstrap method.

\begin{figure}[!htb]
	\centering
		\includegraphics[width=0.7\linewidth, clip, trim = {0 1.5cm 0 1cm}]{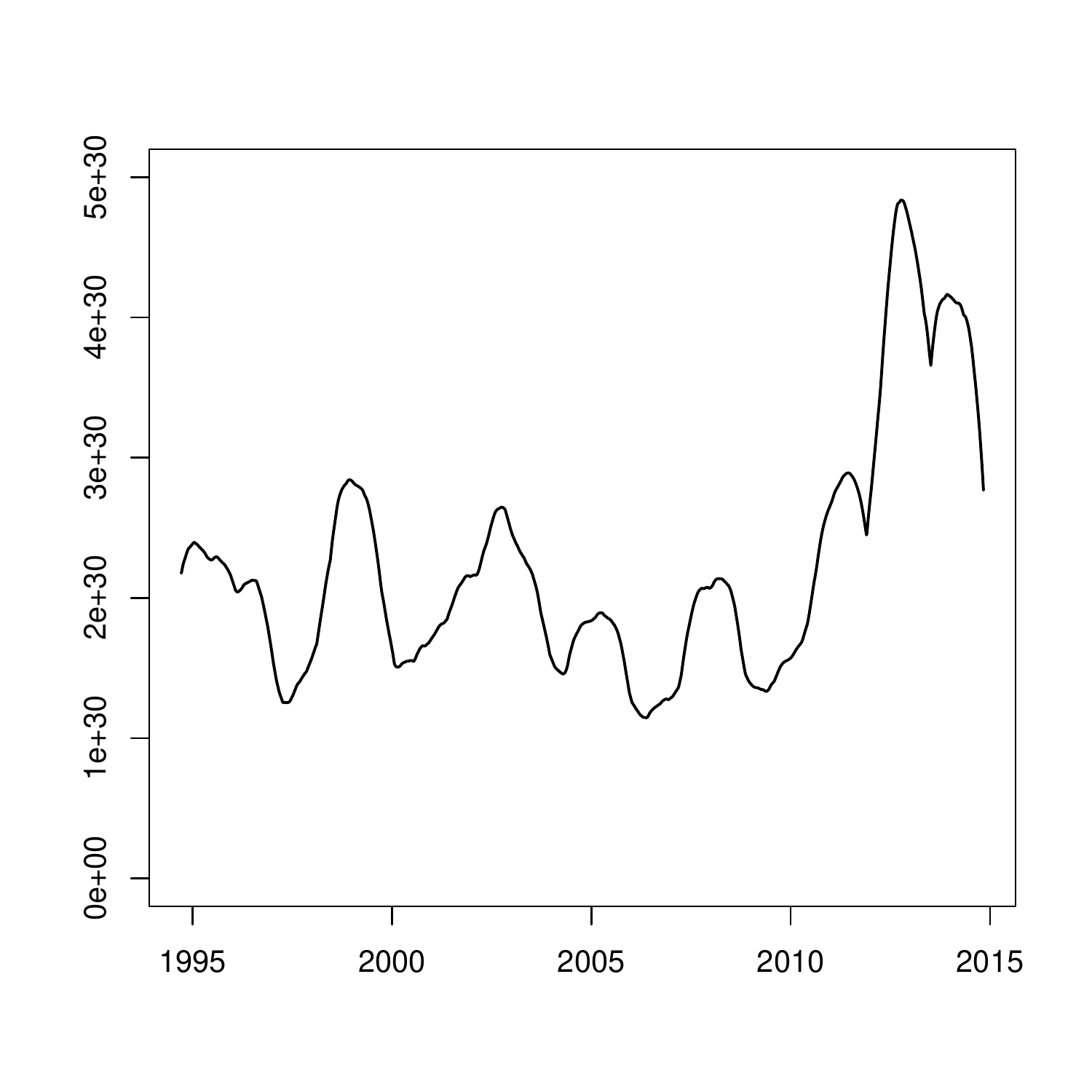}
		\caption{Estimate of the standard deviation of the residuals, obtained using the local constant kernel smoother with Epanechnikov kernel and $h=0.04$}
		\label{fig:Variance}
\end{figure}

\section{Conclusion}
\label{conclusion}
In this paper we have proposed a dependent version of the wild bootstrap, the autoregressive wild bootstrap, to construct confidence intervals around a nonparametrically estimated trend. Consistency of the bootstrap has been established such that it can be used to construct pointwise and simultaneous confidence bands. While the pointwise intervals always show good coverage in finite samples, simulation results for the simultaneous bands indicate that strong positive autocorrelation leads to a drop in coverage whenever simultaneous confidence bands are considered. However, other bootstrap methods such as the dependent wild bootstrap and the sieve bootstrap are equally affected, and overall the autoregressive wild bootstrap performs at least on par with these other methods, and often outperforms them.

One major advantage of the proposed approach is its broad applicability as it can be used under general forms of serial dependence and heteroskedasticity. Furthermore, it can be applied without further adjustments when data points are missing. This feature of the autoregressive wild bootstrap is particularly relevant in economic and climatological applications where the problem of missing data is often encountered. In addition to simulation results, we provide a rigorous asymptotic analysis where asymptotically pervasive missing data are allowed for. While the missing data generating mechanism affects the asymptotic distribution of the estimator, our bootstrap method correctly mimics this and is therefore valid in the presence of general forms of missing data patterns.

An application to atmospheric ethane measurements from Switzerland demonstrates our methodology. An upward trend in this time series is an indication of increasing atmospheric pollution and it has been visible in the data for the last quarter. This finding is in line with previous studies in geophysics and provides further evidence that an increased activity in shale gas extraction might have caused an increase in the ethane burden measured over the Jungfraujoch. In addition, we find two local peaks in the ethane series, which can be explained by boreal forest fires. Natural limitations of linear trend estimation have prevented these peaks from being discovered in previous research. This underlines the flexibility of our approach compared to parametric methods.

An open end to our analysis is the choice of the autoregressive parameter in the autoregressive wild bootstrap. Although our simulation results suggest that a range of values for this parameter perform adequately, its selection in practice remains an open issue. Theoretical results on the choice of this parameter are not trivial; moreover, such theoretical results do not translate directly into selection methods with good properties in small samples. This issue therefore merits deeper study and is left as an exercise for future research. 

\section*{Acknowledgements}
 We would like to thank guest editor Tommaso Proietti, two referees, Eric Beutner, David Hendry, Franz Palm and Hanno Reuvers as well as participants at the conference Econometric Models of Climate Change in Aarhus, the Maastricht Workshop on Advances in Quantitative Economics II and a seminar at Oxford University, for helpful discussions and valuable comments. We further thank Whitney Bader, Bruno Franco, Bernard Lejeune and Emmanuel Mahieu for helpful discussions regarding the ethane application. The second author thanks the Netherlands Organisation for Scientific Research (NWO) for financial support.

\numberwithin{equation}{section}
\numberwithin{lemma}{section}
\numberwithin{assumption}{section}
\setlength{\parskip}{0.7\baselineskip}
\setlength{\parindent}{0cm}

\begin{appendices}
\section{Technical Results}
For the remainder of this appendix, we define some further notation that will help lighten the notational load. Define $k_t(\tau)= K\left(\frac{t/n-\tau}{h}\right)$ and $\tilde{k}_t(\tau)= K\left(\frac{t/n-\tau}{\tilde{h}}\right)$. For any random variable $X$, let $\E_D X = \E (X | \{D_t\}_{t=1}^n)$ and $\var_D X = \E_D [X - (\E_D X)^2]^2$. Furthermore, let $\norm{X}_p = (\E \abs{X}^p)^{1/p}$ denote the $L_p$-norm of $X$, where we use $\norm{X}$ as short-hand notation for $\norm{X}_2$, and define he bootstrap equivalent as $\norm{X^*}_p^* = (\E^* \abs{X^*}^p)^{1/p}$. Finally, $C, C_1, C_2, \ldots$ denote arbitrary fixed positive constants not depending on $n$ or $h$, whose value can change each line, while $\phi_n$ denotes a generic deterministic sequence with the property that $\lim_{n \rightarrow \infty} \phi_n = 0$.

\subsection{Auxiliary lemmas}
We first establish some auxiliary lemmas with basic results that will be needed in the proofs of our main results. The proofs of these auxiliary lemmas are relegated to Supplementary Appendix B.

\begin{lemma} \label{lem:cov}
Let Assumptions \ref{as:LP}-\ref{as:MD} hold. Then for all $n \geq 1$,
\begin{align*}
(i)\; & \sum_{i=0}^{n-1} i \max_{1\leq t \leq n -i} \cov (D_t, D_{t+i}) \leq C_1;\\
(ii)\; & \sum_{i=0}^{n-1} i \max_{1\leq t \leq n -i} \cov (u_t, u_{t+i}) \leq C_2;\\
(iii)\; & \sum_{i=0}^{n-1} i \max_{1\leq t \leq n -i} \cov (D_t u_t, D_{t+i} u_{t+i}) \leq C_3.
\end{align*}
\end{lemma}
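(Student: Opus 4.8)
The plan is to treat the three statements separately, since each isolates a different dependence mechanism, and then to obtain (iii) by combining the other two through the exogeneity condition. Throughout, the factor $i$ makes the $i=0$ term vanish, so only the tails matter.

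For (i), I would first note that setting $s=t$ in Assumption \ref{as:MD}(i) and using $D_t^2 = D_t$ shows that $\{D_t\}$ is itself an $L_2$-mixingale, with $\norm{\E_{t-i} D_t - \E D_t} \leq \zeta_i$. Writing $\bar{D}_t = D_t - \E D_t$ and using that $\bar{D}_t$ is $\mathcal{F}_t$-measurable, the tower property gives $\cov(D_t, D_{t+i}) = \E[\bar{D}_t\, \E_t \bar{D}_{t+i}]$, so that Cauchy--Schwarz yields $\abs{\cov(D_t, D_{t+i})} \leq \norm{\bar{D}_t}\,\norm{\E_t \bar{D}_{t+i}} \leq C\zeta_i$, where $C = \sup_t \norm{\bar{D}_t} \leq 1/2$ because $D_t$ is Bernoulli and $\norm{\E_t \bar{D}_{t+i}} = \norm{\E_{(t+i)-i} D_{t+i} - \E D_{t+i}} \leq \zeta_i$. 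Taking the maximum over $t$ and summing then gives $\sum_i i\,\max_t \cov(D_t,D_{t+i}) \leq C\sum_i i\zeta_i < \infty$ by Assumption \ref{as:MD}(i).

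For (ii), stationarity of the linear process makes $\cov(u_t,u_{t+i}) = R_U(i)$ independent of $t$, so the maximum is immaterial. Expanding $R_U(i) = \sigma_\varepsilon^2 \sum_{j\geq 0}\psi_j\psi_{j+i}$ and bounding crudely gives $\abs{R_U(i)} \leq \sigma_\varepsilon^2 \sum_j \abs{\psi_j}\abs{\psi_{j+i}}$. The key computation is then a change in the order of summation: with $k=j+i$,
\[
\sum_{i\geq 0} i\,\abs{R_U(i)} \leq \sigma_\varepsilon^2 \sum_{j\geq 0}\abs{\psi_j}\sum_{k\geq j}(k-j)\abs{\psi_k} \leq \sigma_\varepsilon^2 \Big(\sum_{j\geq 0} \abs{\psi_j}\Big)\Big(\sum_{k\geq 0} k\,\abs{\psi_k}\Big),
\]
which is finite because $\sum_j j\abs{\psi_j} < \infty$ from Assumption \ref{as:LP}(ii) forces both factors to converge (the first since $\sum_{j\geq 1}\abs{\psi_j} \leq \sum_{j\geq 1} j\abs{\psi_j}$, plus $\psi_0=1$).

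Part (iii) is the substantive one, since it must reconcile the two dependence structures, and I expect it to be the main obstacle. The plan is to invoke the exogeneity condition Assumption \ref{as:MD}(iv) twice. First, $\E[D_t u_t] = \E[D_t\,\E(u_t\mid D_t)] = 0$ eliminates both marginal means, so that $\cov(D_t u_t, D_{t+i}u_{t+i}) = \E[D_t D_{t+i}\, u_t u_{t+i}]$. Second, conditioning on the missing-data information and applying the conditional second-moment part of (iv) should pull out $\E(u_t u_{t+i}\mid \{D_t\}) = R_U(i)$, producing the clean factorization $\cov(D_t u_t, D_{t+i}u_{t+i}) = R_U(i)\,\E[D_t D_{t+i}]$. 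Since $0 \leq \E[D_t D_{t+i}] = \P(D_t = D_{t+i} = 1) \leq 1$ uniformly in $t$, the summand is dominated by $\abs{R_U(i)}$ and part (ii) closes the argument. The delicate point, and what I expect to require the most care, is justifying this factorization rigorously: Assumption \ref{as:MD}(iv) is phrased through conditioning that must be extended to the full missing-data $\sigma$-field before $R_U(i)$ can legitimately be extracted, and one must verify that the entangled product $D_t D_{t+i} u_t u_{t+i}$ is controlled accordingly — this is precisely why the mixingale hypothesis is imposed at the level of the product $D_s D_t$ rather than on $D_t$ alone.
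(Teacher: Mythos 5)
Your proposal is correct, and for parts $(i)$ and $(ii)$ it is essentially identical to the paper's proof: the same Cauchy--Schwarz argument against the mixingale bound $\norm{\E_{t}[D_{t+i}-\E D_{t+i}]}\leq\zeta_i$ (obtained from Assumption \ref{as:MD}(i) with $s=t$ and $D_t^2=D_t$), and the same interchange of summation giving $\sum_i i\abs{R_U(i)}\leq\sigma_\varepsilon^2\left(\sum_j\abs{\psi_j}\right)\left(\sum_k k\abs{\psi_k}\right)$.

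For part $(iii)$ your route differs slightly from the paper's, and in a way that is to your credit. The paper disposes of $(iii)$ in one line, asserting that $0\leq D_t\leq 1$ implies $\abs{\E D_tD_{t+i}u_tu_{t+i}}\leq\abs{\E u_tu_{t+i}}$; taken literally this inequality is false for general dependent $D$ and $u$ (take $D_t D_{t+i}=\mathbbm{1}(u_tu_{t+i}>0)$), so the step tacitly relies on the exogeneity in Assumption \ref{as:MD}(iv) to first factor the expectation as $\E[D_tD_{t+i}]\,R_U(i)$ and on $(iv)$ again to kill the means $\E[D_tu_t]$. You make both uses of $(iv)$ explicit, which is the argument the paper actually needs. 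Your closing caveat is also apt: Assumption \ref{as:MD}(iv) as stated conditions only on a single $D_t$, whereas the factorization (and the conditioning on $\{D_t\}_{t=1}^n$ used repeatedly later in the appendix) requires it to hold given the joint missing-data $\sigma$-field; this is a gap in the assumption's phrasing rather than in your proof, and flagging it is reasonable.
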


\begin{proof}[{\bf Proof of Lemma \ref{lem:cov}}]
See Supplementary Appendix B.
\end{proof}

\begin{lemma}\label{lem:kernel_sum}
Under Assumption \ref{as:kernel}, we have for any $\delta > 0$ that
\begin{align*}
\suptau \max_{0 \leq i \leq n} \frac{1}{nh} \sum_{t=1}^{n-i} k_t (\tau) k_{t+i} (\tau) < \infty.
\end{align*}
\end{lemma}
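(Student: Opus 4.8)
The plan is to exploit the two features of the kernel granted by Assumption \ref{as:kernel} — boundedness and compact support — so as to reduce the weighted sum to a simple count of nonzero terms. First I would record that a Lipschitz continuous function on a compact support is bounded, so there is a constant $M<\infty$ with $\sup_{\omega\in\mathbb{R}}\abs{K(\omega)}=M$, and a constant $A<\infty$ with $\mathrm{supp}(K)\subseteq[-A,A]$. Since $\abs{k_t(\tau)}\le M$ for every $t$ and $\tau$, the integrand is controlled pointwise by $\abs{k_t(\tau)\,k_{t+i}(\tau)}\le M^2$, uniformly in $t$, $i$ and $\tau$.

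The crux of the argument is the counting step. For a fixed $\tau$, compact support forces $k_t(\tau)=K\!\left(\tfrac{t/n-\tau}{h}\right)\neq 0$ only when $\abs{t/n-\tau}\le Ah$, i.e.\ when $t$ lies in the interval $[\,n(\tau-Ah),\,n(\tau+Ah)\,]$, which has length $2Anh$ and therefore contains at most $2Anh+1$ integers. Because the product $k_t(\tau)\,k_{t+i}(\tau)$ vanishes as soon as its first factor does, the number of nonzero summands in $\sum_{t=1}^{n-i}k_t(\tau)\,k_{t+i}(\tau)$ is bounded by $2Anh+1$, and crucially this bound does not depend on $i$ — so the maximum over $0\le i\le n$ costs nothing.

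Combining the two observations gives, for every $\tau$ and every $i$,
\begin{equation*}
\frac{1}{nh}\sum_{t=1}^{n-i}k_t(\tau)\,k_{t+i}(\tau)\;\le\;\frac{(2Anh+1)M^2}{nh}\;=\;M^2\left(2A+\frac{1}{nh}\right).
\end{equation*}
The right-hand side is free of $\tau$ and $i$, so it dominates $\suptau\max_{0\le i\le n}$ of the left-hand side; note that the restriction $\tau\in[\delta,1-\delta]$ is not even needed for this upper bound, which holds for all $\tau\in(0,1)$. For each fixed $n$ the bound is clearly finite, and under the bandwidth conditions (Assumption \ref{as:bandwidth}), where $nh\to\infty$, it is in fact bounded uniformly in $n$, converging to $2AM^2$.

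I do not anticipate a genuine obstacle here: the statement is essentially a bookkeeping exercise, and the only points requiring any care are (i) justifying boundedness of $K$ from the Lipschitz-plus-compact-support hypothesis, and (ii) making the counting of integer points in the support window explicit so that the resulting bound is manifestly independent of both $\tau$ and $i$. The mild subtlety worth flagging is that the uniform-in-$n$ finiteness relies on $nh$ being bounded away from zero; this is harmless, since the lemma is only ever invoked in settings where the bandwidth satisfies Assumption \ref{as:bandwidth}.
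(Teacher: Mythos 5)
Your proof is correct and follows essentially the same route as the paper's: bound each kernel value by $\sup_\omega K(\omega)$ and use compact support to restrict the sum to at most $O(nh)$ nonzero terms, yielding a bound independent of $\tau$ and $i$. Your explicit treatment of the $+1$ from integer counting and the remark on $nh$ being bounded away from zero are slightly more careful than the paper's one-line argument, but the idea is identical.
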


\begin{proof}[{\bf Proof of Lemma \ref{lem:kernel_sum}}]
See Supplementary Appendix B.
\end{proof}

\begin{lemma} \label{lem:kernel_function_limits}
Let $f_n(\cdot): [0,1] \rightarrow \mathbb{R}$ satisfy Assumption \ref{as:smooth}, and let $g (\cdot, \cdot): [0,1]^2 \rightarrow \mathbb{R}^2$ be a Lipschitz continuous function with Lipschitz constant $C_g = \sup_{(\tau_1, \tau_2) \in [0,1]^2} \abs{g_n(\tau_1, \tau_2)}$. Let Assumption \ref{as:kernel} be satisfied. Then, there exists some $N>0$, such that for any $0< \delta < 1/2$, all $n > N$ and all $i=0,1, \ldots, n$, we have that
\begin{equation*}
\begin{split}
(i) \; & \suptau \abs{\frac{1}{nh} \sum_{t=1}^{n} f\left(\frac{t}{n} \right) k_t(\tau) - f(\tau) \kappa_1 - \frac{1}{2} h^2 f^{(2)} (\tau) \mu_2} \leq C \max \left\{h^3, \frac{1}{nh} \right\};\\
(ii) \; & \suptaun \suptautt \abs{\frac{1}{nh} \sum_{t=1}^{n-i} g \left(\frac{t}{n}, \frac{t+i}{n} \right) k_t(\tau_0 + \tau_1 h) k_{t+i}(\tau_0 + \tau_2 h) - g(\tau_0, \tau_0) \kappa (\tau_1 - \tau_2)} \\
&\quad \leq C_g \left[\frac{C_1 i}{nh} + C_2 \max \left\{h, \frac{1}{nh^2} \right\} + S_{n}(i) \right],
\end{split}
\end{equation*}
where, for any sequence $\{\beta_i\}_{i=0}^{\infty}$ with $\sum_{i=1}^\infty \abs{\beta_i} < \infty$, we have that $\sum_{i=1}^{\infty} \abs{\beta_i} S_{n,i} \leq \phi_n$, where $\lim_{n \rightarrow \infty} \phi_n = 0$.
\end{lemma}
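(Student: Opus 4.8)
The plan is to treat both parts as instances of a single principle: a normalized kernel sum $\frac{1}{nh}\sum_t(\cdots)$ is first compared to the corresponding integral (a Riemann-sum/discretization step), and the integral is then evaluated through the substitution $\omega=(t/n-\tau)/h$ followed by a Taylor expansion of the smooth factor and use of the kernel moments $\kappa_k,\mu_k,\kappa(\cdot)$ from Assumption \ref{as:kernel}. Throughout, uniformity over $\tau\in[\delta,1-\delta]$ (resp. $\tau_0\in[\delta,1-\delta]$, $\tau_1,\tau_2\in[-1,1]$) is obtained because every bound depends only on the uniformly bounded derivatives of $f$ (Assumption \ref{as:smooth}), the sup-norm and Lipschitz constant of $g$, and the Lipschitz constant and compact support of $K$ (Assumption \ref{as:kernel}); compactness of the support guarantees that for $n$ large enough all kernel arguments stay inside $(0,1)$, so no boundary truncation occurs.

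For part (i), I would control
\[
\frac{1}{nh}\sum_{t=1}^n f(t/n)k_t(\tau) - \frac1h\int_0^1 f(x)K\!\left(\tfrac{x-\tau}{h}\right)dx
\]
using that $x\mapsto f(x)K((x-\tau)/h)$ has total variation of order $1$ (the kernel factor is Lipschitz with constant $O(1/h)$ on a support of width $O(h)$, and $f$ is bounded with bounded derivative), so the equally-spaced Riemann error is $O((nh)^{-1})$, uniformly in $\tau$. The substitution $\omega=(x-\tau)/h$ turns the integral into $\int f(\tau+\omega h)K(\omega)d\omega$; a second-order Taylor expansion of $f$ together with $\mu_1=0$ (symmetry of $K$) yields $f(\tau)\kappa_1+\tfrac12 h^2 f^{(2)}(\tau)\mu_2$, while the remainder, controlled through the regularity of $f^{(2)}$ and $\int|\omega|^3 K(\omega)d\omega<\infty$, is $O(h^3)$. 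Combining gives the stated $C\max\{h^3,(nh)^{-1}\}$.

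For part (ii) the same two steps apply to the product $k_t(\tau_0+\tau_1 h)k_{t+i}(\tau_0+\tau_2 h)$. Writing $\omega_t=(t/n-\tau_0)/h$, the two factors become $K(\omega_t-\tau_1)$ and $K(\omega_t+\tfrac{i}{nh}-\tau_2)$. I would first replace $g(t/n,(t+i)/n)$ by $g(\tau_0,\tau_0)$, at a cost controlled by the Lipschitz property of $g$: this contributes the $O(h)$ piece (displacement $\omega_t h$) together with part of the $i$-dependent error (displacement $i/n\le i/(nh)$). I would then pass from the sum to the integral $\int K(\omega-\tau_1)K(\omega+\tfrac{i}{nh}-\tau_2)d\omega$, with discretization error of order at most $\max\{h,(nh^2)^{-1}\}$, and finally remove the shift in the second kernel's argument using $|K(\cdot+\tfrac{i}{nh})-K(\cdot)|\le L_K\,i/(nh)$, which produces the $C_1 i/(nh)$ term. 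Everything not already written as $i/(nh)$ or $\max\{h,(nh^2)^{-1}\}$ — chiefly the higher-order kernel-shift residual and the edge interaction of the summation range $t=1,\dots,n-i$ with the support — is collected into $S_{n,i}$. The limiting integral equals $\int K(\omega-\tau_1)K(\omega-\tau_2)d\omega=\kappa(\tau_1-\tau_2)$ by the substitution $\omega\mapsto\omega-\tau_1$ and evenness of $K$.

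The main obstacle is the construction of $S_{n,i}$ so that it simultaneously absorbs every remaining error piece, stays bounded uniformly in $i$ and in $(\tau_0,\tau_1,\tau_2)$, and vanishes pointwise in $i$ as $n\to\infty$ (which holds since $nh\to\infty$ forces $i/(nh)\to0$ for each fixed $i$, and $h\to0$). Uniform boundedness follows from a kernel-product estimate of the type in Lemma \ref{lem:kernel_sum}. These two properties are exactly what delivers the final claim: for any summable $\{\beta_i\}$, the array $|\beta_i|S_{n,i}$ is dominated by a fixed summable sequence and converges to $0$ termwise, so dominated convergence for series gives $\sum_i|\beta_i|S_{n,i}\le\phi_n\to0$. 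Keeping all bounds uniform across $\tau_1,\tau_2\in[-1,1]$ while the two kernel supports move into and out of overlap is the delicate bookkeeping; the remaining Taylor and Riemann-sum estimates are routine.
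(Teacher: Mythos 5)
Your proposal follows essentially the same route as the paper's proof: Riemann-sum approximation via the Lipschitz and compact-support properties of $K$, a change of variables plus second-order Taylor expansion with $\mu_1=0$ for part (i), and for part (ii) Lipschitz replacement of $g$ and of the index-shifted kernel, with the leftover truncation of the sum at $t=n-i$ collected into $S_n(i)$ and controlled by a uniform kernel-product bound together with summability of $\{\beta_i\}$. The only cosmetic differences are the order in which the kernel shift is removed and that the paper establishes $\sum_i \abs{\beta_i} S_n(i)\rightarrow 0$ by noting $S_n(i)=0$ for $i\leq M(n)$ (compact support keeps the terms $t\geq n-i$ outside the kernel window when $\tau_0\leq 1-\delta$) rather than by your dominated-convergence-for-series argument, whose pointwise-vanishing step should likewise be justified by this support argument rather than by $i/(nh)\rightarrow 0$.
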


\begin{proof}[{\bf Proof of Lemma \ref{lem:kernel_function_limits}}]
See Supplementary Appendix B.
\end{proof}

\begin{lemma} \label{lem:D_results}
Let Assumptions \ref{as:sigma}-\ref{as:kernel} be satisfied. Define
\begin{equation} \label{eq:Z_D}
Z_{n,D,f,q} (\tau) = \frac{1}{\sqrt{nh}} \sum_{t=1}^{n} k_t(\tau) \left[f\left(\frac{t}{n} \right) - q_n (\tau)\right] \left[D_t - p\left(\frac{t}{n} \right) \right],
\end{equation}
where $f(\cdot)$ satisfies Assumption \ref{as:smooth} and $q_n (\cdot): [0,1] \rightarrow \mathbb{R}$ is a sequence of deterministic functions with $\sup_{\tau \in [0,1]}\abs{ q_n (\tau)} < \infty$ for all $n$. Then, there exists some $N>0$,  such that for any $0< \delta < 1/2$ and all $n > N$ the following hold:

$(i)$ $\E Z_{n,D,f,q} (\tau) = 0$ and
\begin{equation*}
\suptau \abs{\E Z_{n,D,f,l} (\tau)^2 - [f(\tau) - q_n(\tau)]^2 \Omega_D (\tau) \kappa_2} \leq C \max \left\{h, \frac{1}{nh^2}, \phi_n \right\},
\end{equation*}
where $\Omega_D (\tau) = R_{D,0}(\tau,\tau) + 2 \sum_{i=1}^\infty R_{D,i} (\tau, \tau)$ and $\lim_{n \rightarrow \infty} \phi_n = 0$.

$(ii)$ For all $\tau \in (0,1)$,
\begin{equation*}
\begin{split}
\frac{1}{nh} \sum_{t=1}^{n} f\left(\frac{t}{n} \right) k_t(\tau) D_t - p(\tau) f(\tau) - \frac{1}{2} h^2 [fp]^{(2)} (\tau) \mu_2 = \frac{Z_{n,D,f,0}(\tau)} {\sqrt{nh}} + R_{n,D} (\tau),\\
\end{split}
\end{equation*}
where $Z_{n,D,f,0}(\tau)$ is defined as in \eqref{eq:Z_D} with $q_n(\cdot) = 0$ and $\suptau \abs{R_{n,D} (\tau)} < C \max \left\{h^3, \frac{1}{nh} \right\}$.

$(iii)$ Let $\tilde g_i(\cdot, \cdot)$ be a Lipschitz continuous function with Lipschitz constant $C_g(i) = \sup_{(\tau_1, \tau_2) \in [0,1]^2} \abs{\tilde g_i(\tau_1, \tau_2)}$. Then for all $i = 0, 1, \ldots, n - 1$, 
\begin{equation*}
\begin{split}
&\suptau \suptautt \E \abs{\frac{1}{nh} \sum_{t=1}^{n} \tilde g_i \left(\frac{t}{n}, \frac{t+i}{n} \right) k_t(\tau_0 + \tau_1 h) k_{t+i}(\tau_0 + \tau_2 h) D_t D_{t+i} - p(\tau_0) \tilde g(\tau_0, \tau_0) \kappa (\tau_1 - \tau_2)} \\
&\qquad\leq C_g(i) \left[\frac{C_1}{\sqrt{nh}} + C_2 \max \left\{h, \frac{1}{nh^2} \right\} + S_n(i)\right],
\end{split}
\end{equation*}
where $S_n(i)$ is defined in Lemma \ref{lem:kernel_function_limits}.
\end{lemma}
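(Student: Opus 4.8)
The three parts share a common engine: the deterministic kernel approximations of Lemma \ref{lem:kernel_function_limits}, the covariance summability of Lemma \ref{lem:cov}, and the mixingale structure of Assumption \ref{as:MD}. I would treat them in turn. For part (i) the mean claim is immediate, since $\E D_t = p(t/n)$ by Assumption \ref{as:MD}(ii) makes every summand of $Z_{n,D,f,q}(\tau)$ centered. For the second moment I would expand
\[
\E Z_{n,D,f,q}(\tau)^2 = \frac{1}{nh}\sum_{t=1}^n\sum_{s=1}^n k_t(\tau)k_s(\tau)\bigl[f(t/n)-q_n(\tau)\bigr]\bigl[f(s/n)-q_n(\tau)\bigr]\cov(D_t,D_s),
\]
reindex by the lag $i = s-t$, and substitute $\cov(D_t,D_{t+i}) = R_{D,\abs{i}}(t/n,(t+i)/n)$ from Assumption \ref{as:MD}(iii) to express the double sum as a sum over $i$ of kernel-weighted single sums. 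For each fixed $i$, Lemma \ref{lem:kernel_function_limits}(ii) with $\tau_1=\tau_2=0$ and $g_i(a,b) = [f(a)-q_n(\tau)][f(b)-q_n(\tau)]R_{D,\abs{i}}(a,b)$ (whose sup-norm is controlled by $\sup\abs{R_{D,\abs{i}}}$) delivers the per-lag limit $[f(\tau)-q_n(\tau)]^2 R_{D,\abs{i}}(\tau,\tau)\kappa(0)$, with $\kappa(0)=\kappa_2$, and the stated error. Summing over $i$ rebuilds $\Omega_D(\tau)$; the weights $\beta_i := \sup\abs{R_{D,i}}$ satisfy $\sum_i i\beta_i < \infty$ by Lemma \ref{lem:cov}(i), so the $C_1 i/(nh)$ and $C_2\max\{h,1/(nh^2)\}$ errors sum to the claimed order and the $S_n(i)$ errors to $\phi_n$. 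The one extra point is the tail $\abs{i}\gtrsim nh$, where the kernel product vanishes identically while $R_{D,i}(\tau,\tau)$ need not: this contributes $\sum_{\abs{i}\gtrsim nh}\beta_i \to 0$, which I absorb into $\phi_n$.

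Part (ii) is a direct corollary of Lemma \ref{lem:kernel_function_limits}(i) once I split $D_t = p(t/n) + [D_t-p(t/n)]$. The centered part is exactly $Z_{n,D,f,0}(\tau)/\sqrt{nh}$ by the definition \eqref{eq:Z_D}. The mean part $\frac{1}{nh}\sum_t [fp](t/n)k_t(\tau)$ is handled by Lemma \ref{lem:kernel_function_limits}(i) applied to $fp$, which satisfies Assumption \ref{as:smooth} since $f$ and $p$ are twice continuously differentiable with bounded derivatives; this yields $f(\tau)p(\tau)\kappa_1 + \tfrac12 h^2[fp]^{(2)}(\tau)\mu_2$ with uniform remainder $O(\max\{h^3,1/(nh)\})$, and under the standard normalization $\kappa_1=1$ the leading term is $p(\tau)f(\tau)$. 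Rearranging gives the claim with $R_{n,D}(\tau)$ equal to this remainder.

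For part (iii) the summand carries the product $D_t D_{t+i}$, so I would split the $L_1$ deviation into bias and fluctuation, $\E\abs{X_i - c} \leq \abs{\E X_i - c} + (\var X_i)^{1/2}$, where $X_i$ is the kernel average. For the bias I would write $\E[D_t D_{t+i}] = p(t/n)p((t+i)/n) + R_{D,\abs{i}}(t/n,(t+i)/n)$, using the simplification $\E D_t^2 = \E D_t = p(t/n)$ when $i=0$ since $D_t\in\{0,1\}$, and then invoke Lemma \ref{lem:kernel_function_limits}(ii); this produces the $C_2\max\{h,1/(nh^2)\} + S_n(i)$ portion of the bound, the $C_1 i/(nh)$ term being dominated. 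The fluctuation is the crux: $\var X_i$ is a double kernel-weighted sum of $\cov(D_t D_{t+i}, D_{t'}D_{t'+i})$, and bounding it is precisely why Assumption \ref{as:MD}(i) imposes the mixingale property on the products $D_sD_t$ rather than on $D_t$ alone. Using that assumption together with Lemma \ref{lem:cov} and the kernel sum bound of Lemma \ref{lem:kernel_sum}, I would show the covariances are summable in $\abs{t-t'}$ uniformly in $i$ and that kernel localization leaves only $O(nh)$ effective terms, giving $\var X_i = O(1/(nh))$ and hence $(\var X_i)^{1/2}=O(1/\sqrt{nh})$, the leading $C_1/\sqrt{nh}$ term. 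Uniformity over $\tau_0\in[\delta,1-\delta]$ and $\tau_1,\tau_2\in[-1,1]$ comes for free, as the kernel-lemma and covariance bounds are already uniform.

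The main obstacle is the variance control in part (iii): establishing summability of $\cov(D_t D_{t+i}, D_{t'}D_{t'+i})$ uniformly in the lag $i$ and converting it, through the kernel localization, into the sharp $1/(nh)$ rate. This is the step that genuinely exploits the mixingale-on-products formulation of Assumption \ref{as:MD}(i), and it is the reason that assumption is stated for $D_sD_t$ rather than for $D_t$.
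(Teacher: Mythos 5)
Your proposal is correct and follows essentially the same route as the paper: part (i) by reindexing the covariance double sum over lags and applying Lemma \ref{lem:kernel_function_limits}$(ii)$ lag by lag with summability from Lemma \ref{lem:cov}$(i)$, part (ii) by splitting $D_t = p(t/n) + [D_t - p(t/n)]$ and applying Lemma \ref{lem:kernel_function_limits}$(i)$ to $fp$, and part (iii) by a bias/fluctuation split in which the bias uses $\E D_t D_{t+i} = p(t/n)p((t+i)/n) + R_{D,i}(t/n,(t+i)/n)$ and Lemma \ref{lem:kernel_function_limits}$(ii)$. The only cosmetic difference is in the fluctuation term of (iii), where the paper packages your hands-on covariance-summability computation as an application of McLeish's $L_2$-mixingale maximal inequality (Theorem 1.6 of McLeish, 1975) to $X_t(i) = k_{t-i}k_t\,\tilde g_i\,[D_{t-i}D_t - \E D_{t-i}D_t]$, yielding the same $C_g(i)/\sqrt{nh}$ bound you obtain directly.
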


\begin{proof}[{\bf Proof of Lemma \ref{lem:D_results}}]
See Supplementary Appendix B.
\end{proof}

\begin{lemma} \label{lem:lrv}
Let Assumptions \ref{as:smooth}-\ref{as:bandwidth} hold. Then, for all $0 \leq i \leq n-1$ and any $0< \delta < 1/2$,
\begin{equation*}
\begin{split}
&\suptau \E \abs{\frac{1}{nh} \sum_{t=1}^{n-i} k_t(\tau) k_{t+i}(\tau) \sigma_t \sigma_{t+i} D_t D_{t+i} \left[u_t u_{t+i}- \E u_t u_{t+i} \right]} \leq \beta_i \phi_{n} + \frac{\eta_{n}}{\sqrt{nh}},
\end{split}
\end{equation*}
where $\sum_{i=1}^\infty \beta_i < \infty$, $\lim_{n \rightarrow \infty} \phi_{n} = 0$ and $\limsup_{n \rightarrow \infty} \eta_{n} < \infty$.
\end{lemma}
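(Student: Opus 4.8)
The plan is to bound the $L_1$ quantity by an $L_2$ quantity and then compute a conditional variance, using the exogeneity of the missing-data process to strip off the indicators and isolate a \emph{fourth-order} covariance of the linear process $\{u_t\}$. Write $W_{t,i} = u_t u_{t+i} - R_U(i)$ and $a_t(\tau) = k_t(\tau) k_{t+i}(\tau)\,\sigma_t \sigma_{t+i}\, D_t D_{t+i}$, so that the object of interest is $S_n(\tau,i) = (nh)^{-1}\sum_{t=1}^{n-i} a_t(\tau) W_{t,i}$. By the exogeneity part of Assumption \ref{as:MD}(iv) the summands are conditionally centred, $\E (W_{t,i}\mid\{D_t\}) = 0$, so $\E S_n(\tau,i) = 0$, and Jensen's inequality gives $\E\abs{S_n(\tau,i)} \le (\E S_n(\tau,i)^2)^{1/2}$. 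Since the bound in the statement is a supremum of an expectation (not an expectation of a supremum), no chaining argument is needed: it suffices to bound $\E S_n(\tau,i)^2$ by quantities that are uniform over $\tau\in[\delta,1-\delta]$, which will hold because the kernel bounds all stem from Lemma \ref{lem:kernel_sum} and the stationarity of $\{u_t\}$.

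Expanding the square, $\E S_n(\tau,i)^2$ equals $(nh)^{-2}$ times a double sum over $t,s$ of $k_t(\tau)k_{t+i}(\tau)k_s(\tau)k_{s+i}(\tau)\,\sigma_t\sigma_{t+i}\sigma_s\sigma_{s+i}\,\E[D_t D_{t+i} D_s D_{s+i} W_{t,i}W_{s,i}]$. First I would invoke Assumption \ref{as:MD}(iv) to replace the conditional moments of the $u$'s by their unconditional counterparts, reducing the final expectation to $\E[D_t D_{t+i} D_s D_{s+i}]\,\cov(u_t u_{t+i}, u_s u_{s+i})$; since $0\le D_t D_{t+i}D_s D_{s+i}\le 1$ and $\sigma(\cdot)$ is bounded by Assumption \ref{as:sigma}, these deterministic-and-bounded factors may be discarded at the cost of a constant. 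Writing $r=s-t$ and using strict stationarity of $\{u_t\}$, this leaves the task of bounding
\[
\frac{1}{(nh)^2}\sum_{t}\sum_{r} k_t(\tau)k_{t+i}(\tau)k_{t+r}(\tau)k_{t+r+i}(\tau)\,\abs{\gamma_i(r)}, \qquad \gamma_i(r)=\cov(u_t u_{t+i}, u_{t+r}u_{t+r+i}).
\]

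The main obstacle is the control of this fourth-order covariance, and in particular the separation of a part producing the $1/\sqrt{nh}$ rate from a part that is summable in $i$. Using the cumulant expansion for the linear process of Assumption \ref{as:LP}, one has $\gamma_i(r)=R_U(r)^2+R_U(r+i)R_U(r-i)+\kappa_{4,i}(r)$, where $\kappa_{4,i}(r)$ is the relevant fourth cumulant. The diagonal term $R_U(r)^2$ and the cumulant term are summable in $r$ \emph{uniformly} in $i$ (finiteness of $\sum_k R_U(k)^2$ and of $\sum_{i,r}\abs{\kappa_{4,i}(r)}$ follows from $\sum_j j\abs{\psi_j}<\infty$ and $\E\epsilon_t^4<\infty$; cf.\ \citet{Jansson02}), while the kernel factors are bounded and supported on an $O(nh)$ window, so that $(nh)^{-1}\sum_t k_t(\tau)k_{t+i}(\tau)\le C$ uniformly in $\tau$ and $i$ by Lemma \ref{lem:kernel_sum}. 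These contributions are therefore $O\big(1/(nh)\big)$, and after the square root give the $\eta_n/\sqrt{nh}$ term with $\limsup_n\eta_n<\infty$. The cross term is the delicate one: summing over $r$ gives $\sum_r\abs{R_U(r+i)R_U(r-i)}=\beta_i$, and Young's convolution inequality with $\sum_k\abs{R_U(k)}<\infty$ shows $\sum_i\beta_i<\infty$, so this piece contributes $O\big(\beta_i/(nh)\big)$.

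Assembling the three pieces yields $\E S_n(\tau,i)^2\le C(1+\beta_i)/(nh)$ uniformly in $\tau$, which already delivers a bound of the form $\eta_n/\sqrt{nh}$. The slightly looser packaging $\beta_i\phi_n+\eta_n/\sqrt{nh}$ in the statement then accommodates the genuine $o(1)$ remainders that arise when the smooth deterministic weights $\sigma_t\sigma_{t+i}$ and the factor $\E[D_t D_{t+i}D_s D_{s+i}]$ are replaced by their limits and the discrete kernel sums by their Riemann-integral counterparts: by the Lipschitz continuity in Assumptions \ref{as:sigma} and \ref{as:MD}(iii) these errors are of order $\max\{h,(nh^2)^{-1}\}\to 0$, and once weighted by the summable covariance structure they can be absorbed into $\beta_i\phi_n$ exactly through the sequence $S_n(i)$ of Lemma \ref{lem:kernel_function_limits} (as already exploited in Lemma \ref{lem:D_results}(iii)). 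The elementary inequality $\sqrt{a^2+b^2}\le a+b$ then produces the claimed bound, with $\sum_i\beta_i<\infty$, $\phi_n\to0$ and $\limsup_n\eta_n<\infty$.
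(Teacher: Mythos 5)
Your proof is essentially correct but follows a genuinely different route from the paper's. The paper (extending Lemma 5 of Jansson, 2002) decomposes the random variable $u_tu_{t+i}-\E u_tu_{t+i}$ itself through the MA($\infty$) representation into a ``diagonal'' part driven by $\varepsilon_{t-j}^2-1$ and an ``off-diagonal'' part driven by $\varepsilon_{t-j}\varepsilon_{t+i-m}$ with $m\neq j+i$; the off-diagonal part is handled in $L_2$ much as you do (the cross-expectations vanish unless $t=s$) and yields the $\eta_n/\sqrt{nh}$ term, while the diagonal part is handled in $L_1$ via a truncation and uniform-integrability argument on $w_t=\varepsilon_t^2-1$, which is exactly what produces the rateless $\beta_i\phi_n$ term with $\beta_i=\sum_j\abs{\psi_j}\abs{\psi_{j+i}}$. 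You instead stay at the level of the product $u_tu_{t+i}$ and pass directly to the second moment, using the Isserlis-plus-fourth-cumulant expansion of $\cov(u_tu_{t+i},u_su_{s+i})$; under $\E\varepsilon_t^4<\infty$ and $\sum_j\abs{\psi_j}<\infty$ all three pieces are absolutely summable in $r=s-t$ uniformly in $i$, and the kernel factor is controlled by Lemma \ref{lem:kernel_sum}, so your route delivers the sharper uniform bound $C/\sqrt{nh}$ and makes the $\beta_i\phi_n$ term superfluous (the stated bound is of course still implied). The paper's more circuitous treatment of the diagonal term is an artifact of following Jansson's argument, which is built for weaker moment conditions; given that Assumption \ref{as:LP} already imposes $\E\varepsilon_t^4<\infty$, your $L_2$ argument is the more economical one.

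One caveat: the step where you invoke Assumption \ref{as:MD}(iv) to ``replace the conditional moments of the $u$'s by their unconditional counterparts'' in the second-moment computation requires $\E(u_tu_{t+i}u_su_{s+i}\mid\{D_t\})=\E\,u_tu_{t+i}u_su_{s+i}$, a fourth-order exogeneity condition, whereas Assumption \ref{as:MD}(iv) only fixes the first two conditional moments of $\{u_t\}$ given the missingness process. Strictly you need independence of $\{D_t\}$ and $\{u_t\}$ (or an explicit fourth-moment analogue of (iv)) for this reduction. The paper's own proof makes the same implicit use of independence when it drops the $D_t$'s from inside expectations of signed products of innovations, so this is a shared bookkeeping issue rather than a defect peculiar to your argument, but it should be stated explicitly.
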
 

\begin{proof}[{\bf Proof of Lemma \ref{lem:lrv}}]
See Supplementary Appendix B.
\end{proof}

\subsection{Pointwise Results}
Lemmas \ref{lem:m_decomp} and \ref{lem:mb_decomp} decompose the (bootstrap) estimator in the relevant components, establishing not only consistency but also providing the necessary building blocks towards asymptotic normality. We therefore present these lemmas and their proofs together with the proofs of Theorems \ref{th:asdis} and \ref{th:basdis}.

\begin{lemma} \label{lem:m_decomp}
Let Assumptions \ref{as:smooth}-\ref{as:bandwidth}  be satisfied. Then, for some $N>0$ and any $0< \delta < 1/2$, we have for all $\tau \in (0,1)$ and $n > N$ that
\begin{equation*}
\sqrt{nh} \left[\hat{m}(\tau) - m(\tau) - h^2 B_{as}(\tau) \right] = p(\tau)^{-1} Z_{n,U} (\tau) + R_{n} (\tau),
\end{equation*}
where $B_{as}(\tau)$ is defined in \eqref{eq:as_pars},
\begin{subequations} 
\begin{align}
&Z_{n,U}(\tau) = \frac{1}{\sqrt{nh}} \sum_{t=1}^n k_t(\tau) D_t \sigma_t u_t, \label{eq:z_u}\\
&\sigma_{Z}^2 (\tau) = \plim_{n\rightarrow \infty} \var \left[ Z_{n,U} (\tau) \left.| \{D_t\}_{t=1}^n \right. \right] = p(\tau) \sigma(\tau)^2 \Omega_U \kappa_2, \label{eq:sigma_Z}\\
&\suptau \norm{R_n (\tau)} \leq C \max\left\{h^2, \sqrt{n h^7}, \frac{1}{\sqrt{nh}} \right\}.
\end{align}
\end{subequations}
\end{lemma}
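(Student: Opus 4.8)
The plan is to treat the Nadaraya--Watson estimator as a ratio and linearise it. Writing $\hat m(\tau) = A_n(\tau)/B_n(\tau)$ with numerator $A_n(\tau) = (nh)^{-1}\sum_{t=1}^n k_t(\tau) D_t y_t$ and denominator $B_n(\tau) = (nh)^{-1}\sum_{t=1}^n k_t(\tau) D_t$, I would substitute $y_t = m(t/n) + \sigma_t u_t$ so that the $\sigma_t u_t$ part of $A_n$ is exactly $Z_{n,U}(\tau)/\sqrt{nh}$, with $Z_{n,U}$ as in \eqref{eq:z_u}. The deterministic-regression piece of $A_n$ and the whole of $B_n$ are then handled by Lemma \ref{lem:D_results}(ii), applied with $f=m$ and with $f\equiv 1$ respectively. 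This yields $A_n(\tau) = p(\tau)m(\tau) + \tfrac12 h^2\mu_2[mp]^{(2)}(\tau) + Z_{n,D,m,0}(\tau)/\sqrt{nh} + Z_{n,U}(\tau)/\sqrt{nh} + R^A_{n,D}(\tau)$ and $B_n(\tau) = p(\tau) + \tfrac12 h^2\mu_2 p^{(2)}(\tau) + Z_{n,D,1,0}(\tau)/\sqrt{nh} + R^B_{n,D}(\tau)$, where the remainders satisfy $\sup_\tau|R^{A}_{n,D}(\tau)|, \sup_\tau|R^{B}_{n,D}(\tau)| \le C\max\{h^3, 1/(nh)\}$.

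Next I would expand $\hat m(\tau) - m(\tau) = [A_n(\tau) - m(\tau)B_n(\tau)]/B_n(\tau)$. The leading terms $p(\tau)m(\tau)$ cancel, the $O(h^2)$ terms from numerator and denominator combine, after division by the leading denominator $p(\tau)$, into the asymptotic bias $h^2 B_{as}(\tau)$ of \eqref{eq:as_pars}, and the two missing-data fluctuation terms merge: $Z_{n,D,m,0}(\tau) - m(\tau)Z_{n,D,1,0}(\tau)$ is precisely the quantity $Z_{n,D,f,q}$ of \eqref{eq:Z_D} with $f=m$ and $q_n = m$, so its summand carries the factor $m(t/n)-m(\tau)$. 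The crucial observation is that $m(t/n)-m(\tau)=O(h)$ on the kernel support, so by Lemma \ref{lem:D_results}(i) this term has vanishing leading variance, and a direct bound combining this $O(h)$ smallness with the summable covariances of $\{D_t\}$ (Lemma \ref{lem:cov}) shows it is asymptotically negligible; it contributes only to the remainder. The surviving stochastic term is $p(\tau)^{-1}Z_{n,U}(\tau)/\sqrt{nh}$, which after multiplication by $\sqrt{nh}$ gives the claimed leading term $p(\tau)^{-1}Z_{n,U}(\tau)$.

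To control the denominator I would use that $B_n(\tau) = p(\tau) + o_p(1)$ is bounded away from zero (Assumption \ref{as:MD}(ii) gives $p \ge \epsilon$, and Remark \ref{rem:MD_eps} guarantees a lower bound on $\hat p$), so $1/B_n(\tau) = p(\tau)^{-1}[1 - (B_n(\tau)-p(\tau))/p(\tau) + \cdots]$ admits a convergent Neumann expansion. Collecting all products, the remainder $R_n(\tau)$ is assembled from the Taylor remainders $R^{A}_{n,D}, R^{B}_{n,D}$, which after scaling by $\sqrt{nh}$ contribute $\max\{\sqrt{nh^7}, 1/\sqrt{nh}\}$; the interaction of $Z_{n,U}$ with the $O(h^2)$ denominator bias, contributing $O(h^2)$; and the second-order stochastic products such as $Z_{n,U}Z_{n,D,1,0}/\sqrt{nh}$, contributing $O(1/\sqrt{nh})$ (bounded by Cauchy--Schwarz, using the fourth moments granted by Assumption \ref{as:LP}). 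Separately, I would verify the conditional-variance claim \eqref{eq:sigma_Z}: writing $\var_D Z_{n,U}(\tau) = (nh)^{-1}\sum_i \sum_t k_t(\tau)k_{t+i}(\tau)\sigma_t\sigma_{t+i}D_tD_{t+i}\,\E u_t u_{t+i}$ (using Assumption \ref{as:MD}(iv) to replace $\E_D u_t u_{t+i}$ by $R_U(i)$), Lemma \ref{lem:lrv} lets me replace $u_t u_{t+i}$ by $R_U(i)$ up to a negligible term, Lemma \ref{lem:D_results}(iii) sends each lag-$i$ kernel average to $p(\tau)\sigma(\tau)^2\kappa_2$, and summing over $i$ with the absolute summability from Lemma \ref{lem:cov} produces $p(\tau)\sigma(\tau)^2\Omega_U\kappa_2$.

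The main obstacle is the uniform-in-$\tau$ accounting of the remainder at the exact rates in the statement. Each cross-product must be measured in $L_2$ uniformly over $\tau\in[\delta,1-\delta]$, which forces me to lean on the sup-norm estimates in Lemmas \ref{lem:kernel_sum}, \ref{lem:kernel_function_limits} and \ref{lem:D_results} rather than their pointwise versions, and to interchange the summation over lags $i$ with the expectations, justified by the summability in Lemma \ref{lem:cov} together with the mixingale structure of Assumption \ref{as:MD}(i). The most delicate point is the missing-data fluctuation term: its negligibility rests on combining the vanishing leading variance from Lemma \ref{lem:D_results}(i) with the $O(h)$ bound on $m(t/n)-m(\tau)$, and fitting the resulting estimate inside the stated remainder order while keeping everything uniform is where the bulk of the careful work lies.
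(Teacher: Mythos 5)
Your strategy is in substance the same as the paper's: write the estimator as a ratio, isolate $Z_{n,U}(\tau)$ from the stochastic part of the numerator, expand the deterministic parts of numerator and denominator via Lemma \ref{lem:D_results}, and reduce the missing-data fluctuation to a centered quantity of the form \eqref{eq:Z_D}. The differences are cosmetic: the paper uses the exact two-term identity \eqref{eq:p_inv} for $\hat{p}^{-1}-p^{-1}$ rather than a Neumann expansion, and it centers the $D_t$-fluctuation at $q_n=p^{-1}\overline{m}_p$ rather than at $m(\tau)$; the two centerings differ by $[m(\tau)-p(\tau)^{-1}\overline{m}_p(\tau)]Z_{n,D,1,0}(\tau)=O(h^2)$ in $L_2$, so this is immaterial.

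Two points in your accounting do not close as written. First, the bias: carrying out your cancellation $[A_n-mB_n]/B_n$ literally produces the standard Nadaraya--Watson constant $\tfrac{1}{2}\mu_2 p(\tau)^{-1}\left\{[mp]^{(2)}(\tau)-m(\tau)p^{(2)}(\tau)\right\}$, which is not the expression $\mu_2 p(\tau)^{-1}[mp]^{(2)}(\tau)$ in \eqref{eq:as_pars}; you assert the match without showing it. The paper reaches its form because in its decomposition the denominator's deterministic $O(h^2)$ term is not folded into the leading bias, so if you follow your route you must either reconcile the two expressions explicitly or restate which constant you are proving the identity for. (This has no downstream consequence, since the same constant reappears in Lemma \ref{lem:mb_decomp} and cancels in the bootstrap, but the displayed identity must be internally consistent.) Second, the rate: your ``direct bound'' on $\frac{1}{\sqrt{nh}}\sum_t k_t(\tau)[m(t/n)-m(\tau)][D_t-p(t/n)]$ uses $|m(t/n)-m(\tau)|\leq Ch$ on the kernel support together with Lemma \ref{lem:cov}, which yields an $L_2$ norm of order $h$, not $h^2$. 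Since $h$ can exceed $\max\{h^2,\sqrt{nh^7},1/\sqrt{nh}\}$ for admissible bandwidths (e.g.\ $h\sim n^{-3/10}$), your remainder bound is $O(h)$ rather than the stated order; the paper instead invokes Lemma \ref{lem:D_results}$(i)$, under which the leading variance term is $[f(\tau)-q_n(\tau)]^2\Omega_D(\tau)\kappa_2=O(h^4)$, to claim the sharper rate. Again this is harmless for Theorems \ref{th:asdis}--\ref{th:uniform} because $O(h)=o(1)$, but as written you prove a weaker remainder bound than the lemma states.
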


\begin{proof}[{\bf Proof of Lemma \ref{lem:m_decomp}}]
Defining 
\begin{align*}
\hat{p} (\tau) &= \frac{1}{nh}\sum_{t=1}^n k_t(\tau) D_t,
& \breve{m} (\tau) &= \frac{1}{nh} \sum_{t=1}^n k_t(\tau) D_t y_t, \\
\overline{m}_{D} (\tau) &= \frac{1}{nh} \sum_{t=1}^n k_t(\tau) D_t m(t/n) & \overline{m}_p (\tau) &= \frac{1}{nh} \sum_{t=1}^n k_t(\tau) p(t/n) m(t/n),
\end{align*}
and realizing that $Z_{n,U} (\tau) = \sqrt{nh} \left[\breve{m} (\tau) - \overline{m}_{D} (\tau) \right]$, we can write
\begin{equation} \label{eq:m_decomp}
\begin{split}
&\sqrt{nh} \left[\hat{m}(\tau) - m(\tau) - h^2 B_{as} (\tau) \right] = p(\tau)^{-1} Z_{n,U}(\tau) + [\hat{p} (\tau)^{-1} - p(\tau)^{-1}] Z_{n,U} (\tau) / \sqrt{nh} \\
&\quad +\sqrt{nh}\left\{\hat{p} (\tau)^{-1} \overline{m}_{D} (\tau) - p(\tau)^{-1} \overline{m}_p (\tau)\right\} + \sqrt{nh}\left\{p(\tau)^{-1}\overline{m}_p (\tau) - m(\tau) - h^2 B_{as} (\tau) \right\}\\
&= p(\tau)^{-1} Z_{n,U}(\tau) + I_{n}(\tau) + II_{n}(\tau) + III_n(\tau).
\end{split}
\end{equation}

We first derive $\sigma_Z^2 (\tau)$. Let $\E_D (\cdot) = \E (\cdot | \{D_t\}_{t=1}^n)$. Then $\E_D Z_{n,U}(\tau) = 0$ and
\begin{align*}
\E_D Z_{n,U}(\tau)^2 &= \frac{1}{nh} \sum_{s=1}^{n} \sum_{t=1}^{n} k_s(\tau) k_t(\tau) D_s D_t \sigma_s \sigma_t \E u_s u_t \\
&= \frac{1}{nh} \sum_{i=-n+1}^{n-1} R_U(i) \sum_{t=1}^{n-\abs{i}} k_t(\tau) k_{t+\abs{i}}(\tau) \sigma_t \sigma_{t+\abs{i}} D_t D_{t+\abs{i}}.
\end{align*}
Using Lemma \ref{lem:D_results}(iii) with $\tilde g_i (\tau_1, \tau_2) = \sigma(\tau_1) \sigma(\tau_2)$, we find that
\begin{align*}
&\suptau \E \abs{\sum_{i=-n+1}^{n-1} R_U(i) \frac{1}{nh} \sum_{t=1}^{n-\abs{i}} k_t(\tau) k_{t+\abs{i}}(\tau) \sigma_t \sigma_{t+\abs{i}} D_t D_{t+\abs{i}} - p(\tau) \sigma(\tau)^2 \kappa_2 \Omega_U}\\
&\leq 2 \sum_{i=0}^{n-1} \abs{R_U(i)} \left[\frac{C_1}{\sqrt{nh}}  + C_2 \max \left\{h, \frac{1}{n h^2} \right\} + S_n(i) \right] + 2 \sum_{i=n}^\infty \abs{R_U(i)},\\
& \leq \frac{C_3}{\sqrt{nh}} + C_4 \max\left\{h, \frac{1}{n h^2} \right\} + C_5 \phi_n.
\end{align*}
where $\lim_{n \rightarrow \infty} \phi_n = 0$, from which it follows that, for $\tau \in (0,1)$, $\E\abs{\E_D Z_{n,U} (\tau)^2 - \sigma_{Z}^2 (\tau)} \leq C \max\{(nh)^{-1/2}, h, n^{-1} h^2, \phi_n \} = o(1)$, where $\sigma_{Z}^2 (\tau)$ is defined in \eqref{eq:sigma_Z}.

Furthermore, by the arguments used above it follows that $\suptau \norm{Z_{n,U}(\tau)} \leq C$, while by Lemma \ref{lem:D_results}$(i)$
\begin{equation*}
\norm{\hat{p}(\tau) - p(\tau)} \leq \frac{1}{\sqrt{nh}} \norm{Z_{n,D} (\tau)} + C_1 h^2 + C_2 \max\left\{h^3, \frac{1}{nh} \right\} \leq C_3 \max\left\{\frac{1}{\sqrt{nh}}, h^2 \right\}.
\end{equation*}
By combining these results with the fact that $\norm{\hat{p} (\tau)}^{-1} \leq 1/\epsilon^*$ by Assumption \ref{as:bandwidth}, we find that
\begin{align*}
\suptau \norm{I_{n}(\tau)} &\leq \suptau\norm{\hat{p}(\tau) - p(\tau)} \norm{\hat{p}(\tau)^{-2}} \abs{ p(\tau)^{-2}} \norm{Z_{n,U} (\tau)}\leq C_4 \max\left\{\frac{1}{\sqrt{nh}}, h^2 \right\}.
\end{align*}

For $II_n(\tau)$ we note that
\begin{equation} \label{eq:p_inv}
\hat{p}(\tau)^{-1} - p(\tau)^{-1} = p(\tau)^{-2} \left\{p(\tau) - \hat{p}(\tau)\right\} + \hat{p}(\tau)^{-1} p(\tau)^{-2} \left\{p(\tau) - \hat{p}(\tau) \right\}^2,
\end{equation}
such that we can rewrite $II_{n}(\tau)$ as
\begin{align*}
II_n(\tau) / \sqrt{nh} &= p (\tau)^{-1} \left\{\overline{m}_{D} (\tau) - \overline{m}_p (\tau)\right\} + \left\{\hat{p} (\tau)^{-1} - p(\tau)^{-1}\right\} \overline{m}_p (\tau)\\
&\quad + \left\{\hat{p} (\tau)^{-1} - p(\tau)^{-1}\right\} \left\{ \overline{m}_{D} (\tau) - \overline{m}_p (\tau) \right\}\\
&= p (\tau)^{-1} \left\{\overline{m}_{D} (\tau) - \overline{m}_p (\tau)\right\} - p(\tau)^{-2} \left\{\hat{p}(\tau) - p(\tau)\right\} \overline{m}_p (\tau) \\
&\quad + \hat{p}(\tau)^{-1} p(\tau)^{-2} \left\{\hat{p}(\tau) - p(\tau) \right\}^2 \overline{m}_p (\tau) + \left\{\hat{p} (\tau)^{-1} - p(\tau)^{-1}\right\} \left\{ \overline{m}_{D} (\tau) - \overline{m}_p (\tau) \right\}\\
& = \left[II_{n,11}(\tau) - II_{n,12}(\tau) + II_{n,2}(\tau) + II_{n,3}(\tau) \right]/\sqrt{nh}.
\end{align*}
As $II_{n,11} (\tau) = p(\tau)^{-1} \frac{1}{\sqrt{nh}} \sum_{t=1}^n k_t(\tau) m(t/n) \left[ D_t  - p(t/n) \right]$ and $II_{n,12} (\tau) = p(\tau)^{-2} \overline{m}_p(\tau) \frac{1}{\sqrt{nh}} \sum_{t=1}^n k_t(\tau) \allowbreak \times \left[ D_t  - p(t/n) \right]$, we can write
\begin{align*}
II_{n,1}(\tau) &= II_{n,11}(\tau) - II_{n,12}(\tau) = p(\tau)^{-1} \frac{1}{\sqrt{nh}} \sum_{t=1}^n k_t(\tau) \left[m(t/n) - p(\tau)^{-1} \overline{m}_p(\tau) \right] \left[ D_t  - p(t/n) \right]\\
&= p(\tau)^{-1} Z_{n,D} (\tau),
\end{align*}
where $Z_{n,D} (\tau)$ is defined as $Z_{n,D,f,q} (\tau)$ in Lemma \ref{lem:D_results}$(i)$ with $f(\cdot) = m(\cdot)$ and $q_n (\cdot) = p(\cdot)^{-1} \overline{m}_p (\cdot)$. Applying this lemma we find that
\begin{align*}
&\suptau \abs{\E Z_{n,D} (\tau)^2 - \left[m(\tau) - p(\tau)^{-1} \overline{m}_p(\tau) \right]^2 \Omega_{D} (\tau) \kappa_2} \leq C \max\{h, n^{-1} h^2\} + \phi_n.
\end{align*}
As $\abs{x^2 - y^2} \leq \abs{x - y} (\abs{x - y} + 2 \abs{y})$, it follows from Lemma \ref{lem:kernel_function_limits}$(i)$ that
\begin{align*}
\abs{\left[m(\tau) - p(\tau)^{-1} \overline{m}_p(\tau) \right]^2 - h^4 B_{as} (\tau)^2} &\leq C \max\{h^3, (nh)^{-1}\} \left[C \max\{h^3, (nh)^{-1}\} + 2 h^2 B_{as} (\tau) \right] \\
& \leq C_1 \max\{h^5, n^{-1} h, (nh)^{-2}\},
\end{align*}
and therefore, with $\sigma_D^2(\tau) = h^4 B_{as} (\tau)^2 \Omega_D (\tau) \kappa_2$, 
\begin{align*}
&\suptau \abs{\E Z_{n,D} (\tau)^2 - \sigma_D^2(\tau)} \leq C_1 \max\{h^5, n^{-1} h, (nh)^{-2}\} + C_2 \phi_n.
\end{align*}
It then follows directly that 
\begin{align*}
\suptau \norm{ II_{n,1} (\tau)} &\leq \frac{1}{\epsilon^*} \left( \suptau \abs{\E Z_{n,D} (\tau)^2 - \sigma_D^2(\tau)}^{1/2} + \suptau \abs{ \sigma_D(\tau)} \right) \leq C \max\left\{h^{2}, \sqrt{\frac{h}{n}}, \frac{1}{nh} \right\}.
\end{align*}

As $\suptau \norm{\hat{p}(\tau) - p(\tau)} \leq C\max\left\{h^2, \frac{1}{\sqrt{nh}} \right\}$, it follows that $\suptau \norm{II_{n,2}(\tau)} \leq C \max \left\{\sqrt{n h^{9}}, \frac{1}{\sqrt{nh}} \right\}$. Furthermore, by Lemma \ref{lem:D_results}$(i)$, $\norm{\overline{m}_{D} (\tau) - \overline{m}_p(\tau)} \leq \frac{1}{\sqrt{nh}} \norm{Z_{n,D,m,0}(\tau)} \leq \frac{C}{\sqrt{nh}}$, such that it also follows that
\begin{equation*}
\suptau \norm{II_{n,3} (\tau)} \leq C \suptau \norm{\hat{p}(\tau) - p(\tau)} \norm{\overline{m}_{|D} (\tau) - \overline{m}_p(\tau)} \leq C \max\left\{h^2, \frac{1}{\sqrt{nh}} \right\}.
\end{equation*}
Finally, it follows directly from Lemma \ref{lem:kernel_function_limits}$(i)$ that $\suptau \abs{ III_n (\tau)} \leq C \max\left\{\sqrt{n h^7}, \frac{1}{\sqrt{nh}} \right\}$. Collecting all remainder terms in $R_n (\tau) = I_{n,1}(\tau) + II_{n,2} (\tau) + II_{n,3} (\tau) + III_{n} (\tau)$, we find that $\suptau \norm{R_n (\tau)} \leq C \max\left\{h^2, \sqrt{n h^7}, \frac{1}{\sqrt{nh}} \right\}$.
\end{proof}

\begin{proof}[{\bf Proof of Theorem \ref{th:asdis}}]
Given Lemma \ref{lem:m_decomp}, we only have to prove asymptotic normality of $Z_{n,U} (\tau) = \frac{1}{\sqrt{nh}}\sum_{t=1}^n k_t(\tau) D_t \sigma_t u_t$. To simplify the proofs, we first condition on $\{D_t\}_{t=1}^n$ and thus prove conditional asymptotic normality of $Z_{n,U} (\tau) = \frac{1}{\sqrt{nh}}\sum_{t=1}^n k_t(\tau) D_t \sigma_t u_t$. As before, let $\E_D (\cdot) = \E (\cdot|\{D_t\}_{t=1}^{\infty})$. As the limit results do not depend on $\{D_t\}_{t=1}^n$, the results then directly hold unconditionally as well. 

Take an $M$ such that $M\rightarrow\infty$ as $n\rightarrow\infty$, and truncate the MA($\infty$) representation of $u_t$ at $M$ lags, say $u_{t,M} = \sum_{j=1}^{M} \psi_j\epsilon_{t-j}$. Then we can write
\begin{align*}
Z_n (\tau) &= \frac{1}{\sqrt{nh}} \sum_{t=1}^{n} k_t(\tau) D_t \sigma_t u_{t,M} + \frac{1}{\sqrt{nh}} \sum_{t=1}^{n} k_t(\tau) D_t \sigma_t \sum_{j=M+1}^{\infty} \psi_j\epsilon_{t-j}\\
&= \overline{Z}_{n,M} (\tau) + \overline{W}_{n,M} (\tau).
\end{align*}
Let $R_W(k) = \E \left(\sum_{j=M+1}^{\infty} \psi_j\epsilon_{t-j} \right) \left(\sum_{j=M+1}^{\infty} \psi_j\epsilon_{t+k-j} \right) = \sigma_\varepsilon^2 \sum_{j=M+1}^{\infty}\psi_j\psi_{j+|k|}$. Then
\begin{align*}
\E_D \overline{W}_{n,M}^2 &= (nh)^{-1}\sum_{t=1}^n\sum_{s=1}^n D_t D_s \sigma_t\sigma_sk_t(\tau)k_s(\tau)R_{W}(t-s)\\
&\leq 2 (nh)^{-1} \sup_{\tau \in [0,1]} \sigma(\tau)^2 \sum_{i=0}^n \sum_{t=1}^{n-i} k_t (\tau) k_{t+i} (\tau) \abs{R_{W}(i)}\\
&\leq 2 C \sup_{\tau \in [0,1]} \sigma(\tau)^2 \sum_{i=0}^\infty \abs{\sum_{j=M+1}^{\infty}\psi_j\psi_{j+i}}
\leq 2 C \sup_{\tau \in [0,1]} \sigma(\tau)^2 \left(\sum_{j=M+1}^{\infty} \abs{\psi_j}  \right)^2 = o(M^{-2}),
\end{align*}
which follows as $0 \leq D_t D_{t+i} \leq 1$, by Lemma \ref{lem:kernel_sum} and the fact that the summability condition in Assumption \ref{as:LP} implies that $\sum_{j=M+1}^{\infty} \abs{\psi_j} = o(M^{-1})$. As $\E_D \overline{W}_{n,M} = 0$, the Markov inequality implies that the truncation is asymptotically negligible.

We next split $\overline{Z}_{n,M}(\tau)$ into two sequences of blocks: one with small, negligible blocks $Y_{n,j}(\tau)$ and one with dominating blocks $X_{n,j}(\tau)$. Define $B_j = (j-1)(a+b)$, then
\begin{align*}
&X_{n,j} (\tau) = \frac{1}{\sqrt{nh}} \sum_{t = B_j + 1}^{B_j + a} k_t(\tau) D_t \sigma_t u_{t,M}, 
&Y_{n,j}(\tau) = \frac{1}{\sqrt{nh}} \sum_{t = B_j + a + 1}^{B_j + a + b} k_t(\tau) D_t \sigma_t u_{t,M},
\end{align*}
such that $\overline{Z}_{n,M}(\tau) = \sum_{j=1}^k X_{n,i}(\tau)+\sum_{i=1}^k Y_{n,i}(\tau)$, where $k = \lceil n / (a+b) \rceil$, and the final block is truncated to have $n$ observations in total. Now take sequences $a = a(n)$ and $b = b(n) \rightarrow \infty$ such that $a/(n h) + M/a \rightarrow 0$ and $b/a + M/b \rightarrow 0$ as $n \rightarrow \infty$.

We first show that the small blocks are asymptotically negligible. First note that $\E_D \left(\sum_{i=1}^k Y_{n,i}(\tau)\right)=0$. Consider $n$ large enough such that $a>M$ and the blocks $Y_{n,i}$ are mutually independent conditionally on $\{D_t\}_{t=1}^n$. Then, with $R_M(i) = \E u_{t,M} u_{t+i,M}$, where $\sum_{i=0}^\infty \abs{R_M(i)}<\infty$ by Assumption \ref{as:LP}, we have that
\begin{align*}
&\E_D \left(\sum_{j=1}^k Y_{n,j}(\tau)\right)^2 = \sum_{j=1}^k \E_D Y_{n,j}(\tau)^2 = \frac{1}{nh} \sum_{j=1}^k \sum_{s, t = B_j + a + 1}^{B_j + a + b} k_s(\tau) k_t(\tau) D_s D_t \sigma_s \sigma_t R_M(s-t) \\
& \leq 2 C \frac{1}{nh} \sum_{i=0}^{b-1} \abs{R_M(i)} \sum_{j=1}^k \sum_{t=B_j + a + 1}^{B_j + a + b - i} k_{t}(\tau) k_{t+i}(\tau) \leq C_1 \frac{1}{nh} \max_{0\leq i \leq b-1} \sum_{j=1}^k \sum_{t=B_{i} + a + 1}^{B_{i} + a + b - i} k_{t}(\tau) k_{t+i}(\tau) \\
&\leq C_2 \frac{k b h}{nh} \leq C_3 \frac{b}{a} = o(1),
\end{align*}
where we use that $\sum_{j=1}^k \sum_{t=B_{i} + a + 1}^{B_{i} + a + b - i} k_{t}(\tau) k_{t+i}(\tau) \leq C k b h $ and $k\sim an$.

We next employ the Lindeberg central limit theorem \citep[see e.g.][Thm 23.6]{Davidson} to show that $\sum_{j=1}^k X_{n,j}(\tau) \xrightarrow{d}\mathcal{N}(0, p(\tau)^2 \sigma^2_{as}(\tau))$. Consider again $n$ sufficiently large such that such that $b>M$ and the blocks $X_{n,i}(\tau)$ are conditionally independent. Then $\E_D \sum_{j=1}^k X_{n,j}(\tau) = 0$ and
\begin{align*}
\E_D\left(\sum_{j=1}^k X_{n,j}(\tau) \right)^2 &= \sum_{j=1}^k \E_D X_{n,j}(\tau)^2 = \frac{1}{nh} \sum_{i=1}^k \sum_{s, t=B_j + 1}^{B_j + a} k_s(\tau) k_t(\tau) D_s D_t \sigma_s \sigma_t R_M(s-t)\\
&= \frac{1}{nh} \sum_{j=1}^k \sum_{i=-a+1}^{a-1} R_M(i) \sum_{t=B_j + 1}^{B_j + a - \abs{i}} k_t(\tau) k_{t+\abs{i}} (\tau) \sigma_t \sigma_{t+\abs{i}} D_t D_{t+\abs{i}}.
\end{align*}
As $M \rightarrow \infty$, $R_M(k) \rightarrow R_U(k)$. A straightforward extension of Lemma \ref{lem:D_results} then shows that $\E_D\left(\sum_{j=1}^k X_{n,j}(\tau) \right)^2 \xrightarrow{p} p(\tau) \sigma(\tau)^2 \Omega \kappa_2$.

The final step is to verify the Lindeberg condition, that is, we verify that, for every $\kappa>0$, $\sum_{j=1}^k \E_D \left[\frac{X_{n,j}(\tau)^2} {\omega_n^2} \mathbbm{1} \left( \abs{\frac{X_{n,i}(\tau)} {\omega_n}} >\kappa \right) \right] = o_p(1)$, with $\omega_n^2 = \E_D \left(\sum_{j=1}^k X_{n,j}(\tau) \right)^2$. Note that
\begin{align*}
&\sum_{j=1}^k \E_D \left[\frac{X_{n,j}(\tau)^2}{\omega_n^2} \mathbbm{1} \left( \abs{\frac{X_{n,i}(\tau)} {\omega_n}} >\kappa \right) \right] \leq \frac{1}{\kappa^2 \omega_n^{4}} \sum_{j=1}^k \E_D X_{n,j}(\tau)^{4}
\end{align*}
and define $\tilde{X}_t (\tau) = D_t k_t(\tau) \sigma_t u_{t,M}$, which, due to the $M$-dependence of $u_{t,M}$, is an $L_{4}$-mixingale with $\E_D \tilde X_t^{4} \leq C k_t(\tau)^{4} \E u_{t,M}^{4}$, where it follows from Minkowski's inequality that
\begin{align} \label{eq:u4}
\E u_{t,M}^{4} &\leq \left(\sum_{j=0}^M \left(\E\left|\psi_j\varepsilon_{t-j}\right|^{4}\right)^{1/4}\right)^{4} \leq C \E \varepsilon_t^{4} \left(\sum_{j=0}^M \psi_j \right)^{4} < \infty,
\end{align}
by stationarity of $\{\varepsilon_t\}$ and $\E \varepsilon_t^{4} < \infty$. Lemma 2 of \citet{Hansen91} then implies that
\begin{equation*}
\E_D X_{n,j}(\tau)^{4} = \E_D \abs{\frac{1}{\sqrt{nh}} \sum_{t=B_j + 1}^{B_j + a} \tilde X_t (\tau)}^{4} \leq C \frac{1}{(nh)^2} \left(\sum_{t=B_j + 1}^{B_j + a} k_t(\tau)^{2} \right)^{2}.
\end{equation*}
As the blocks are non-overlapping, it follows from the properties of the kernel function and the $c_r$-inequality that
\begin{equation*}
\sum_{j=1}^k \left(\sum_{t=B_j + 1}^{B_j + a} k_t(\tau)^{2} \right)^{2} \leq a \sum_{j=1}^k \sum_{t=B_j + 1}^{B_j + a} k_t(\tau)^{4} \leq C nh a.
\end{equation*}
Therefore, with $\omega_n^{-2} = O(1)$ and $a = o(nh)$, $\sum_{i=1}^k \E_D \left[\frac{X_{n,i}(\tau)^2}{\omega_n^2} \mathbbm{1} \left( \abs{\frac{X_{n,i}(\tau)} {\omega_n}} >\kappa \right)\right] \leq \frac{Ca}{\omega_n^{4}nh} = o(1)$.
\end{proof}

\begin{lemma} \label{lem:mb_decomp}
Let Assumptions \ref{as:smooth}-\ref{as:bandwidth2} hold. Then, for any $0 < \delta < \delta^* < \frac{1}{2}$ and some $N>0$, it holds for all $\tau \in [\delta^*,1-\delta^*]$ and $n>N$ that
\begin{equation*}
\sqrt{nh}\left[\hat{m}^*(\tau) - \tilde{m}(\tau) - h^2 B_{as}(\tau) \right] = p(\tau)^{-1} Z_{n,U}^* (\tau) + R_n^*(\tau),
\end{equation*}
where
\begin{subequations} 
\begin{align}
&Z_{n,U}^*(\tau) = \frac{1}{\sqrt{nh}}\dsum{t} k_t(\tau) D_t z_t \xi_t^* \label{eq:zstar}\\
&\plim_{n \rightarrow \infty} \var^* Z_{n,U}^*(\tau) = \sigma_{Z}^2(\tau) \label{eq:bvar}\\
&\suptaus \E \norm{R_n^* (\tau)}^* \leq C \max\left\{\frac{1}{\sqrt{nh}}, \sqrt{n h^7}, \sqrt{h \tilde{h}^{-1}}, \sqrt{n h^{5} \tilde{h}^4}, \tilde{h}^{2}, \sqrt{\ell\tilde{h}^4}, \sqrt{\frac{\ell}{n\tilde{h}}} \right\},
\end{align}
\end{subequations}
and $B_{as} (\tau)$ and $\sigma_Z^2(\tau)$ are defined in \eqref{eq:as_pars} and \eqref{eq:sigma_Z} respectively.
\end{lemma}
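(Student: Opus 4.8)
The plan is to mirror the decomposition used in the proof of Lemma \ref{lem:m_decomp}, but now in the bootstrap world, where the only source of randomness conditional on the data is $\{\xi_t^*\}$. Writing $w_{t,n}(\tau) = k_t(\tau) D_t / (nh\,\hat p(\tau))$ and using $D_t y_t^* = D_t[\tilde m(t/n) + z_t^*]$ together with $D_t z_t^* = D_t\xi_t^*\hat z_t$, I would first split
\begin{equation*}
\hat m^*(\tau) - \tilde m(\tau) = \Big[\sum_{t=1}^n w_{t,n}(\tau)\tilde m(t/n) - \tilde m(\tau)\Big] + \hat p(\tau)^{-1}\frac{1}{nh}\sum_{t=1}^n k_t(\tau) D_t\hat z_t\xi_t^*,
\end{equation*}
a deterministic-given-data ``bias'' part and a ``stochastic'' part, exactly as anticipated in Remark \ref{rem:oversmoothing}. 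The goal is to show the first bracket equals $h^2 B_{as}(\tau)$ up to remainder, and that the second part equals $p(\tau)^{-1}Z_{n,U}^*(\tau)/\sqrt{nh}$ up to remainder.

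For the stochastic part I would use $D_t\hat z_t = D_t z_t - D_t[\tilde m(t/n) - m(t/n)]$ to write $\sqrt{nh}$ times it as $\hat p(\tau)^{-1}(Z_{n,U}^*(\tau) - \Delta_n(\tau))$, where $\Delta_n(\tau) = (nh)^{-1/2}\sum_t k_t(\tau)D_t[\tilde m(t/n)-m(t/n)]\xi_t^*$. Replacing $\hat p^{-1}$ by $p^{-1}$ as in Lemma \ref{lem:m_decomp} (the error being controlled by $\norm{\hat p - p}$ and the fact that $\var^* Z_{n,U}^*(\tau) = O_p(1)$) produces the leading term $p(\tau)^{-1}Z_{n,U}^*(\tau)$. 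The term $\Delta_n$ is the price of using residuals $\hat z_t$ rather than the true errors; its bootstrap variance is $(nh)^{-1}\sum_{s,t}k_sk_tD_sD_t[\tilde m(s/n)-m(s/n)][\tilde m(t/n)-m(t/n)]\gamma^{|s-t|}$, and since $\sum_i\gamma^{|i|} = (1+\gamma)/(1-\gamma) = O(\ell)$, while $\sum_t|k_t k_{t+i}| = O(nh)$ by Lemma \ref{lem:kernel_sum} and $\suptaus\norm{\tilde m(\tau)-m(\tau)} = O(\max\{\tilde h^2, 1/\sqrt{n\tilde h}\})$ by Lemma \ref{lem:m_decomp} applied with $\tilde h$ in place of $h$, this yields $\E\norm{\Delta_n(\tau)}^* = O(\sqrt{\ell\tilde h^4} + \sqrt{\ell/(n\tilde h)})$, matching two of the remainder terms.

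For the bias part I would plug in the expansion $\tilde m(t/n) = m(t/n) + \tilde h^2 B_{as}(t/n) + Z_n(t/n)/\sqrt{n\tilde h} + o_p(1/\sqrt{n\tilde h})$ and treat the three resulting smoothing differences separately. Smoothing $m$ reproduces the bias of the original estimator, contributing $h^2 B_{as}(\tau)$ (which cancels the term subtracted in the statement) plus remainders of order $\sqrt{nh^7}$ and $1/\sqrt{nh}$ after multiplying by $\sqrt{nh}$, via Lemma \ref{lem:kernel_function_limits}(i) and Lemma \ref{lem:D_results}. The $\tilde h^2 B_{as}$ piece cancels the bias present in $\tilde m(\tau)$ itself and leaves the smoothing bias of $B_{as}$, of order $\sqrt{nh^5\tilde h^4}$; the stochastic $Z_n$ piece is of order $\sqrt{h/\tilde h}$, because the prefactor $\sqrt{nh}/\sqrt{n\tilde h} = \sqrt{h/\tilde h}\to 0$ while the smoothing difference of the slowly varying $Z_n$ is $O_p(1)$ — this is precisely where oversmoothing is needed. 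The residual $\tilde h^2$ and lower-order terms come from the remainder in the $\tilde h$-expansion of $\tilde m$ and the quadratic part of the $\hat p^{-1}$ expansion \eqref{eq:p_inv}.

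The main step, and the main obstacle, is the bootstrap variance \eqref{eq:bvar}. Using $\E^*\xi_t^* = 0$ and $\cov^*(\xi_s^*,\xi_t^*) = \gamma^{|s-t|}$ for the stationary AR(1), I obtain
\begin{equation*}
\var^* Z_{n,U}^*(\tau) = \frac{1}{nh}\sum_{i=-(n-1)}^{n-1}\gamma^{|i|}\sum_{t=1}^{n-|i|}k_t(\tau)k_{t+|i|}(\tau)\sigma_t\sigma_{t+|i|}D_tD_{t+|i|}\,u_t u_{t+|i|}.
\end{equation*}
I would show this converges in probability to $\sigma_Z^2(\tau) = p(\tau)\sigma(\tau)^2\Omega_U\kappa_2$ by combining three ingredients: Lemma \ref{lem:lrv} bounds the deviation of $u_t u_{t+i}$ from $R_U(i)$ with summable constants; Lemma \ref{lem:D_results}(iii) handles the kernel-weighted missing-data averaging so that the inner sum behaves like $p(\tau)\sigma(\tau)^2\kappa_2 R_U(i)$; and dominated convergence (using $|\gamma^{|i|}|\le 1$, $\gamma^{|i|}\to 1$ pointwise as $\gamma\to1$, and $\sum_i|R_U(i)|<\infty$) gives $\sum_i\gamma^{|i|}R_U(i)\to\Omega_U$. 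This last point is exactly where the AWB behaves like a lag-window long-run-variance estimator, and where $\gamma\to1$ (i.e.\ $\ell\to\infty$) is essential. The crux is balancing the two opposing demands on the window width $\sim\ell$: it must be wide enough to recover the full $\Omega_U$, yet, by Assumption \ref{as:ell}, narrow relative to $\sqrt{nh}$ so that the random fluctuations of $u_t u_{t+i}$ accumulated over $\sim\ell$ lags — controlled precisely by Lemma \ref{lem:lrv} — remain asymptotically negligible.
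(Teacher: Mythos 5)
Your proposal is correct and follows essentially the same route as the paper's proof: the same split into a stochastic part (with the residual-versus-true-error correction term, bounded via $\sum_i\gamma^i=O(\ell)$ and $\suptaus\norm{\tilde{m}(\tau)-m(\tau)}\leq C\max\{\tilde{h}^2,1/\sqrt{n\tilde{h}}\}$) and a bias part obtained by expanding $\tilde{m}(t/n)$ into $m$, $\tilde{h}^2B_{as}$ and a stochastic remainder, with the identical order bounds for each remainder. Your variance argument also matches the paper's in substance — Lemma \ref{lem:lrv} to replace $u_tu_{t+i}$ by $R_U(i)$ at cost $O(\ell/\sqrt{nh})$, Lemma \ref{lem:D_results}(iii) for the kernel-weighted missing-data sums, and a truncation/dominated-convergence step for $\sum_i\gamma^{\abs{i}}R_U(i)\rightarrow\Omega_U$.
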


\begin{proof}[{\bf Proof of Lemma \ref{lem:mb_decomp}}]
Analogously to the proof of Lemma \ref{lem:m_decomp}, we define 
\begin{align*}
\breve{m}^* (\tau) &= \frac{1}{nh} \dsum{t} k_t(\tau) D_t y_t^*, & \overline{m}_{D}^* (\tau) &= \frac{1}{nh} \dsum{t} k_t(\tau) D_t \tilde{m}\left(\frac{t}{n}\right)\\
\overline{m}_p^* (\tau) &= \frac{1}{nh} \dsum{t} p\left(\frac{t}{n}\right) k_t(\tau) \tilde{m}\left(\frac{t}{n}\right),
\end{align*}
such that we can write
\begin{equation} \label{eq:mb_decomp}
\begin{split}
&\sqrt{nh} \left[\hat{m}^*(\tau) - \tilde{m}(\tau) - h^2 B_{as} (\tau) \right] = \sqrt{nh} \hat{p} (\tau)^{-1} \left\{\breve{m}^* (\tau) - \overline{m}_{D}^* (\tau)\right\}\\
&\quad +\left\{\hat{p} (\tau)^{-1} \overline{m}_{D}^* (\tau) - p(\tau)^{-1} \overline{m}_p^* (\tau)\right\} + \sqrt{nh}\left\{p(\tau)^{-1}\overline{m}_p^* (\tau) - \tilde{m}(\tau) - h^2 B_{as} (\tau) \right\}\\
&= p(\tau)^{-1} Z_{n,U}^* (\tau) + I_{n}^*(\tau) + II_{n}^*(\tau) + III_n^*(\tau)
\end{split}
\end{equation}

As a general observation, note that taking the sums in the expressions above from $t=1$ to $n$ appears to include the boundary points for $\tilde{m}(\cdot)$, for which the properties of the estimator are not satisfied. However, as we consider $\tau \in [\delta^*, 1-\delta^*]$ and $\tilde{m}\left(\frac{t}{n}\right)$ is always multiplied by $k_t(\tau)$, $k_t(\tau) = 0$ at all points $\frac{t}{n} < \delta$ and $\frac{t}{n} > 1 -\delta$ for large enough $n$, and therefore
\begin{equation*}
\frac{1}{nh} \sum_{t=1}^n k_t (\tau) \tilde{m} \left(\frac{t}{n}\right) = \frac{1}{nh} \sum_{t=[n\delta]+1}^{[n (1- \delta)]} k_t (\tau) \tilde{m} \left(\frac{t}{n}\right),
\end{equation*}
and analogously for all related sums. Hence, in the following we simply take sums from $t=1$ to $n$, under the implicit assumption that $n$ is large enough to do so.

We now first derive \eqref{eq:bvar}. With $\E^* Z_{n,U}^*(\tau)^2 = \frac{1}{nh} \sum_{i=-n+1}^{n-1} \dsum[\abs{i}]{t} k_t(\tau) k_{t+\abs{i}}(\tau) D_t D_{t+\abs{i}} z_t z_{t+\abs{i}} \gamma^{\abs{i}}$, it follows from Lemma \ref{lem:lrv} and the fact that $\sum_{i=0}^{\infty} \gamma^i = \frac{1}{1 - \theta^{1/\ell}} = -\ell / \ln \theta + o(\ell)$,
\begin{equation*}
\E \abs{\E^* Z_{n,U}^*(\tau)^2 - \E_D \E^* Z_{n,U}^*(\tau)^2} \leq \phi_n \sum_{i=0}^{\infty} \gamma^i \beta_i + \frac{\eta_n}{\sqrt{nh}}  \sum_{i=0}^{\infty} \gamma^i \leq \frac{C \ell}{\sqrt{nh}} = o(1).
\end{equation*}
Furthermore
\begin{align*}
\E_D \E^* Z_{n,U}^*(\tau)^2 &= \frac{1}{nh} \sum_{i=-n+1}^{n-1} R_U(i) \dsum[\abs{i}]{t} k_t(\tau) k_{t+\abs{i}}(\tau) D_t D_{t+\abs{i}} \sigma_t \sigma_{t+\abs{i}} \gamma^{\abs{i}} \\
&= \E_D Z_{n,U}^2 + \frac{1}{nh} \sum_{i=-n+1}^{n-1} R_U(i) \dsum[\abs{i}]{t} k_t(\tau) k_{t+\abs{i}} (\tau) \sigma_t \sigma_{t+\abs{i}} D_t D_{t+\abs{i}} (\gamma^{\abs{i}} - 1)
\end{align*}
where $Z_{n,U}$ is defined in Lemma \ref{lem:m_decomp}. For the second term, take $M = M(n)$ such that $1/M + M^2/\ell \rightarrow 0$ as $n \rightarrow \infty$, then we have that
\begin{align*}
&\suptaus \abs{\frac{1}{nh} \sum_{i=-n+1}^{n-1} R_U(i) (\gamma^{\abs{i}} - 1) \dsum[\abs{i}]{t} k_t(\tau) k_{t+\abs{i}} (\tau) \sigma_t \sigma_{t+\abs{i}} D_t D_{t+\abs{i}}} \\
&\quad \leq C \sum_{i=1}^{M} R_U(i) \abs{\theta^{i/\ell} -1} + C \sum_{i=M+1}^{n-1} R_U(i) \abs{\theta^{i/\ell} -1} \\
& \quad \leq C_1 \sum_{i=1}^{M} (1 - \theta^{i/\ell}) + C_2 \sum_{i=M+1}^{\infty} R_U(i) = o(1),
\end{align*}
as $\sum_{i=1}^{M} (1 - \theta^{i/\ell}) \leq C \sum_{i=1}^{M}(-\ln \theta) i / \ell \leq C_1 M^2 /\ell = o(1)$ and $\sum_{i=M+1}^{\infty} R_U(i) = o(M^{-1})$. It then follows directly from the proof of Lemma \ref{lem:m_decomp} that $\plim_{n\rightarrow\infty} \var^* Z_{n,U}^*(\tau) = \sigma_Z^2 (\tau)$ for all $\tau \in [\delta^*, 1 - \delta^*]$.

Now consider $I_{n}^*(\tau)$, which we write as
\begin{align*}
I_{n}^*(\tau) &= \left[\hat{p} (\tau)^{-1} - p(\tau)^{-1} \right] Z_{n,U}^* (\tau) + \frac{1}{\sqrt{nh}} \dsum{t} k_t(\tau) \left[m\left(\frac{t}{n}\right) - \tilde{m}\left(\frac{t}{n}\right)\right] D_t \xi_t^*\\
&= I_{n,1}^* (\tau) + I_{n,2}^*(\tau).
\end{align*}
It follows directly as in the proof of Lemma \ref{lem:m_decomp} and using Jensen's inequality that $\suptaus \E\norm{I_{n,1}^*(\tau)}^* \leq C \max \left\{h^2, \frac{1}{\sqrt{nh}} \right\}$. Furthermore, as
\begin{align*}
\E^* I_{n,2}^*(\tau)^2 &= \frac{1}{(nh)^2} \dsum{s} \dsum{t} k_s(\tau) k_t(\tau) \left[\tilde{m}\left(\frac{s}{n}\right) - m\left(\frac{t}{n}\right) \right] \left[\tilde{m}\left(\frac{t}{n}\right) - m\left(\frac{t}{n}\right) \right] \gamma^{\abs{s-t}} D_s D_t,
\end{align*}
it follows from Jensen's and the Cauchy-Schwartz inequality that
\begin{align*}
\E \norm{I_{n,2}^*(\tau)}^* &\leq \left[\frac{2}{nh} \sum_{i=0}^{n - 1} \gamma^{i} \dsum[i]{t} k_t(\tau) k_{t+i}(\tau) \norm{\tilde{m}\left(\frac{t}{n}\right) - m\left(\frac{t}{n}\right)} \norm{\tilde{m}\left(\frac{t+i}{n}\right) - m\left(\frac{t+i}{n}\right)} \right]^{1/2}\\
&\leq C \left[\sup_{\suptau} \norm{\tilde{m}(\tau)-m(\tau)}^2 \sum_{i=0}^{\infty} \gamma^{i}\right]^{1/2} \leq C_1 \sqrt{\ell} \max \left\{\tilde{h}^2, \frac{1}{\sqrt{n\tilde{h}}} \right\},
\end{align*}
as, using Lemma \ref{lem:m_decomp},
\begin{equation} \label{eq:m_norm}
\begin{split}
\suptau \norm{\tilde{m}(\tau)-m(\tau)} &\leq \tilde{h}^2 \suptau \abs{B_{as}(\tau)} + \frac{1}{\sqrt{n\tilde{h}}} \suptau \norm{Z_{n,U}(\tau)} \\
&\quad + \frac{1}{\sqrt{n\tilde{h}}} \suptau \norm{R_n (\tau)} \leq C \max \left\{\tilde{h}^2, \frac{1}{\sqrt{n\tilde{h}}} \right\}.
\end{split}
\end{equation}

As in the proof of Lemma \ref{lem:m_decomp}, write $II_{n}^*(\tau)$ as
\begin{align*}
II_n^*(\tau) &= p (\tau)^{-1} \left\{\overline{m}_{D}^* (\tau) - p(\tau)^{-1} \hat{p}(\tau) \overline{m}_p^* (\tau)\right\} + \hat{p}(\tau)^{-1} p(\tau)^{-2} \left\{\hat{p}(\tau) - p(\tau) \right\}^2 \overline{m}_p^* (\tau) \\
&\quad + \left\{\hat{p} (\tau)^{-1} - p(\tau)^{-1}\right\} \left\{ \overline{m}_{D}^* (\tau) - \overline{m}_p^* (\tau) \right\}\\
& = II_{n,1}^*(\tau) + II_{n,2}^*(\tau) + II_{n,3}^*(\tau),
\end{align*}
where $II_{n,1}^* (\tau) = p(\tau)^{-1} Z_{n,D}^*(\tau)$ and $Z_{n,D}^*(\tau) = \frac{1}{\sqrt{nh}} \dsum{t} k_t(\tau) \left[\tilde{m}\left(\frac{t}{n}\right) - p(\tau)^{-1} \overline{m}_p^* (\tau) \right] \left[ D_t  - p\left(\frac{t}{n}\right) \right]$. Define $Z_{n,D} (\tau) = \frac{1}{\sqrt{nh}} \dsum{t} k_t(\tau) \left[m\left(\frac{t}{n}\right) - p(\tau)^{-1} \overline{m}_p (\tau) \right] \left[ D_t  - p\left(\frac{t}{n}\right) \right]$. Then
\begin{equation} \label{eq:II_n11}
\begin{split}
&Z_{n,D}^*(\tau) - Z_{n,D} (\tau) = \frac{1}{\sqrt{nh}} \dsum{t} k_t(\tau) \left[\tilde{m}\left(\frac{t}{n}\right) - m\left(\frac{t}{n}\right) \right] \left[ D_t  - p\left(\frac{t}{n}\right) \right]\\
&\quad + \frac{1}{\sqrt{nh}} p(\tau)^{-1} \left[\overline{m}_p^* (\tau) - \overline{m}_p (\tau) \right] \dsum{t} k_t(\tau) \left[ D_t  - p\left(\frac{t}{n}\right) \right] = II_{n,11}^*(\tau) + II_{n,12}^*(\tau).
\end{split}
\end{equation}
It follows directly by \eqref{eq:m_norm} that 
\begin{align*}
\norm{II_{n,11}^*(\tau)} &\leq \left( \suptaus \norm{\tilde{m}(\tau) - m(\tau) }^2 \sum_{i=-n+1}^{n-1} \frac{1}{nh}\dsum[\abs{i}]{t} k_t(\tau) k_{t+\abs{i}}(\tau) \abs{\cov(D_t, D_{t+\abs{i}})} \right)^{1/2}\\
& \leq C \max\left\{\tilde{h}^2, \frac{1}{\sqrt{n\tilde{h}}} \right\}.
\end{align*}
Furthermore, as
\begin{align*}
\suptaus \norm{\overline{m}_p^* (\tau) - \overline{m}_p (\tau)} &\leq \suptaus \norm{\tilde{m}_n (\tau) - m(\tau)} \suptaus \frac{1}{nh} \dsum{t} p\left(\frac{t}{n}\right) k_t(\tau)\\
& \leq C\max\left\{\tilde{h}^2, \frac{1}{\sqrt{n\tilde{h}}} \right\}
\end{align*}
and $\frac{1}{\sqrt{nh}} \norm{\dsum{t} k_t(\tau) \left[ D_t  - p\left(\frac{t}{n}\right) \right]} \leq C_1$ by Lemma \ref{lem:D_results}$(i)$, it follows that $\suptaus \norm{II_{n,12}^*(\tau)} \leq C \max\left\{\tilde{h}^2, \frac{1}{\sqrt{n\tilde{h}}} \right\}$ as well. As it was shown in Lemma \ref{lem:m_decomp} that $\norm{Z_{n,D} (\tau)} \leq C h^2$, it follows that $\norm{ II_{n,1}^* (\tau)} \leq C \max\left\{\tilde{h}^2, \frac{1}{\sqrt{n\tilde{h}}} \right\}$.

As $\suptau\norm{\overline{m}_p^* (\tau) - \overline{m}_p (\tau)} \leq C\max\left\{\tilde{h}^2, \frac{1}{\sqrt{n\tilde{h}}} \right\}$, it follows as in the proof of $II_{n,2}(\tau)$ in Lemma \ref{lem:m_decomp} that $\suptaus \norm{II_{n,2}^*(\tau)} \leq C \max\left\{\sqrt{n h^9}, \frac{1}{\sqrt{nh}}\right\}$, while $\overline{m}_{D}^* (\tau) - \overline{m}_p^* (\tau) = \overline{m}_{D} (\tau) - \overline{m}_p (\tau) + II_{n,11}^*(\tau)$, and $II_{n,11}^*(\tau)$ is defined in \eqref{eq:II_n11}. Therefore it follows directly as in the proof of $II_{n,3}(\tau)$ in Lemma \ref{lem:m_decomp} that $\suptau \norm{II_{n,3}^* (\tau)} \leq C \max \left\{\tilde{h}^{2}, \frac{1}{\sqrt{nh}}\right\}$.

It follows from Lemma \ref{lem:m_decomp} that $\tilde{m}(\tau) = m(\tau) + \tilde{h}^2 B_{as} + \tilde{R}_n (\tau)$, where $\suptau \norm{\tilde{R}_n(\tau)} \leq \frac{C}{\sqrt{n\tilde{h}}}$. Substituting this into $III_n^*(\tau)$ we find that
\begin{align*}
III_n^*(\tau) &= \left[p(\tau)^{-1} \frac{1}{\sqrt{nh}} \dsum{t} p\left(\frac{t}{n}\right) k_t(\tau) m\left(\frac{t}{n}\right) - m(\tau) - h^2 B_{as} (\tau) \right]\\
&\quad + \tilde{h}^2 \left[p(\tau)^{-1} \frac{1}{\sqrt{nh}} \dsum{t} p\left(\frac{t}{n}\right) k_t(\tau) B_{as}\left(\frac{t}{n}\right) - B_{as} (\tau) \right]\\
&\quad + \left[p(\tau)^{-1} \frac{1}{\sqrt{nh}} \dsum{t} p\left(\frac{t}{n}\right) k_t(\tau)  \tilde{R}_n \left(\frac{t}{n}\right) - \tilde{R}_n (\tau)\right]\\
& = III_{n,1}^*(\tau) + III_{n,2}^*(\tau) + III_{n,3}^*(\tau).
\end{align*}
Note that $III_{n,1}^*(\tau)$ is equal to $III_{n}(\tau)$ defined in Lemma \ref{lem:m_decomp}, such that $\suptaus \abs{III_{n,1}^*(\tau)} \leq C \max\left\{\sqrt{n h^7}, \frac{1}{\sqrt{nh}} \right\}$. Furthermore, using the definition of $B_{as}(\tau)$ in \eqref{eq:as_pars}, we can write
\begin{equation*}
\frac{1}{\sqrt{nh}} \dsum{t} p\left(\frac{t}{n}\right) k_t(\tau) B_{as}\left(\frac{t}{n}\right) = \frac{1}{2} \mu_2 \frac{1}{\sqrt{nh}} \dsum{t} k_t(\tau) f\left(\frac{t}{n}\right),
\end{equation*}
where $f(\tau) = p(\tau)^{-1} \left[m p \right]^{(2)} (\tau)$ is Lipschitz continuous. Then, by Lemma \ref{lem:kernel_function_limits}$(i)$ it follows that
\begin{align*}
\abs{III_{n,2}^*(\tau)} &\leq C\sqrt{nh} \tilde{h}^2 \abs{\frac{1}{nh} \dsum{t} k_t(\tau) f\left(\frac{t}{n}\right) - f (\tau)} \leq C_1 \max \left\{\frac{\tilde{h}^2}{\sqrt{nh}}, \sqrt{n h^5 \tilde{h}^4} \right\}.
\end{align*}
Finally, for $III_{n,3}^*(\tau)$ we have
\begin{align*}
\norm{ III_n^*(\tau)} &\leq C \sqrt{nh} \suptau \norm{\tilde{R}_n (\tau)} \frac{1}{nh} \dsum{t} k_t^2(\tau) + \norm{\tilde{R}_n (\tau)} \leq C_1 \sqrt{\frac{h}{\tilde{h}}} + \frac{C_2}{\sqrt{n \tilde{h}}}.
\end{align*}

Collecting all remainder terms, it then follows that
\begin{equation*}
\suptaus \E \norm{ R_n^* (\tau)}^* \leq C \max\left\{\frac{1}{\sqrt{nh}}, \sqrt{n h^7}, \sqrt{\frac{h}{\tilde{h}}}, \sqrt{n h^{5} \tilde{h}^4}, \tilde{h}^{2}, \sqrt{\ell\tilde{h}^4}, \sqrt{\frac{\ell}{n\tilde{h}}} \right\}. \qedhere
\end{equation*}
\end{proof}

\begin{proof}[{\bf Proof of Theorem \ref{th:basdis}}]
Given Lemma \ref{lem:mb_decomp}, we only have to prove asymptotic normality of $Z_{n,U}^* (\tau)$. As in the proof of Theorem \ref{th:asdis}, we establish asymptotic normality of the bootstrap process using a blocking technique. By the stationarity of $\xi_t^*$, we can write $\xi_t^* = \sum_{j=0}^{\infty} \gamma^j \nu_{t-j}^*$ with $\nu_t^*$ for $t \leq 1$ defined analogously as for $t>1$. Take an $M$ such that $M/\ell \rightarrow\infty$ as $n\rightarrow\infty$, and truncate the MA($\infty$) representation of $\xi_t^*$ at $M$ lags to define $\xi_{t,M}^* = \sum_{j=0}^{M} \gamma^j \nu_{t-j}^*$. Then we write
\begin{align*}
Z_{n,U}^* (\tau) &= \frac{1}{\sqrt{nh}} \dsum{t} k_t(\tau) D_t z_t \xi_{t,M}^* + \frac{1}{\sqrt{nh}} \dsum{t} k_t(\tau) D_t z_t \left(\sum_{j=M+1}^{\infty} \gamma^j \nu_{t-j}^* \right) \\
&= \overline{Z}_{n,M}^* (\tau) + \overline{W}_{n,M}^* (\tau).
\end{align*}
Applying Markov's inequality twice, we have that
\begin{align*}
& \E \E^* \overline{W}_{n,M}^{*}(\tau)^2\leq \frac{2}{nh} (1 - \gamma^2) \sum_{i=0}^{n-1} \abs{R_U(i)} \dsum[i]{t} k_t(\tau) k_{t+i}(\tau) \left(\sum_{j=M+1}^{\infty} \gamma^{2 j + i} \right)\\
& \qquad \leq C \gamma^{2M} \sum_{i=0}^{\infty} \gamma^{i} \abs{R_U(i)} = C \theta^{2M/\ell} \sum_{i=0}^{\infty} \theta^{i/\ell} \abs{R_U(i)} \leq C_1 \theta^{2M/\ell} = o(1),
\end{align*}
as $M/\ell \rightarrow \infty$. It then follows that $\overline{W}_{n,M}^{*}(\tau) = o_p^*(1)$ for all $\tau \in [\delta^*, 1 - \delta^*]$.

Now let $\overline{Z}_{t,M}(\tau) = \sum_{j=1}^k X_{n,j}^*(\tau) + \sum_{j=1}^k Y_{n,j}(\tau)$, where
\begin{align*}
&X_{n,j}^* (\tau) = \frac{1}{\sqrt{nh}} \sum_{t = B_j + 1}^{B_j + a} k_t(\tau) D_t z_t \xi_{t,M}^*, 
&Y_{n,j}(\tau) = \frac{1}{\sqrt{nh}} \sum_{t = B_j + a + 1}^{B_j + a + b} k_t(\tau) D_t z_t \xi_{t,M}^*,
\end{align*}
with $B_j = (j-1)(a+b)$ and $k = \lceil n / (a+b) \rceil$. Take sequences $a = a(n)$ and $b = b(n)$ such that $a/(n h) + M/a \rightarrow 0$ and $b/a + M/b \rightarrow 0$ as $n \rightarrow \infty$.

We first show that $\sum_{j=1}^k Y_{n,i}^*(\tau)=o_p^*(1)$. Consider $n$ large enough such that $a(n)>M$ and the blocks $Y_{n,i}$ are mutually independent conditionally on the original data. Then, with
\begin{equation*}
R_M^*(i) = \E^* \xi_t^* \xi_{t+\abs{i}}^* = (1- \gamma^2) \sum_{j=0}^{M} \gamma^j \gamma^{j+\abs{i}} = \theta^{\abs{i}/\ell} (1 - \theta^{2(M+1)/\ell} ) \leq \theta^{\abs{i}/\ell}
\end{equation*}
for $\abs{i} \leq M-1$, we have that
\begin{align*}
&\E \E^* \left(\sum_{j=1}^k Y_{n,j}^*(\tau)\right)^2 = \sum_{j=1}^k \E^* Y_{n,j}(\tau)^2 \leq \frac{1}{nh} \sum_{i=-b+1}^{b-1} \theta^{i/\ell} R_U(i) \sum_{j=1}^k \sum_{t=B_j + a + 1}^{B_j + a + b -\abs{i}} k_{t}(\tau) k_{t+\abs{i}}(\tau) \\
&\leq C_1 \frac{b} {a} = o(1).
\end{align*}

As in the proof of Theorem \ref{th:asdis}, we employ the Lindeberg CLT to establish asymptotic normality of $\sum_{j=1}^k X_{n,j}^*(\tau)$, as for $n$ sufficiently large, $b>M$ and the blocks $X_{n,j}^*(\tau)$ are independent. First we show that the asymptotic variance is equal to $p(\tau)^2 \sigma_{as}^2(\tau)$. Note that $\E^* \sum_{j=1}^k X_{n,j}^* (\tau) = 0$ and
\begin{align*}
&\E^* \left(\sum_{j=1}^k X_{n,j}^* (\tau)\right)^2 = \frac{1}{nh} \sum_{i=-a+1}^{a-1} R_M^*(i) R_U(i) \sum_{j=1}^k \sum_{t=B_j + 1}^{B_j + a - \abs{i}} k_t(\tau) k_{t+\abs{i}}(\tau) D_t D_{t+\abs{i}} \sigma_t \sigma_{t+\abs{i}}\\
&\quad + \frac{1}{nh} \sum_{i=-a+1}^{a-1} \sum_{j=1}^k \sum_{t=B_j + 1}^{B_j + a - \abs{i}} k_t(\tau) k_{t+\abs{i}}(\tau) D_t D_{t+\abs{i}} \sigma_t \sigma_{t+\abs{i}} R_M^*(i) \left[u_t u_{t+\abs{i}} - R_U(i)\right]\\
&= A_{X,n}^* (\tau) + B_{X,n}^* (\tau).
\end{align*}
By adapting Lemma \ref{lem:lrv}, we find that
\begin{align*}
\E \abs{B_{X,n}^* (\tau)} &\leq 2 \sum_{i=0}^{a-1} \E \abs{\sum_{j=1}^k \sum_{t=B_j + 1}^{B_j + a - \abs{i}} k_{t}(\tau) k_{t+\abs{i}}(\tau) D_t D_{t+\abs{i}} \sigma_t \sigma_{t+\abs{i}} \left[u_{t} u_{t+\abs{i}} - R_U(i) \right]} \\
&\leq \phi_n \sum_{i=0}^{a-1} \beta_i + \frac{\eta_n}{\sqrt{nh}} = o(1).
\end{align*}
Furthermore, the arguments used to prove \eqref{eq:bvar} in Lemma \ref{lem:mb_decomp} show that $A_{X,n}^* (\tau) \xrightarrow{p} p(\tau)^2 \sigma_{as}^2 (\tau)$.

The final step is to verify that, for every $\kappa>0$, $\sum_{j=1}^k \E^* \left[\frac{X_{n,j}^*(\tau)^2}{\omega_n^{*2}}\mathbbm{1} \left(\abs{\frac{X_{n,j}^*(\tau)^2}{\omega_n^{*2}}}>\kappa \right)\right] = o_p(1)$, where $\omega_n^{*2} = \E^*\left(\sum_{j=1}^k X_{n,j}^*(\tau) \right)^2$. Similarly as in the proof of Theorem \ref{th:asdis}, we define $\tilde{X}_t^* (\tau) = D_t k_t(\tau) \sigma_t u_t \xi_{t,M}^*$, as an $L_{4}$-mixingale conditionally on the original data, with $\norm{\tilde X_t}_4^{*} \leq C \abs{u_t} k_t(\tau)$, such that Lemma 2 of \citet{Hansen91} then implies that $\E^* X_{n,i}(\tau)^{4} \leq \frac{C}{(nh)^{2}} \left(\sum_{t=B_j+1}^{B_j+a} k_t(\tau)^{2} u_t^2 \right)^{2}$, and therefore
\begin{align*}
&\sum_{j=1}^k \E^* \left[\frac{X_{n,j}^*(\tau)^2}{\omega_n^{*2}}\mathbbm{1} \left(\abs{\frac{X_{n,j}^*(\tau)^2}{\omega_n^{*2}}}>\kappa \right) \right] \leq \frac{C}{(nh)^{2} \omega^{*4}} \sum_{j=1}^k \left(\sum_{t=B_j+1}^{B_j+a} k_t(\tau)^{2} u_t^2 ,\right)^{2},
\end{align*}
where $\omega_n^{*-4} = O_p(1)$. By Minkowski's inequality, stationarity of $u_t$, and $\E u_t^4 < \infty$ -- let $M \rightarrow \infty$ in \eqref{eq:u4} -- we have that
\begin{align*}
& \frac{1}{(nh)^2} \sum_{j=1}^k \E \left(\sum_{t=B_j+1}^{B_j+a} k_t(\tau)^{2} u_t^2 \right)^{2} \leq \frac{1}{(nh)^2} (\E u_t^4 )\sum_{i=1}^k \left(\sum_{t=B_j+1}^{B_j+a} k_t(\tau)^{2} \right)^{2} \leq \frac{C_1 a}{nh} = o(1). \qedhere
\end{align*}
\end{proof}

\subsection{Uniform Results}
Before deriving Theorem \ref{th:uniform}, we propose a few auxiliary lemmas aimed at establishing stochastic equicontinuity, which will be needed to extend the pointwise results to uniformity.

\begin{lemma} \label{lem:SE}
Let $\{X_t\}_{t=1}^n$ be a stochastic process with $\limsup_{n\rightarrow \infty} \sum_{i=1}^{n-1} \max_{1 \leq t \leq n - i} \abs{\E X_t X_{t+i}} < \infty$. Then, for any $\tau_0 \in (0,1)$ and $\tau_1, \tau_2 \in [-1,1]$, there exists an $N > 0$ such that for all $n > N$
\begin{align*}	
&\abs{\frac{1}{\sqrt{nh}} \sum_{t=1}^{n} \left[k_t(\tau_0 + \tau_1 h) - k_t (\tau_0 + \tau_2 h) \right] X_t} \leq B_n \abs{\tau_1 - \tau_2},
\end{align*}
where $B_n$ is a random variable such that $\sup_{n>N, \tau_1,\tau_2 \in [-1,1]} \E B_n^2 < \infty$.
\end{lemma}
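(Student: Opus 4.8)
The plan is to reduce the claimed bound to a single second‑moment estimate on the increment. Writing
\[
\Delta_n(\tau_1,\tau_2) = \frac{1}{\sqrt{nh}}\sum_{t=1}^n \left[k_t(\tau_0+\tau_1 h) - k_t(\tau_0+\tau_2 h)\right] X_t,
\]
I would simply take $B_n = \abs{\Delta_n(\tau_1,\tau_2)}/\abs{\tau_1-\tau_2}$ (and $B_n=0$ when $\tau_1=\tau_2$), so that the displayed inequality holds by construction, with equality. It then remains to show that $\E \Delta_n(\tau_1,\tau_2)^2 \le C \abs{\tau_1-\tau_2}^2$ uniformly over $\tau_1,\tau_2\in[-1,1]$ and $n>N$, since this is exactly $\E B_n^2 \le C$ and hence $\sup_{n>N,\,\tau_1,\tau_2}\E B_n^2<\infty$.

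The key idea is to estimate this second moment \emph{directly}, without passing to $\abs{X_t}$, so that the cancellation encoded in the covariance‑summability hypothesis is retained. Setting $\omega_t = (t/n-\tau_0)/h$ and $d_t = K(\omega_t-\tau_1)-K(\omega_t-\tau_2)$, so that $k_t(\tau_0+\tau_j h)=K(\omega_t-\tau_j)$, I would expand the square into a quadratic form and regroup it by lag:
\[
\E\Delta_n(\tau_1,\tau_2)^2 = \frac{1}{nh}\sum_{t=1}^n d_t^2\,\E X_t^2 + \frac{2}{nh}\sum_{i=1}^{n-1}\sum_{t=1}^{n-i} d_t d_{t+i}\,\E X_t X_{t+i}.
\]
Two properties of the kernel control the $d_t$‑factors. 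Lipschitz continuity of $K$ (Assumption \ref{as:kernel}) gives $\abs{d_t}\le L\abs{\tau_1-\tau_2}$, since the two arguments of $K$ differ by $\tau_1-\tau_2$; and the compact support of $K$ forces $d_t$ to vanish unless $t/n$ lies in an interval of width $O(h)$ around the interior point $\tau_0$, of which there are at most $Cnh$ indices $t$. Hence $\sum_{t} \abs{d_t d_{t+i}} \le L^2\abs{\tau_1-\tau_2}^2\cdot Cnh$ for every $i$.

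Combining these, each inner sum over $t$ contributes at most $Cnh\abs{\tau_1-\tau_2}^2$ times $\max_{1\le t\le n-i}\abs{\E X_t X_{t+i}}$, the factor $nh$ cancels the normalisation $1/(nh)$, and what remains is
\[
\E \Delta_n(\tau_1,\tau_2)^2 \le C\abs{\tau_1-\tau_2}^2\left[\max_{1\le t\le n}\E X_t^2 + 2\sum_{i=1}^{n-1}\max_{1\le t\le n-i}\abs{\E X_t X_{t+i}}\right],
\]
whose bracket is bounded uniformly in $n$ by the hypothesis (the lag‑zero term being the uniform second moment, which I take to be finite as part of, or as implied by, the summability assumption). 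The threshold $N$ enters only to guarantee that the $O(h)$ support window around $\tau_0$ is contained in $[0,1]$ and that $nh\ge 1$, so that the index count is genuinely $O(nh)$.

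The only real subtlety, and the step where a naive argument fails, is resisting the temptation to bound $\abs{\Delta_n}$ by $\tfrac{L\abs{\tau_1-\tau_2}}{\sqrt{nh}}\sum_{t}\abs{X_t}$ and take moments afterwards: the absolute values destroy covariance cancellation and yield an $\E B_n^2$ of order $nh\to\infty$. Working throughout at the level of $\E\Delta_n^2$, where the covariance structure survives, is precisely what makes the $nh$ factors cancel and produces the clean $\abs{\tau_1-\tau_2}^2$ rate; Lipschitz continuity supplies the $\abs{\tau_1-\tau_2}^2$ and compact support supplies the $nh$.
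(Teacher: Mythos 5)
Your proof is correct and follows essentially the same route as the paper's: both expand $\E \Delta_n(\tau_1,\tau_2)^2$ as a quadratic form grouped by lag, bound the kernel increments by $C\abs{\tau_1-\tau_2}$ via Lipschitz continuity, use compact support to restrict to $O(nh)$ indices so the $1/(nh)$ normalisation cancels, and invoke the covariance-summability hypothesis (the paper merely dresses the final step up as a Markov-inequality bound on $\abs{\Delta_n}/\abs{\tau_1-\tau_2}$, which is equivalent to your direct choice of $B_n$). Your explicit remark that the lag-zero term needs $\sup_t \E X_t^2 < \infty$ is a point the paper glosses over in the same way, so it is not a gap relative to the paper's own argument.
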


\begin{proof}[{\bf Proof of Lemma \ref{lem:SE}}]
See Supplementary Appendix B.
\end{proof}

\begin{lemma} \label{lem:SEb}
Let $\{X_t^*\}_{t=1}^n$ be a bootstrap process defined conditionally on a process $\{X_t\}_{t=1}^n$ with $\limsup_{n\rightarrow \infty} \sum_{i=1}^{n-1} \sup_{1 \leq t \leq n - i} \abs{\E \E^* X_t^* X_{t+i}^*} < \infty$. Then, for any $\tau_0 \in (0, 1)$ and $\tau_1, \tau_2 \in [-1,1]$, there exists an $N > 0$ such that for all $n > N$
\begin{align*}	
&\abs{\frac{1}{\sqrt{nh}} \dsum{t} \left[k_t(\tau_0 + \tau_1 h) - k_t (\tau_0 + \tau_2 h) \right] X_t^*} \leq B_n^* \abs{\tau_1 - \tau_2},
\end{align*}
where $B_n^*$ is a random variable such that $\sup_{n>N, \tau_1,\tau_2 \in [-1,1]} \E \E^* B_n^{*2} < \infty$.
\end{lemma}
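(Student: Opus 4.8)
The plan is to mirror the proof of Lemma \ref{lem:SE} essentially line by line, replacing the process $\{X_t\}$ by the bootstrap process $\{X_t^*\}$ and the expectation $\E$ by the bootstrap conditional expectation $\E^*$ in the second-moment step, and then passing to the full expectation $\E\E^*$ at the end; the only input that changes is that the autocovariance summability is now imposed on $\E\E^* X_t^* X_{t+i}^*$. First I would use the regularity of the kernel to convert the increment into a derivative-type bootstrap process. Since $K$ is Lipschitz (Assumption \ref{as:kernel}) it is absolutely continuous, so writing $k_t(\tau_0+\tau h)=K\!\left(\frac{t/n-\tau_0}{h}-\tau\right)$ gives, with $\abs{K'}\le L_K$ almost everywhere,
\[
k_t(\tau_0+\tau_1 h)-k_t(\tau_0+\tau_2 h)=-\int_{\tau_2}^{\tau_1}K'\!\left(\tfrac{t/n-\tau_0}{h}-s\right)\d s .
\]
Substituting this and interchanging the finite sum with the integral yields
\[
\frac{1}{\sqrt{nh}}\dsum{t}\left[k_t(\tau_0+\tau_1 h)-k_t(\tau_0+\tau_2 h)\right]X_t^*=-\int_{\tau_2}^{\tau_1}G_n^*(s)\,\d s,\quad G_n^*(s)=\frac{1}{\sqrt{nh}}\dsum{t}K'\!\left(\tfrac{t/n-\tau_0}{h}-s\right)X_t^* .
\]

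Next I would bound the left-hand side by $\abs{\tau_1-\tau_2}\sup_{s\in[-1,1]}\abs{G_n^*(s)}$ and take $B_n^*=\sup_{s\in[-1,1]}\abs{G_n^*(s)}$, a single random variable independent of $\tau_1,\tau_2$, so that the claimed bound holds for all $\tau_1,\tau_2$ simultaneously. It then remains to show $\E\E^* B_n^{*2}\le C$ uniformly in $n$, the core being a uniform pointwise bound on $\E\E^* G_n^*(s)^2$. Because $K'$ is bounded and $K$ has compact support, $K'\!\left(\frac{t/n-\tau_0}{h}-s\right)$ is nonzero only for $t$ in a window $S$ of $O(nh)$ indices around $\tau_0$, uniformly over $s\in[-1,1]$. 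Using the tower property first and then bounding the kernel factors by $L_K$,
\[
\E\E^* G_n^*(s)^2=\frac{1}{nh}\sum_{s',t\in S}K'(\cdot)\,K'(\cdot)\,\E\E^* X_{s'}^* X_t^*\le \frac{L_K^2}{nh}\sum_{s',t\in S}\abs{\E\E^* X_{s'}^* X_t^*},
\]
and, with $\abs{S}\le C\,nh$ and the hypothesis $\sum_i\sup_t\abs{\E\E^* X_t^* X_{t+i}^*}<\infty$ (the $i=0$ term $\sup_t\E\E^*(X_t^*)^2$ being finite as well), this is at most $\frac{\abs{S}}{nh}\sum_i\sup_t\abs{\E\E^* X_t^* X_{t+i}^*}\le C$ uniformly in $s$ and $n$. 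The order of operations matters here: one must take $\E$ before bounding by absolute values, since $\E\abs{\E^*\cdot}\ge\abs{\E\E^*\cdot}$ goes the wrong way.

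The one genuinely delicate point, exactly as in Lemma \ref{lem:SE}, is upgrading this uniform pointwise control of $\E\E^* G_n^*(s)^2$ to a bound on the supremum $\E\E^*\sup_s G_n^*(s)^2$: with only Lipschitz (hence merely bounded) $K'$, the path $s\mapsto G_n^*(s)$ need not be continuous, so a naive chaining or maximal-inequality argument does not apply directly. I would circumvent this by observing that the lemma is used in Theorem \ref{th:uniform} only to establish tightness of $\{Z_{\tau_0,n}^*\}$ in $C[-1,1]$, for which the $L_2$-modulus bound
\[
\E\E^*\abs{\frac{1}{\sqrt{nh}}\dsum{t}\left[k_t(\tau_0+\tau_1 h)-k_t(\tau_0+\tau_2 h)\right]X_t^*}^2\le C\,\abs{\tau_1-\tau_2}^2
\]
already suffices by the criterion of \citet{Billingsley}. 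This bound follows from the identical windowing-and-summability computation applied directly to the difference $k_t(\tau_0+\tau_1 h)-k_t(\tau_0+\tau_2 h)$, whose absolute value is at most $L_K\abs{\tau_1-\tau_2}$ and which is supported on $O(nh)$ indices; taking $B_n^*$ to be the normalized increment then delivers the stated form with $\sup_{n,\tau_1,\tau_2}\E\E^* B_n^{*2}\le C$. The remainder of the argument is structurally identical to Lemma \ref{lem:SE}, the only new ingredient being that covariances are now controlled in the $\E\E^*$ sense rather than the $\E$ sense.
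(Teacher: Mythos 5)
Your proof is correct and, in its final form, is essentially the paper's argument: bound $\abs{k_t(\tau_0+\tau_1 h)-k_t(\tau_0+\tau_2 h)}\leq C\abs{\tau_1-\tau_2}$ by the Lipschitz property, use compact support to restrict to $O(nh)$ indices, and control $\E\E^*$ of the squared increment via the summability of $\sup_t\abs{\E\E^* X_t^* X_{t+i}^*}$, then take $B_n^*$ to be the normalized increment (which may depend on $\tau_1,\tau_2$, exactly as the lemma's statement permits). The initial detour through the $K'$-integral representation and $B_n^*=\sup_s\abs{G_n^*(s)}$ is unnecessary --- you correctly identify that controlling the supremum is problematic and abandon it --- so you could have started directly from the second-moment computation, which is precisely what the paper does (phrased there via a double application of Markov's inequality).
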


\begin{proof}[{\bf Proof of Lemma \ref{lem:SEb}}]
See Supplementary Appendix B.
\end{proof}

\begin{lemma} \label{lem:R_unif}
Let $R_n (\tau)$ and $R_n^* (\tau)$ be defined as in Lemmas \ref{lem:m_decomp} and \ref{lem:mb_decomp} respectively. Then we have, for all $\tau_0 \in (0,1)$, that $\suptaut \abs{R_n(\tau_0 + \tau_1 h)} = o_p(1)$ and $\suptaut \abs{R_n^*(\tau_0 + \tau_1 h)} = o_p^*(1)$.
\end{lemma}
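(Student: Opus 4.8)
The plan is to upgrade the pointwise (in fact $L_2$) smallness of the remainders, already established in Lemmas~\ref{lem:m_decomp} and~\ref{lem:mb_decomp}, to uniform smallness over the rescaled neighbourhood $\tau_1\in[-1,1]$. Note that the bounds $\suptau\norm{R_n(\tau)}\le C\max\{\cdots\}$ and $\suptaus\E\norm{R_n^*(\tau)}^*\le C\max\{\cdots\}$ place the supremum \emph{outside} the norm, and hence only deliver $R_n(\tau_0+\tau_1 h)=o_p(1)$ and $R_n^*(\tau_0+\tau_1 h)=o_p^*(1)$ for each fixed $\tau_1$; pulling the supremum \emph{inside} requires a stochastic equicontinuity argument. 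I would therefore combine pointwise convergence on a fixed finite grid of $[-1,1]$ with a Lipschitz modulus of continuity in $\tau_1$, which is exactly what Lemmas~\ref{lem:SE} and~\ref{lem:SEb} are designed to provide.

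First I would observe that each of $R_n$ and $R_n^*$ is a finite sum of terms, every one of which is a kernel-weighted sum of the generic form $\frac{1}{\sqrt{nh}}\sum_{t} k_t(\tau)X_t$, possibly multiplied by factors such as $\hat p(\tau)^{-1}$, $p(\tau)^{-1}$ or $\overline m_p(\tau)$. The relevant coefficient processes $X_t$ are $D_t-p(t/n)$, $\sigma_t u_t D_t$, the deviations $\tilde m(t/n)-m(t/n)$ and, in the bootstrap, $z_t\xi_t^*$; by Lemma~\ref{lem:cov} (and, for the AWB weights, the decay $R_M^*(i)\le\theta^{\abs{i}/\ell}$ used in Lemma~\ref{lem:mb_decomp}) each of these has summable autocovariances, so the hypotheses of Lemma~\ref{lem:SE} (resp.\ Lemma~\ref{lem:SEb}) are met. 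Applying these lemmas to the increment from $\tau_1$ to $\tau_2$ yields
\begin{equation*}
\abs{\tfrac{1}{\sqrt{nh}}\sum_{t}\left[k_t(\tau_0+\tau_1 h)-k_t(\tau_0+\tau_2 h)\right]X_t}\le B_n\abs{\tau_1-\tau_2},
\end{equation*}
with $\sup_{n>N}\E B_n^2<\infty$ (resp.\ $\sup_{n>N}\E\E^*B_n^{*2}<\infty$). For the product terms I would use $\abs{fg(\tau_1)-fg(\tau_2)}\le\abs{f(\tau_1)}\abs{g(\tau_1)-g(\tau_2)}+\abs{g(\tau_2)}\abs{f(\tau_1)-f(\tau_2)}$ together with the fact that the multiplicative factors converge uniformly over the $h$-neighbourhood to bounded limits, with $p$ bounded away from zero by Assumption~\ref{as:MD}. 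This produces a single random Lipschitz constant $B_n=O_p(1)$ (resp.\ $B_n^*=O_p^*(1)$) with $\abs{R_n(\tau_0+\tau_1 h)-R_n(\tau_0+\tau_2 h)}\le B_n\abs{\tau_1-\tau_2}$ for all $\tau_1,\tau_2\in[-1,1]$.

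I would then fix $\eta>0$ and choose an $\eta$-net $\{s_0,\dots,s_K\}$ of $[-1,1]$, with $K$ independent of $n$. For $n$ large enough every $\tau_0+s_j h$ lies in $[\delta,1-\delta]$, so the pointwise $L_2$ bounds of Lemmas~\ref{lem:m_decomp} and~\ref{lem:mb_decomp}, combined with Markov's inequality and a union bound over the finitely many $j$, give $\max_{j}\abs{R_n(\tau_0+s_j h)}=o_p(1)$ (resp.\ $\max_j\abs{R_n^*(\tau_0+s_j h)}=o_p^*(1)$). For arbitrary $\tau_1$, picking the nearest grid point $s_j$ and using the Lipschitz bound gives
\begin{equation*}
\abs{R_n(\tau_0+\tau_1 h)}\le\max_{0\le j\le K}\abs{R_n(\tau_0+s_j h)}+B_n\eta.
\end{equation*}
Taking the supremum over $\tau_1\in[-1,1]$, then letting $n\to\infty$ and finally $\eta\to0$, delivers $\suptaut\abs{R_n(\tau_0+\tau_1 h)}=o_p(1)$; the bootstrap statement follows identically with $\Prob,\E$ and Lemma~\ref{lem:SE} replaced by $\Prob^*,\E^*$ and Lemma~\ref{lem:SEb}.

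The main obstacle is not the grid argument itself, which is routine once equicontinuity is in hand, but the uniform control of the multiplicative factors. One must verify that $\hat p(\tau_0+\tau_1 h)^{-1}$ and $\overline m_p(\tau_0+\tau_1 h)$ are uniformly (in sup over $\tau_1$) bounded in probability and stochastically equicontinuous, not merely bounded at a single point; this requires running the same grid-plus-equicontinuity device on $\hat p$ (whose equicontinuity again follows from Lemma~\ref{lem:SE} applied to $X_t=D_t$) and invoking that $p$ is bounded away from zero so that the reciprocal is well behaved. In the bootstrap case the additional subtlety is ensuring $\E\E^*B_n^{*2}$ is bounded uniformly, which hinges on the summability of the AWB autocovariances $\sum_i\theta^{\abs{i}/\ell}R_U(i)$ already exploited in Lemma~\ref{lem:mb_decomp}.
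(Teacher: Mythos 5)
Your proposal is correct and follows essentially the same route as the paper: decompose $R_n$ (and $R_n^*$) into its constituent terms, use the pointwise bounds from Lemmas \ref{lem:m_decomp} and \ref{lem:mb_decomp}, and upgrade to uniformity via the Lipschitz-in-$\tau_1$ bounds of Lemmas \ref{lem:SE} and \ref{lem:SEb} applied to the same choices of $X_t$ and $X_t^*$, with the product rule and the uniform control of $\hat p(\cdot)^{-1}$ handled as you describe. The only cosmetic difference is that you spell out the finite $\eta$-net chaining argument explicitly, whereas the paper invokes the standard result (Andrews, 1992; Davidson, Thm 21.10) that stochastic equicontinuity plus pointwise convergence yields uniform convergence.
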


\begin{proof}[{\bf Proof of Lemma \ref{lem:R_unif}}]
See Supplementary Appendix B.
\end{proof}

\begin{proof}[{\bf Proof of Theorem \ref{th:uniform}}]
It follows directly from Lemma \ref{lem:R_unif} that
\begin{align*}
& \suptaut \abs{Z_{\tau_0, n} (\tau) - p(\tau_0 + \tau h)^{-1} Z_{n,U} (\tau_0 + \tau h) - B_{as} (\tau_0)} \leq \suptaut \abs{R_n (\tau_0 + \tau h} = o_p(1),\\ 
& \suptaut \abs{Z_{\tau_0, n}^* (\tau) - p(\tau_0 + \tau h)^{-1}Z_{n,U}^* (\tau_0 + \tau h) - B_{as} (\tau_0)} \leq \suptaut \abs{R_n^* (\tau_0 + \tau h} = o_p^*(1),
\end{align*}
such that we only have to consider $p(\tau_0 + \tau h)^{-1} Z_{n,U} (\tau)$ and $p(\tau_0 + \tau h)^{-1} Z_{n,U}^* (\tau)$ in the following.

We next establish the asymptotic covariances. With
\begin{align*}
\E_D Z_{n,U}(\tau_0 + \tau_1 h) Z_{n,U}(\tau_0 + \tau_2 h) 
&= \frac{1}{nh} \sum_{i=-n+1}^{n-1} R_U(i) \sum_{t=1}^{n-\abs{i}} k_t(\tau_0 + \tau_1 h) k_{t+\abs{i}}(\tau_0 + \tau_2 h) \sigma_t \sigma_{t+\abs{i}} D_t D_{t+\abs{i}},
\end{align*}
we apply Lemma \ref{lem:D_results}$(iii)$ with $\tilde g_i (\tau_1, \tau_2) = \sigma(\tau_1) \sigma(\tau_2)$ to find that
\begin{align*}
&\suptaun \suptautt \norm{ \E_D Z_{n,U}(\tau_0 + \tau_1 h) Z_{n,U}(\tau_0 + \tau_2 h) - p(\tau_0) \sigma^2(\tau_0) \kappa (\tau_1 - \tau_2) \Omega_U}\\
&\qquad\leq \sum_{i=-n+1}^{n-1} \abs{R_U(i)} \left[\frac{C_1}{\sqrt{nh}} + \frac{C_2}{\sqrt{nh}} \max \left\{h, \frac{1}{n h^2} \right\} + S_n(i) \right] + \sum_{i=-\infty}^\infty \mathbbm{1} (\abs{i} \geq n) \abs{R_U(i)},\\
& \qquad\leq \frac{C_3}{\sqrt{nh}} + \frac{C_3}{\sqrt{nh}} \max \left\{h, \frac{1}{n h^2} \right\} + \phi_n,
\end{align*}
from which it follows by the law of iterated expectations that 
\begin{equation*}
\suptau \suptautt \abs{\E Z_{n,U}(\tau_0 + \tau_1 h) Z_{n,U}(\tau_0 + \tau_2 h) - p(\tau_0) \sigma_{W,\tau_0} (\tau_1, \tau_2)} = o(1).
\end{equation*}
It then follows that
\begin{align*}
&\suptau \suptautt \abs{\E Z_{\tau_0, n} (\tau_1) Z_{\tau_0, n} (\tau_2) - \sigma_{W,\tau_0}(\tau_1, \tau_2)} \\
&\leq \suptau \suptautt \abs{\E Z_{\tau_0, n} (\tau_1) Z_{\tau_0, n} (\tau_2) - p(\tau_0 + \tau h)^{-1} p(\tau_0 + \tau_2 h)^{-1} p(\tau_0)^2 \sigma_{W,\tau_0} (\tau_1, \tau_2)} \\
&\quad + \suptau \suptautt \abs{p(\tau_0 + \tau_1 h)^{-1} p(\tau_0 + \tau_2 h)^{-1} p(\tau_0)^2 - 1} \sigma_{W,\tau_0} (\tau_1, \tau_2) \leq o(1) + o(h) = o(1).
\end{align*}

We follow the same steps as in the proof of the asymptotic bootstrap variance in Lemma \ref{lem:mb_decomp} for the bootstrap covariances. Note that
\begin{equation*}
\E^* Z_{n,U}^* (\tau_0 + \tau_1 h) Z_{n,U}^* (\tau_0 + \tau_2 h) = \frac{1}{nh} \sum_{i=1-n}^{n-1} \dsum[\abs{i}]{t} k_t(\tau_0 + \tau_1 h) k_{t+i}(\tau_0 + \tau_2 h) D_t D_{t+i} z_t z_{t+i} \gamma^i,
\end{equation*}
where it follows from a straightforward adaptation of Lemma \ref{lem:lrv} allowing for different $\tau$'s that
\begin{align*}
&\E \abs{\E^* Z_{n,U}^* (\tau_0 + \tau_1 h) Z_{n,U}^* (\tau_0 + \tau_2 h) - \E_D \E^* Z_{n,U}^* (\tau_0 + \tau_1 h) Z_{n,U}^* (\tau_0 + \tau_2 h)} = o(1).
\end{align*}
To conclude this part, we can show as in Lemma \ref{lem:mb_decomp} that
\begin{align*}
\suptauns \suptautt \abs{\E_D \E^* Z_{n,U}^* (\tau_0 + \tau_1 h) Z_{n,U}^* (\tau_0 + \tau_2 h) - \E_D Z_{n,U}(\tau_0 + \tau_1 h) Z_{n,U}(\tau_0 + \tau_2 h)} = o_p(1).
\end{align*}

Finite-dimensional convergence of the vectors $\left(Z_{\tau_0,n} (\tau_1), \ldots, Z_{\tau_0,n} (\tau_m) \right)^\prime$ and $\left(Z_{\tau_0,n}^* (\tau_1), \ldots, Z_{\tau_0,n}^* (\tau_m) \right)^\prime$ for $(\tau_1, \ldots ,\tau_m)^\prime \in \left[-1,1\right]^m$ follows from Theorems \ref{th:asdis} and \ref{th:basdis} and the Cram\'er-Wold device; it remains to show tightness. By applying Lemma \ref{lem:SE} with $X_t = D_t z_t$ and Lemma \ref{lem:SEb} with $X_t^* = D_t z_t \xi_t^*$, it follows directly that
\begin{equation*}
\E\left( Z_{\tau_0,n}(\tau_1) - Z_{\tau_0,n}(\tau_2) \right)^2 \leq \E B_n^2 \abs{\tau_1 - \tau_2 }^2 \quad \text{and} \quad \E \E^* \left( Z_{\tau_0,n}(\tau_1) - Z_{\tau_0,n}(\tau_2) \right)^2 \leq \E \E^* B_n^{*2} (\tau_1 - \tau_2)^2,
\end{equation*}
where $\E B_n^2 \leq C$ and $\E \E^* B_n^{*2} \leq C$ and tightness follows by Theorem 12.3 of \citet{Billingsley}.
\end{proof}

\clearpage
\section{Additional Proofs}

In this appendix we prove the auxiliary lemmas not proven in the main paper.

\begin{proof}[{\bf Proof of Lemma \ref{lem:cov}}]
For $(i)$, note that, by the Cauchy-Schwarz inequality and Assumption \ref{as:MD}, we have that
\begin{align*}
\sum_{i=0}^{n-1} i \max_{1 \leq t \leq n-i} \abs{\cov (D_t, D_{t+i})} &= \sum_{i=0}^{n-1} i \sup_t \abs{\E \left\{ \left[D_t - \E D_t \right] \E \left[ D_{t+i} - \E D_{t+i} | \mathcal{F}_t \right] \right\}}\\
& \leq \sum_{i=0}^{n-1} i \max_{1 \leq t \leq n-i} \norm{ D_t - \E D_t} \norm{\E \left[ D_{t+i} - \E D_{t+i} | \mathcal{F}_t \right]}\\
& \leq \sum_{i=0}^{n-1} i C \zeta_i \leq C \sum_{i=0}^{\infty} i \zeta_i < \infty.
\end{align*}

For $(ii)$, it follows from Assumption \ref{as:LP} that
\begin{equation*}
\begin{split}
\sum_{i=1}^{n-1} i \max_{1 \leq t \leq n-i} \abs{\cov (u_t, u_{t+i})} &= \sum_{i=1}^{n-1} i \abs{R_U(i)} = \sigma_{\varepsilon}^2 \sum_{i=1}^{n-1} i \abs{\sum_{j=0}^\infty \psi_j \psi_{j+\abs{k}} }\\
& \leq \sum_{i=0}^{\infty}  \sum_{j=0}^\infty i\abs{\psi_i} \abs{\psi_j} = \left(\sum_{i=0}^\infty i \abs{\psi_i} \right) \left(\sum_{j=0}^\infty \abs{\psi_j} \right) < \infty.
\end{split}
\end{equation*}

Finally for $(iii)$, as $0 \leq D_t \leq 1$ for all $t$, it follows that $\abs{\E D_t D_{t+i} u_t u_{t+i}} \leq \abs{\E u_t u_{t+i}}$, such that the result directly follows from $(ii)$.
\end{proof}

\begin{proof}[{\bf Proof of Lemma \ref{lem:kernel_sum}}]
By the compact support of $K(\cdot)$, $k_t (\tau) = 0$ if $\abs{t/n - \tau} > C_K h$ for some $C_K > 0$. Therefore
\begin{align*}
&\suptau \max_{0 \leq i \leq n} \frac{1}{nh} \sum_{t=1}^{n} k_t (\tau) k_{t+i} (\tau) \leq \frac{1}{nh} \suptau \sup_\omega K(\omega) \sum_{t=n (\tau - C_K h)}^{n (\tau + C_K h)} k_t (\tau)\\
& \leq \frac{1}{nh} \sup_\omega K(\omega)^2 2 C_K n h \leq C. \qedhere
\end{align*}
\end{proof}

\begin{proof}[{\bf Proof of Lemma \ref{lem:kernel_function_limits}}]
For part $(i)$, note that
\begin{equation*}
\begin{split}
&\frac{1}{nh} \sum_{t=1}^{n} f\left(\frac{t}{n} \right) k_t(\tau) - f(\tau) \kappa_1 = \left[\frac{1}{nh} \sum_{t=1}^{n} f\left(\frac{t}{n} \right) k_t(\tau) - h^{-1} \int_{0}^{1} f(x) K\left(\frac{x-\tau}{h}\right) \d x \right]\\
&\quad + \left[h^{-1} \int_{0}^{1} f(x) K\left(\frac{x-\tau}{h}\right)\d x - f(\tau) \kappa_1 \right] = I_{1,n} (\tau) + I_{2,n} (\tau).
\end{split}
\end{equation*}
We first consider $I_{1,n} (\tau)$. For a continuous and Riemann-integrable function $g(\cdot)$, we have the integral approximation bound \citep[cf.][p.~79, (6.5)]{Buhlmann}
\begin{equation}
\abs{\frac{1}{n}\sum_{t=1}^n g\left(\frac{t}{n} \right)-\int_0^1 g(z) dz }\leq\sup_{\abs{x-y} \leq n^{-1}} \abs{g(x)-g(y)}.
\label{eq:int_approx}
\end{equation}
Take $g(x)=\frac{1}{h} f(x) K\left(\frac{x-\tau}{h}\right)$, then the bound yields
\begin{align*}
\abs{I_{1,n} (\tau)} &\leq\sup_{|x-y|\leq n^{-1}} h^{-1}\left|K\left(\frac{x-\tau}{h}\right)f(x)-K\left(\frac{y-\tau}{h}\right)f(y)\right|\\
&\leq\sup_{|x-y|\leq n^{-1}}\left[K\left(\frac{x-\tau}{h}\right)\frac{\left|x-y\right|}{h} + f(y) \frac{\left|x-y\right|}{h^2}\right]\leq\frac{C_1}{nh} + \frac{C_2}{nh^2}.
\end{align*}

For $I_{2,n} (\tau)$ perform a change of variables with $u=(x-\tau)/h$ such that $\int_{0}^{1} K\left(\frac{x-\tau}{h}\right) f(x)dx = \int_{-\tau/h}^{(1-\tau)/h} K(u) f(\tau+uh)du$. Take an $N$ such that for all $n>N$, $\min\{\tau/h,(1-\tau)/h\}$ will be larger than the bounds of the compact support of $K(\cdot)$; it then follows that $\int_{-\tau/h}^{(1-\tau)/h} K(u) f(uh+\tau)du = \int_{\mathbb{R}} K(u) f(\tau+uh)du$ for all $n>N$. Then a Taylor expansion of $f(\tau+uh)$ around $f(\tau)$ yields
\begin{equation*}
f(\tau+uh) = f(\tau) + f^{(1)}(\tau) uh + \frac{1}{2}f^{(2)}(\tau^*)(uh)^2,
\end{equation*}
where $\abs{\tau^* - \tau} \leq u h$. As $\mu_1 = \int_{\mathbb{R}} K(u)u \d u= 0$, it follows that
\begin{align*}
\int_{\mathbb{R}} K(u)f(\tau+uh) \d u &= f(\tau) \int_{\mathbb{R}} K(u) \d u + f^{(1)}(\tau) h \int_{\mathbb{R}} K(u)u \d u + \frac{1}{2} h^2 \int_{\mathbb{R}} f^{(2)}(\tau^*) K(u) u^2 \d u \\
&= f(\tau)\kappa_1 + \frac{1}{2} h^2 \int_{\mathbb{R}} f^{(2)}(\tau^*) K(u) u^2 \d u.
\end{align*}
It then follows from the Lipschitz continuity of $f^{(2)}(\cdot)$ that $\abs{f^{(2)}(\tau^*) - f^{(2)}(\tau)} \leq C uh$ and consequently
\begin{align*}
\abs{I_{2,n}(\tau)} &\leq \abs{\int_{\mathbb{R}} K(u) f(uh+\tau)du - f(\tau) \kappa_1 - \frac{1}{2} h^2 \int_{\mathbb{R}} f^{(2)}(\tau^*) K(u) u^2 \d u} \\
&\quad + \frac{1}{2} h^2 \abs{\int_{\mathbb{R}} \left[f^{(2)}(\tau^*) - f^{(2)}(\tau) \right] K(u) u^2 \d u} \leq 0 + \frac{C}{2} h^3 \int_{\mathbb{R}} K(u) \abs{u}^3 \d u \leq C_1 h^3,
\end{align*}
as $\int_{\mathbb{R}} K(u) \abs{u}^3 \d u < \infty$ by the compact support of $K(\cdot)$. As none of the bounds depend on $\tau$, and the Taylor expansion is appropriate for any $\tau \in (0,1)$, the result follows.

For part $(ii)$, we can write
\begin{align*}
&\frac{1}{nh} \sum_{t=1}^{n-i} g \left(\frac{t}{n}, \frac{t+i}{n} \right) k_t(\tau_0 + \tau_1 h) k_{t+i}(\tau_0 + \tau_2 h) - g(\tau_0, \tau_0) \kappa (\tau_1 - \tau_2) \\
&= \frac{1}{nh} \sum_{t=1}^{n-i} k_t(\tau_0 + \tau_1 h) \left[g \left(\frac{t}{n}, \frac{t+i}{n} \right) k_{t+i}(\tau_0 + \tau_2 h) - g \left(\frac{t}{n}, \frac{t}{n} \right) k_t(\tau_0 + \tau_2 h) \right]\\
&\quad + \frac{1}{nh} \sum_{t=1}^{n-i} g \left(\frac{t}{n}, \frac{t}{n}\right) k_t(\tau_0 + \tau_1 h) k_t(\tau_0 + \tau_2 h) - g(\tau_0, \tau_0) \kappa (\tau_1 - \tau_2) = II_{1,n} (\tau) + II_{2,n}(\tau),
\end{align*}
where $\tau = (\tau_0, \tau_1, \tau_2)^\prime$. Then
\begin{align*}
\abs{II_{1,n} (\tau)} &\leq \frac{1}{nh} \sum_{t=1}^{n-i} k_t(\tau_0 + \tau_1 h) \left\{\abs{g \left(\frac{t}{n}, \frac{t+i}{n} \right)} \abs{ k_{t+i}(\tau_0 + \tau_2 h) - k_t(\tau_0 + \tau_2 h)} + k_{t}(\tau_0 + \tau_2 h) \right. \\
&\quad \left. \times \abs{g \left(\frac{t}{n}, \frac{t}{n} \right)- g \left(\frac{t}{n}, \frac{t+i}{n} \right)} \right\} \leq \frac{C_g}{nh} \sum_{t=1}^{n} k_t(\tau_0 + \tau_1 h) \left(\frac{i}{nh} +  \frac{i}{n} \right) \leq C_g C_1 \frac{i}{n h}
\end{align*}
and
\begin{align*}
&\abs{II_{2,n} (\tau)} = \frac{1}{nh} \sum_{t=1}^{n-i} k_t(\tau_0 + \tau_1 h) k_t(\tau_0 + \tau_2 h) \left[g \left(\frac{t}{n}, \frac{t}{n}\right) - g(\tau_0, \tau_0) \right] \\
&\quad + \frac{1}{h} g(\tau_0, \tau_0) \left[\frac{1}{n} \sum_{t=1}^{n-i} k_t(\tau_0 + \tau_1 h) k_t(\tau_0 + \tau_2 h) - \int_{0}^{1} K\left(\frac{x - \tau_0 - \tau_1 h}{h}\right) K\left(\frac{x - \tau_0 - \tau_2 h}{h}\right) \d x \right]\\
&\quad + g(\tau_0, \tau_0) \left[h^{-1} \int_{0}^{1} K\left(\frac{x - \tau_0 - \tau_1 h}{h}\right) K\left(\frac{x - \tau_0 - \tau_2 h}{h}\right) \d x - \kappa (\tau_1 - \tau_2) \right]\\
&= II_{21,n} (\tau) + II_{22,n} (\tau) + II_{23,n} (\tau).
\end{align*}
By the compact support of $K(\cdot)$, $k_t (\tau_0 + \tau_1 h) >0$ only if $\abs{t/n - \tau_0 - \tau_1 h} \leq C_K h$ for some $C_K > 0$, which implies that $\abs{t/n - \tau_0} \leq C h$. Therefore
\begin{align*}
\abs{II_{21,n} (\tau)} &\leq \frac{1}{nh} \sum_{t=1}^{n} k_t(\tau_0 + \tau_1 h) k_t(\tau_0 + \tau_2 h) \abs{g\left(\frac{t}{n}, \frac{t}{n}\right) - g(\tau_0, \tau_0)} \\
&\leq \frac{C_g}{nh} \sum_{t=1}^{n} k_t(\tau_0 + \tau_1 h) k_t(\tau_0 + \tau_2 h) \abs{t/n - \tau_0} \leq C_g C_2 h.
\end{align*}
For $II_{22,n} (\tau)$, we again apply the integral approximation bound \eqref{eq:int_approx} to find
\begin{align*}
\abs{II_{22,n} (\tau)} &\leq \frac{1}{h} g(\tau, \tau)\abs{\frac{1}{n} \sum_{t=1}^{n} k_t(\tau_0 + \tau_1 h) k_t(\tau_0 + \tau_2 h) -  \int_{0}^{1} K\left(\frac{x - \tau_0 - \tau_1 h}{h}\right) K\left(\frac{x - \tau_0 - \tau_2 h}{h}\right) \d x}\\
&\quad +\frac{1}{nh} g(\tau, \tau) \sum_{t=n-i}^{n} k_t(\tau_0 + \tau_1 h) k_t(\tau_0 + \tau_2 h)\\
&\leq \frac{C_g}{h} \sup_{\abs{x-y}\leq \frac{1}{n}} \abs{K\left(\frac{x - \tau_0 - \tau_1 h}{h}\right) K\left(\frac{x - \tau_0 - \tau_2 h}{h}\right) - K\left(\frac{y - \tau_0 - \tau_1 h}{h}\right) K\left(\frac{y - \tau_0 - \tau_2 h}{h}\right)}\\
&\quad +\frac{C_g }{nh} \sum_{t=n-i}^{n} k_t(\tau_0 + \tau_1 h) k_t(\tau_0 + \tau_2 h) \leq \frac{C_g C_3}{nh^2} + C_g S_{n}(i),
\end{align*}
where $S_{n}(i) = \frac{1}{nh} \suptau \sum_{t=n-i}^{n} k_t(\tau)^2$. Now take a sequence $M = M(n)$ such that $M \rightarrow \infty$ as $n \rightarrow \infty$ and $M/n \rightarrow 0$. Then take an $N_1$ such that for all $n > N_1$, $M/n \leq \delta - \epsilon$ for some $\epsilon > 0$, such that $k_t (\tau) = 0$ for all $t \leq M$ and $\tau \in [\delta, 1- \delta]$. Then for all $n > N_1$,
\begin{align*}
\sum_{i=1}^{n-1} \beta_i S_{n} (i) \leq \frac{1}{nh} \sum_{i=M+1}^{\infty} \beta_i \suptau \sum_{t=1}^n k_t(\tau)^2 \leq C \sum_{i=M+1}^{\infty} \beta_i.
\end{align*}
As $M \rightarrow \infty$ and $\sum_{i=1}^\infty \beta_i < \infty$, $\phi_n = C \sum_{i=M+1}^{\infty} \beta_i = o(1)$.

Finally, for $II_{23,n}(\tau)$ perform a change of variables with $u=(x-\tau_0)/h$ such that
\begin{align*}
\frac{1}{h}\int_{0}^{1} K\left(\frac{x - \tau_0 - \tau_1 h}{h}\right) K\left(\frac{x - \tau_0 - \tau_2 h}{h}\right) \d x = \int_{-\tau_0/h}^{(1-\tau_0)/h} K\left(u - \tau_1 \right) K\left(u - \tau_2 \right) \d u
\end{align*}
Take an $N_2$ such that for all $n>N_2$, $\min\{\tau_0/h,(1-\tau_0)/h\}$ will be larger than the bounds of the compact support of $K(\cdot)$ such that $\int_{-\tau_0/h}^{(1-\tau_0)/h} K\left(u - \tau_1 \right) K\left(u - \tau_2 \right) \d u = \int_{-\infty}^{\infty} K\left(u - \tau_1 \right) K\left(u - \tau_2 \right) \d u$ for all $n>N_2$. Finally, a second change of variables with $\omega = u - \tau_1$ shows that $\int_{-\infty}^{\infty} K\left(u - \tau_1 \right) K\left(u - \tau_2 \right) \d u = \kappa_2 (\tau_1 - \tau_2)$ and thus $II_{23,n}(\tau) = 0$ for $n > N_2$. As no bounds depend on $\tau_0$, $\tau_1$ or $\tau_2$, all results hold uniformly for $n > N = \max\{N_1, N_2\}$.
\end{proof}

\begin{proof}[{\bf Proof of Lemma \ref{lem:D_results}}]
For part $(i)$, write $\bar{f}_n (\tau_1, \tau_2) = f (\tau_1) - l_n(\tau_2)$.
\begin{align*}
&\E Z_{n,D,f} (\tau)^2 = \frac{1}{nh} \sum_{s=1}^{n} \sum_{t=1}^{n} \bar{f}_n \left(\frac{s}{n}, \tau \right) \bar{f}_n \left(\frac{t}{n}, \tau \right) k_s(\tau) k_t(\tau) \cov(D_s, D_t)\\
& = \frac{1}{nh} \sum_{i=-n+1}^{n-1} \sum_{t=1}^{n-\abs{i}} k_t(\tau) k_{t+\abs{i}}(\tau) \bar{f}\left(\frac{t}{n}, \tau \right) \bar{f}\left(\frac{t+i}{n}, \tau \right) R_{D,\abs{i}} \left(\frac{t}{n}, \frac{t+i}{n} \right)\\
&= \bar{f}_n(\tau, \tau) \Omega_D(\tau) \kappa_2 + A_{n} (\tau),
\end{align*}
where
\begin{align*}
A_n(\tau) &= 2 \kappa_2 \bar{f}_n(\tau, \tau) \sum_{i=n}^\infty R_{D,i} (\tau, \tau) + \sum_{i=-n+1}^{n-1} \left[\frac{1}{nh} \sum_{t=1}^{n-\abs{i}} k_t(\tau) k_{t+\abs{i}}(\tau) \bar{f}_n\left(\frac{t}{n}, \tau \right) \bar{f}_n\left(\frac{t+\abs{i}}{n}, \tau \right) \right.\\
&\quad \times \left. R_{D,\abs{i}} \left(\frac{t}{n}, \frac{t+\abs{i}}{n} \right) - \bar{f}_n(\tau, \tau)^2 R_{D,\abs{i}} (\tau, \tau) \kappa_2 \right].
\end{align*}
Letting $A_n (t,i,\tau)= k_t(\tau) k_{t+i}(\tau) R_{D,i} \left(\frac{t}{n}, \frac{t+i}{n} \right)$ and $A (i,\tau) = 2 \kappa_2  R_{D,i} \left(\tau, \tau \right)$, we can decompose the inner sum as
\begin{equation} \label{eq:f_q_decomp}
\begin{split}
&\abs{\frac{1}{nh} \sum_{t=1}^{n-i} \bar{f}_n\left(\frac{t}{n}, \tau \right) \bar{f}_n \left(\frac{t+i}{n}, \tau \right) A_n (t,i,\tau) - \bar{f}_n(\tau, \tau)^2 A (i, \tau)} \\
&\leq \abs{\frac{1}{nh}\sum_{t=1}^{n-i} f\left(\frac{t}{n}\right) f\left(\frac{t+i}{n} \right) A_n (t,i,\tau) - f(\tau)^2 A (i,\tau)}\\
&\quad + \abs{l_n(\tau)} \abs{\frac{1}{nh} \sum_{t=1}^{n-i} f\left(\frac{t}{n}\right) A_n (t,i,\tau) - f(\tau) A (i,\tau)}\\
&\quad + \abs{l_n(\tau)} \abs{\frac{1}{nh} \sum_{t=1}^{n-i} f\left(\frac{t+i}{n}\right) A_n (t,i,\tau) - f(\tau) A (i,\tau)} + l_n(\tau)^2 \abs{\frac{1}{nh} \sum_{t=1}^{n-i} A_n (t,i,\tau) - A (i,\tau)}
\end{split}	
\end{equation}
By applying Lemma \ref{lem:kernel_function_limits}$(ii)$ to each of these terms, with $g_i \left(\tau_1, \tau_2 \right) = f(\tau_1) f (\tau_2) R_{D,i} \left(\tau_1, \tau_2\right)$ for the first term and defined analogously for the rest, we find that
\begin{align*}
\sup_{\tau \in [\delta,1-\delta]} \abs{A_n (\tau)} &\leq C_1 \sum_{i=1}^\infty \sup_{\tau \in [0,1]} \abs{R_{D,i} (\tau,\tau)} + \frac{2}{nh} C_1 \sum_{i=1}^{n-1} i C_g (i) + C_2 \left\{h, \frac{1}{nh^2} \right\} \sum_{i=1}^{n-1} C_g (i)\\
&\quad + \sum_{i=1}^{n-1} C_g(i) S_n(i) \leq C \max \left\{h, \frac{1}{nh^2} \right\} + \phi_n,
\end{align*}
with $C_{g}(i) = \sup_{(\tau_1, \tau_2) \in [0,1]^2} \abs{R_{D,i}(\tau_1, \tau_2)} \max\{\sup_{\tau \in[0,1]} f(\tau)^2, 1\}$, where \begin{align*}
&\sum_{i=0}^\infty i C_g(i) \leq C \sum_{i=0}^\infty i \sup_{(\tau_1, \tau_2)} \abs{R_{D,i} (\tau_1, \tau_2)} < \infty,\\
&\lim_{n \rightarrow \infty} \phi_n = C_1 \lim_{n \rightarrow \infty} \sum_{i=n}^\infty \sup_{\tau \in [0,1]} \abs{R_{D,i} (\tau,\tau)} + \lim_{n \rightarrow \infty} \sum_{i=1}^{n-1} C_g(i) S_n(i) = 0.
\end{align*}

Part $(ii)$ follows directly by noting that
\begin{equation*}
\begin{split}
&\frac{1}{nh} \sum_{t=1}^{n} f \left(\frac{t}{n} \right) k_t(\tau) D_t - p(\tau) f(\tau) - h^2 p^{(2)} (\tau) \mu_2 = \left[\frac{1}{nh} \sum_{t=1}^{n} f\left(\frac{t}{n} \right) k_t(\tau) \left[D_t - p\left(\frac{t}{n} \right)\right] \right]\\
&\quad + \left[\frac{1}{nh} \sum_{t=1}^{n} f\left(\frac{t}{n} \right) p\left(\frac{t}{n} \right) k_t(\tau) - p(\tau) f(\tau) - h^2 [fp]^{(2)} (\tau) \mu_2 \right] = Z_{n,D,f,0} (\tau) + R_{n,D} (\tau),
\end{split}
\end{equation*}
as it follows directly from Lemma \ref{lem:kernel_function_limits}$(i)$ that $\suptau \abs{R_{n,D} (\tau)} < C \max \left\{h^3, \frac{1}{nh}\right\}$.

For part $(iii)$, we can write
\begin{equation*}
\begin{split}
&\frac{1}{nh} \sum_{t=1}^{n-i} \tilde g_i\left(\frac{t}{n}, \frac{t+i}{n} \right) k_t(\tau_0 + \tau_1 h) k_{t+i}(\tau_0 + \tau_2 h) D_t D_{t+i} - p(\tau_0) \tilde g_i(\tau_0, \tau_0) \kappa (\tau_1 - \tau_2)\\
&= \left[\frac{1}{nh} \sum_{t=1}^{n-i} \tilde g_i\left(\frac{t}{n}, \frac{t+i}{n} \right) k_t(\tau_0 + \tau_1 h) k_{t+i}(\tau_0 + \tau_2 h) \left[D_t D_{t+i} - \E D_t D_{t+i} \right] \right]\\
&\quad + \left[\frac{1}{nh} \sum_{t=1}^{n-i} \tilde g_i \left(\frac{t}{n}, \frac{t+i}{n} \right) k_t(\tau_0 + \tau_1 h) k_{t+i}(\tau_0 + \tau_2 h) \E D_t D_{t+i} - p(\tau_0) \tilde g_i(\tau_0, \tau_0) \kappa (\tau_1 - \tau_2) \right]\\
& = III_{n,1} (\tau) + III_{n,2} (\tau).
\end{split}
\end{equation*}
Define $X_t (i) = k_{t-i}(\tau_0 + \tau_1 h) k_t(\tau_0 + \tau_2 h) \tilde g_i\left(\frac{t-i}{n}, \frac{t}{n} \right) \left[D_{t-i} D_{t} - \E(D_{t-i} D_{t}) \right]$; it then follows from Assumption \ref{as:MD} that $X_{t}(i)$ is an $L_2$- mixingale with $\norm{\E_{t-j} X_{t}(i)} \leq k_{t-i}(\tau_0 + \tau_1 h) k_{t}(\tau_0 + \tau_2 h) C_g(i) \zeta_j$. Then, by Theorem 1.6 of \citet{McLeish} and Lemma \ref{lem:kernel_function_limits},
\begin{align*}
\E \abs{III_{n,1} (\tau)} &\leq \frac{1}{nh} \E \abs{\sum_{t=i+1}^{n} X_{t}(i)} \leq \frac{C}{nh} \left(\sum_{t=1}^{n-i} k_{t-i}(\tau_0 + \tau_1 h)^2 k_{t}(\tau_0 + \tau_2 h)^2 C_g(i)^2 \right)^{1/2} \\
&\leq \frac{C}{nh} \left( C_g(i)^2\sum_{t=i+1}^{n} \suptau k_{t-i}(\tau)^2 k_{t}(\tau)^2 \right)^{1/2} \leq \frac{C_1}{\sqrt{nh}} C_g(i).
\end{align*}
Finally, as $\E D_t D_{t+i} = p\left(\frac{t}{n} \right) p\left(\frac{t+i}{n} \right) + R_{D,i} \left(\frac{t}{n}, \frac{t+i}{n} \right)$, it follows directly from Lemma \ref{lem:kernel_function_limits}$(ii)$ that 
\begin{equation*}
\suptau \abs{III_{n,2}(\tau)} \leq C_g(i) \left[\frac{C_1}{nh} + C_2 \max \left\{h, \frac{1}{nh^2} \right\} + S_n(i)\right],
\end{equation*}
by taking $g_i (\tau_1, \tau_2) = \tilde g_i (\tau_1, \tau_2) \left[p\left(\frac{t}{n} \right) p\left(\frac{t+i}{n} \right) + R_{D,i} \left(\frac{t}{n}, \frac{t+i}{n} \right)\right]$.
\end{proof}

\begin{proof}[{\bf Proof of Lemma \ref{lem:lrv}}]
The proof is an extension of the proof of Lemma 5 in \citet{Jansson02}. As $\E u_t u_{t+i} = \sigma_\varepsilon^2 \sum_{j} \psi_j \psi_{j+i}$, we can write
\begin{equation*}
\begin{split}
A_{n} (\tau) &= \frac{1}{nh} \sum_{t=1}^{n-i} k_t(\tau) k_{t+i}(\tau) \sigma_t \sigma_{t+i} D_t D_{t+i} \left[u_t u_{t+i} - \E u_t u_{t+i} \right] \\
&= \sigma_\varepsilon^2 \sum_{j=0}^\infty \psi_j \psi_{j+i} \left[\frac{1}{nh} \sum_{t=1}^{n-i}k_t(\tau) k_{t+i} D_t D_{t+i} (\tau) \sigma_t \sigma_{t+i} (\varepsilon_{t-j}^2 - 1) \right]\\
&\quad + \sigma_\varepsilon^2 \sum_{j=0}^\infty \sum_{m=0}^\infty \mathbbm{1}(m \neq j+i) \psi_j \psi_m \frac{1}{nh} \sum_{t=1}^{n-i} k_t(\tau) k_{t+i}(\tau) D_t D_{t+i} \sigma_t \sigma_{t+i} \varepsilon_{t-j} \varepsilon_{t+i-m}.
\end{split}
\end{equation*}
Then
\begin{equation*}
\begin{split}
\E \abs{A_{n} (\tau)} &\leq \sigma_\varepsilon^2 \sup_{j \geq 0} \max_{0 \leq i \leq n-1} \E \abs{ \frac{1}{nh} \sum_{t=1}^{n-i} D_t D_{t+i} k_t(\tau) k_{t+i}(\tau) \sigma_t \sigma_{t+i} (\varepsilon_{t-j}^2 - 1) } \left(\sum_{j=0}^\infty \abs{\psi_j} \abs{\psi_{j+i}} \right) \\
&\quad + \sigma_\varepsilon^2 \frac{1}{\sqrt{nh}} \sup_{j,m \geq 0} \max_{0 \leq i \leq n-1} \E \abs{\frac{1}{\sqrt{nh}} \mathbbm{1}(m \neq j+i) \sum_{t=1}^{n-i} D_t D_{t+i} k_t(\tau) k_{t+i}(\tau) \sigma_t \sigma_{t+i} \varepsilon_{t-j} \varepsilon_{t+i-m}}\\
&\quad \times \sum_{j=0}^\infty \sum_{m=0}^\infty \abs{\psi_j} \abs{\psi_m} = \phi_{n} (\tau) \beta_i + \frac{1}{\sqrt{nh}} \eta_{n} (\tau),
\end{split}
\end{equation*}
where 
\begin{align*}
\phi_n(\tau) &= \sigma_\varepsilon^2 \sup_{j \geq 0} \max_{0 \leq i \leq n-1} \E \abs{\frac{1}{nh} \sum_{t=1}^{n-i} D_t D_{t+i} k_t(\tau) k_{t+i}(\tau) \sigma_t \sigma_{t+i} (\varepsilon_{t-j}^2 - 1)}, \\
\beta_i &= \sum_{j=0}^\infty \abs{\psi_j} \abs{\psi_{j+i}},\\
\eta_n(\tau) &= \sigma_\varepsilon^2\sup_{j,m \geq 0} \max_{0 \leq i \leq n-1} \E \abs{\frac{1}{\sqrt{nh}} \mathbbm{1}(m \neq j+i) \sum_{t=1}^{n-i} D_t D_{t+i} k_t(\tau) k_{t+i}(\tau) \sigma_t \sigma_{t+i} \varepsilon_{t-j} \varepsilon_{t+i-m}}\\
&\quad \times \sum_{j=0}^\infty \sum_{m=0}^\infty \abs{\psi_j} \abs{\psi_m}
\end{align*}
First note that
\begin{equation*}
\sum_{i=0}^{\infty} \beta_i \leq \sum_{i=0}^\infty \sum_{j=0}^\infty \abs{\psi_j} \abs{\psi_{j+i}} \leq \left( \sum_{j=0}^\infty\abs{\psi_j} \right)^2 < \infty.
\end{equation*}
Next, let $w_t = \varepsilon_t^2 -1$. By Assumption \ref{as:LP}, $\E \abs{w_t}^p \leq \E \abs{\varepsilon_t}^{2 p} < \infty$ for some $p>1$,  which implies that $w_t$ is uniformly integrable and for every $\epsilon > 0$, there exists a $\lambda_{\epsilon} > 0$ such that $\sup_{t} \E \abs{w_t} 1(\abs{w_t} > \lambda_{\epsilon}) < \epsilon$. Then define $w_{1,t} = w_t 1(\abs{w_t} \leq \lambda_{\epsilon})$ and $w_{2,t} = w_t - w_{1,t} = w_t 1(\abs{w_t} > \lambda_{\epsilon})$.

As in \citet[Proof of Theorem 2.22]{HH}, by the Marcinkiewicz-Zygmund inequality we have that
\begin{equation*}
\begin{split}
&\E \abs{\frac{1}{nh} \sum_{t=1}^{n-i} D_t D_{t+i} k_t k_{t+i} \sigma_t \sigma_{t+i} w_t} \leq C \frac{1}{nh} \left(\sum_{t=1}^{n-i} k_t^2(\tau) k_{t+i}^2(\tau) \sigma_t^2 \sigma_{t+i}^2 \E w_{t-j}^2 \right)^{1/2}\\
&\quad \leq C_1 \left[\frac{1}{nh}\left(\sum_{t=1}^{n-i} k_t^2 k_{t+i}^2 \sigma_t^2 \sigma_{t+i}^2 \E w_{1,t-j}^2 \right)^{1/2} +  \frac{1}{nh} \left(\sum_{t=1}^{n-i} k_t^2(\tau) k_{t+i}^2(\tau) \sigma_t^2 \sigma_{t+i}^2 \E w_{2,t-j}^2 \right)^{1/2} \right]\\
&= C_1 (\phi_{n,1}(\tau) + \phi_{n,2}(\tau)).
\end{split}
\end{equation*}
As $\abs{w_{1,t-j}} \leq \lambda_\epsilon$, $\sup_{\tau \in[0,1]} \sigma(\tau)^2 < \infty$, it follows from the stationarity of $w_{1,t}$ and Lemma \ref{lem:kernel_sum} that
\begin{equation*}
\begin{split}
\suptau \phi_{n,1} (\tau) \leq \frac{1}{nh} \left(\sup_{\tau \in [0,1]} \sigma(\tau)^4 \E w_{1,t}^2 \suptau \sum_{t=1}^{n-i} k_t^2(\tau) k_{t+i}^2(\tau)\right)^{1/2} \leq \frac{C}{\sqrt{nh}} \lambda_{\epsilon}.
\end{split}
\end{equation*}
Furthermore we get that
\begin{equation*}
\begin{split}
\suptau \phi_{n,2}(\tau) &\leq \frac{1}{nh} \sup_{\tau \in [0,1]} \sigma(\tau)^2 \suptau \sum_{t=1}^{n-i} k_t(\tau) k_{t+i}(\tau) \E \abs{w_{2,t-j}} \leq C_3 \epsilon,
\end{split}
\end{equation*}
from which it follows that $\suptau \phi_n (\tau) \leq C[ (n h)^{-1/2} \lambda_\epsilon + \epsilon] = o(1) + C \epsilon$. As we can make $\epsilon$ arbitrarily small, it follows that $\phi_n = \suptau \phi_n(\tau) \rightarrow 0$ as $n \rightarrow \infty$ for all $\tau \in (0,1)$.

Next we look at $\eta_n (\tau)$. First note that $\sum_{j=0}^\infty \sum_{m=0}^\infty \abs{\psi_j} \abs{\psi_m} = \left( \sum_{j=0}^\infty\abs{\psi_j} \right)^2 < \infty$. Next, note that by Jensen's inequality
\begin{equation*}
\begin{split}
&\E \abs{\frac{1}{\sqrt{nh}} \mathbbm{1}(m \neq j+i) \sum_{t=1}^{n-i} D_t D_{t+i} k_t(\tau) k_{t+i}(\tau) \sigma_t \sigma_{t+i} \varepsilon_{t-j} \varepsilon_{t+i-m}}\\
&\quad \leq \frac{1}{\sqrt{nh}} \mathbbm{1}_{m \neq j+i} \left[\E \left( \sum_{t=1}^{n-i} D_t D_{t+i} k_t(\tau) k_{t+i}(\tau) \sigma_t \sigma_{t+i} \varepsilon_{t-j} \varepsilon_{t+i-m} \right)^2 \right]^{1/2}\\
&\quad \leq \frac{1}{\sqrt{nh}} \mathbbm{1}_{m \neq j+i} \left[\sum_{s=1}^{n-i} \sum_{t=1}^{n-i} k_s(\tau) k_t(\tau) k_{s+i}(\tau) k_{t+i}(\tau) \sigma_s \sigma_t \sigma_{s+i} \sigma_{t+i} \E \varepsilon_{s-j}\varepsilon_{t-j} \varepsilon_{s+i-m} \varepsilon_{t+i-m} \right]^{1/2}.\\
\end{split}
\end{equation*}
As $\mathbbm{1}(m \neq j+i) \E \varepsilon_{s-j}\varepsilon_{t-j} \varepsilon_{s+i-m} \varepsilon_{t+i-m}$ can only be non-zero if $t=s$, we can deduce that 
\begin{equation*}
\begin{split}
&\suptau \E \abs{\frac{1}{\sqrt{nh}} \mathbbm{1}(m \neq j+i) \sum_{t=1}^{n-i} D_t D_{t+i} k_t(\tau) k_{t+i}(\tau) \sigma_t \sigma_{t+i} \varepsilon_{t-j} \varepsilon_{t+i-m}}\\
&\quad \leq \frac{1}{\sqrt{nh}} \left[\suptau \sum_{t=1}^{n-i} k_t^2(\tau) k_{t+i}^2(\tau) \sigma_t^2 \sigma_{t+i}^2 \right]^{1/2} \leq \frac{1}{\sqrt{nh}} \left[C nh \sup_{\tau\in[0,1]} \sigma(\tau)^4 \right]^{1/2} \leq C.
\end{split}
\end{equation*}
Let $\eta_n = \suptau \eta_n (\tau)$, then the above shows that $\limsup_{n \rightarrow \infty} \eta_n < \infty$.
\end{proof}

\begin{proof}[{\bf Proof of Lemma \ref{lem:SE}}]
Define the ``difference'' operator $\Delta_{\tau_1, \tau_2}$ applied to any function $f(\cdot)$ as
\begin{equation} \label{eq:Delta}
\Delta_{\tau_1, \tau_2} f(\tau_0) = f(\tau_0 + \tau_1 h) - f(\tau_0 + \tau_2 h).
\end{equation}

Let $A_n (\tau_0, \tau_1, \tau_2) = \frac{1}{\sqrt{nh}} \sum_{t=1}^{n} \Delta_{\tau_1, \tau_2} k_t(\tau_0) \left[f(t/n) - q_n (\tau)\right] X_t$. By the Markov inequality, for any $\epsilon > 0$
\begin{equation} \label{eq:B_n}
\P \left[\frac{\abs{A_n (\tau_0, \tau_1, \tau_2)}}{ \abs{\tau_1 - \tau_2}} \leq \epsilon  \right] \leq \epsilon^{-2} \frac{\E A_n (\tau_0, \tau_1, \tau_2)^2}{(\tau_1 - \tau_2)^2 }.
\end{equation}
We will now show that, for large enough $n$, \eqref{eq:B_n} can be bounded by a constant, which is sufficient for the result to hold.

Note that
\begin{align*}
\E A_n (\tau_0, \tau_1, \tau_2)^2 &= \frac{1}{nh} \sum_{i=-n+1}^{n-1}\sum_{t=1}^n \Delta_{\tau_1, \tau_2} k_t (\tau_0) \Delta_{\tau_1, \tau_2} k_{t+i} (\tau_0) \E X_t X_{t+\abs{i}}\\
&\leq \frac{1}{nh} \sum_{i=-n+1}^{n-1} \sup_{1 \leq t \leq n - i} \abs{\E X_t X_{t+\abs{i}}} \sum_{t=1}^n \abs{\Delta_{\tau_1, \tau_2} k_t (\tau_0)} \abs{\Delta_{\tau_1, \tau_2} k_{t+i} (\tau_0) }.
\end{align*}
By the compact support of the kernel, there a $C_K >0 $ such that $K(x) = 0$ for all $\abs{x} > C_K$. Without loss of generality we henceforth assume that $C_K=1$. Then note that $k_t (\tau_0 + \tau_1 h) > 0$ only if $\abs{t/n - \tau_0 - \tau_1 h} < h$. As $\{t: \abs{t/n - \tau_0 - \tau_1 h} < h\} \subseteq \{t: \abs{t/n - \tau_0} < 2h\}$, this implies that $\sum_{t=1}^n k_t (\tau_0 + \tau_1 h) = \sum_{t=n(\tau_0-2h)}^{n(\tau_0+2h)} k_t (\tau_0 + \tau_1 h)$. Furthermore, the Lipschitz property of the kernel implies that $\abs{\Delta_{\tau_1, \tau_2} k_t (\tau_0)} \leq C \abs{\tau_1 - \tau_2}$. We then find that
\begin{align*}
\E A_n (\tau_0, \tau_1, \tau_2)^2 &\leq \sum_{i=-n+1}^{n-1} \sup_{1 \leq t \leq n - i} \abs{\E X_t X_{t+\abs{i}}} \frac{1}{nh} \sum_{t=n(\tau_0-2h)}^{n(\tau_0+2h)} C (\tau_1 - \tau_2)^2\\
&\leq C_1 (\tau_1 - \tau_2)^2 \sum_{i=-n+1}^{n-1} \sup_{1 \leq t \leq n - i} \abs{\E X_t X_{t+\abs{i}}} \leq C_2 (\tau_1 - \tau_2)^2,
\end{align*}
such that \eqref{eq:B_n} can be bounded by a constant not depending on $\tau_1$ or $\tau_2$, which concludes the proof.
\end{proof}

\begin{proof}[{\bf Proof of Lemma \ref{lem:SEb}}]
Let $A_n^* (\tau_0, \tau_1, \tau_2) = \frac{1}{\sqrt{nh}} \sum_{t=1}^{n} \Delta_{\tau_1, \tau_2} k_t(\tau_0) \left[f(t/n) - q_n (\tau)\right] X_t^*$, with $\Delta_{\tau_1, \tau_2}$ defined in \eqref{eq:Delta}. Apply the Markov inequality twice, such that for any $\epsilon, \eta > 0$,
\begin{equation} \label{eq:B_nb}
\P\left\{ \P^* \left[\frac{\abs{A_n^* (\tau_0, \tau_1, \tau_2)}}{\abs{f(\tau) - q_n(\tau)} \abs{\tau_1 - \tau_2}} > \epsilon  \right] > \eta \right\} \leq \epsilon^{-2} \eta^{-1} \frac{\E \E^* A_n^* (\tau_0, \tau_1, \tau_2)^2}{(\tau_1 - \tau_2)^2 (f(\tau) - q_n(\tau))^2}.
\end{equation}
We again show that, for large enough $n$, that \eqref{eq:B_nb} can be bounded by a constant, which is sufficient for the result to hold.

By the compact support and the Lipschitz property of the kernel, we then find that
\begin{align*}
\E \E^* A_n^* (\tau_0, \tau_1, \tau_2)^2 &= \frac{1}{nh} \sum_{i=-n+1}^{n-1}\sum_{t=1}^n \Delta_{\tau_1, \tau_2} k_t (\tau_0) \Delta_{\tau_1, \tau_2} k_{t+i} (\tau_0) \E \E^* X_t^* X_{t+\abs{i}}^*\\
&\leq \frac{1}{nh} \sum_{i=-n+1}^{n-1} \sup_{1 \leq t \leq n - i} \abs{\E \E^* X_t^* X_{t+\abs{i}}^*} \dsum{t} \abs{\Delta_{\tau_1, \tau_2} k_t (\tau_0)} \abs{\Delta_{\tau_1, \tau_2} k_{t+i} (\tau_0) }\\
&\leq C (\tau_1 - \tau_2)^2 \sum_{i=-n+1}^{n-1} \sup_{1 \leq t \leq n - i} \abs{\E \E^* X_t^* X_{t+\abs{i}}^*} \leq C_1 (\tau_1 - \tau_2)^2.\qedhere
\end{align*}
\end{proof}

\begin{proof}[{\bf Proof of Lemma \ref{lem:R_unif}}]
Using the terms defined in \eqref{eq:m_decomp}, we define
\begin{equation*}
R_{n,1} (\tau) = I_n (\tau) + II_n (\tau),
\end{equation*}
such that $R_n(\tau) = R_{n,1}(\tau) + III_n(\tau)$. With $\Delta_{\tau_1, \tau_2}$ as defined in \eqref{eq:Delta}, it follows directly from the proof of Lemma \ref{lem:m_decomp} that
\begin{equation*}
\suptautt \abs{\Delta_{\tau_1, \tau_2} III_{n} (\tau_0)} \leq 2\suptau \abs{III_{n} (\tau)} = o(1).
\end{equation*}

For $R_{n,1} (\tau)$, as Lemmas \ref{lem:m_decomp} establishes pointwise convergence, we only need to establish stochastic equicontinuity to have uniform convergence as well. For this purpose we show that there exists some $N>0$ such that for all $n>N$, there exists some random variable $B_n = O_p(1)$ such that
\begin{equation} \label{eq:se}
\abs{\Delta_{\tau_1, \tau_2} R_{n,1}(\tau_0)} \leq B_n \abs{\tau_1 - \tau_2}, \qquad \forall \tau_1, \tau_2 \in [\delta, 1 - \delta],
\end{equation}
which, as shown by \citet[Lemma 1]{Andrews} and \citet[Thm 21.10]{Davidson}, implies stochastic equicontinuity. To establish \eqref{eq:se}, we establish this bound for each term in $R_{n,1} (\tau)$. In the following, we use generic random variables $B_{n}^\prime$ and $B_{n}^{\prime\prime}$, which can change each line, but are always $O_p(1)$ and do not depend on $\tau_1$ and $\tau_2$. First, for $I_n(\tau) = \left[\hat{p}(\tau)^{-1} - p(\tau)^{-1} \right] Z_{n,U} (\tau)$, a straightforward but tedious calculation shows that
\begin{align*}
\abs{\Delta_{\tau_1, \tau_2} I_{n}(\tau_0 )} &\leq B_n^\prime \abs{\Delta_{\tau_1, \tau_2} Z_{n,U} (\tau_0)} + B_n^{\prime \prime} \abs{\Delta_{\tau_1, \tau_2} [\hat{p}(\tau_0) - p(\tau_0)]}.
\end{align*}
By taking $X_t = D_t z_t$, Lemmas \ref{lem:cov} and \ref{lem:SE} directly imply that $\abs{\Delta_{\tau_1, \tau_2} Z_{n,U} (\tau_0)} \leq B_n^\prime \abs{\tau_1 - \tau_2}$. Furthermore, for the second term we can write
\begin{align*}
\abs{\Delta_{\tau_1, \tau_2} [\hat{p}(\tau_0) - p(\tau_0)]} &\leq \abs{\Delta_{\tau_1, \tau_2} [\hat{p}(\tau_0) - \overline{p}(\tau_0)]} + \abs{\Delta_{\tau_1, \tau_2} \left[\overline{p}(\tau_0) - p(\tau_0) - h^2 \frac{\mu_2}{2} p^{(2)} (\tau_0) \right]}\\
&\quad + h^2 \frac{\mu_2}{2} \abs{\Delta_{\tau_1, \tau_2} p^{(2)} (\tau_0) },
\end{align*}
where $\overline{p} (\tau) = \frac{1}{nh} \sum_{t=1}^n k_t (\tau) p\left(\frac{t}{n} \right)$. For the first part we can use Lemma \ref{lem:SE} with $X_t = D_t - p\left(\frac{t}{n} \right)$ to establish the bound $\abs{\Delta_{\tau_1, \tau_2} [\hat{p}(\tau_0) - \overline{p}(\tau_0)]} \leq B_n^\prime \abs{\tau_1 - \tau_2} /\sqrt{nh}$, while for the second part we have uniform convergence by Lemma \ref{lem:kernel_function_limits}(i), which implies stochastic equicontinuity, as
\begin{align*}
\abs{\Delta_{\tau_1, \tau_2} \left[\overline{p}(\tau_0) - p(\tau_0) - h^2 \frac{\mu_2}{2} p^{(2)} (\tau_0) \right]} &\leq 2 \suptau \abs{\overline{p}(\tau_0) - p(\tau_0) - h^2 \frac{\mu_2}{2} p^{(2)} (\tau_0) } \\
&\leq C \max \left\{h^3, \frac{1}{nh} \right\},
\end{align*}
and for any $\tau_1, \tau_2$, we can find an $n$ such that $\max\left\{h^3, \frac{1}{nh} \right\}$ is smaller than $\abs{\tau_1 - \tau_2} / \sqrt{nh}$. Finally, for the third term we get, by the smoothness of $p^{(2)} (\cdot)$, that $\abs{\Delta_{\tau_1, \tau_2} p^{(2)} (\tau_0) } \leq C h \abs{\tau_1 - \tau_2}$. Combining these three results, it therefore follows that
\begin{equation} \label{eq:p_unif}
\abs{\Delta_{\tau_1, \tau_2} [\hat{p}(\tau_0) - p(\tau_0)]} \leq B_n^\prime \abs{\tau_1 - \tau_2} / \sqrt{nh}.
\end{equation}

Similarly, for $II_{n,1} (\tau)$, Lemma \ref{lem:SE} with $X_t = \left[m(\tau_0) - p^{-1} \overline{m}_p (\tau_0) \right] \left[D_t - p\left(\frac{t}{n}\right) \right]$ provides the appropriate bound, as $\abs{\E X_t X_{t+i}} \leq C \cov(D_t, D_{t+i})$. For $II_{n,2} (\tau)$, we have that
\begin{equation*}
\abs{ \Delta_{\tau_1, \tau_2} II_{n,2} (\tau_0)} \leq B_n^\prime \sqrt{nh} \Delta_{\tau_1, \tau_2} [\hat{p} (\tau_0) - p(\tau_0)]^2 + B_n^{\prime \prime} \suptaut [\hat{p} (\tau) - p(\tau)]^2 \Delta_{\tau_1, \tau_2} \overline{m}_p (\tau_0).
\end{equation*}
By \eqref{eq:p_unif}, we have that $\sqrt{nh} \Delta_{\tau_1, \tau_2} [\hat{p} (\tau_0) - p(\tau_0)]^2 \leq B_n^\prime \abs{\tau_1 - \tau_2}$, while, again by \eqref{eq:p_unif} and pointwise convergence, $\sqrt{nh} \suptaut [\hat{p} (\tau) - p(\tau)]^2 = O_p \left( \max\left\{\sqrt{n h^9}, \frac{1}{\sqrt{nh}} \right\} \right)$, while it follows from the Lipschitz properties that $\Delta_{\tau_1, \tau_2} \overline{m}_p (\tau_0) \leq C \abs{\tau_1 - \tau_2}$. Finally, the same arguments used above can also be used to establish that $\abs{\Delta_{\tau_1, \tau_2} II_{n,3} (\tau_0)} \leq B_n^\prime \abs{\tau_1 - \tau_2}$.

We treat the bootstrap terms in exactly the same way using the decomposition in \eqref{eq:mb_decomp}. Again, uniform convergence of $III_n^*(\tau)$ is immediate from the proof of Lemma \ref{lem:mb_decomp}. For the other terms $R_{n,1}^* (\tau) = R_n^*(\tau) - III_n^* (\tau)$, we again establish stochastic equicontinuity, although now conditionally, that is we show that
\begin{equation} \label{eq:seb}
\abs{\Delta_{\tau_1, \tau_2} R_{n,1}^*(\tau_0)} \leq B_n^* \abs{\tau_1 - \tau_2}, \qquad \forall \tau_1, \tau_2 \in [\delta, 1 - \delta],
\end{equation}
where $B_n^* = O_p^*(1)$.

As for $I_n(\tau)$, we can bound $I_{n,1}^*(\tau)$ as
\begin{align*}
\abs{\Delta_{\tau_1, \tau_2} I_{n,1}^*(\tau_0)} &\leq B_n^{\prime} \abs{\Delta_{\tau_1, \tau_2} Z_{n,U} (\tau_0)} + B_n^{*\prime \prime} \abs{\Delta_{\tau_1, \tau_2} [\hat{p}(\tau_0) - p(\tau_0)]}.
\end{align*}
The second part can be bounded using \eqref{eq:p_unif}, while for the first, we apply Lemma \ref{lem:SEb} with $X_t^* = D_t z_t \xi_t^*$. First we show the summability of the bootstrap covariances:
\begin{align*}
&\sum_{i=0}^{n-1} i \sup_t \abs{\E \E^* D_t z_t \xi_t^* D_{t+i} z_{t+i} \xi_{t+i}^*} \leq \sum_{i=0}^{n-1} i R_U (i) \gamma^i \leq \sum_{i=0}^{n-1} i R_U (i) < \infty.
\end{align*}
The bound on $\Delta_{\tau_1, \tau_2} I_{n,1} (\tau_0)$ now follows directly from Lemma \ref{lem:SEb}.

Similarly, for $I_{n,2}^*(\tau)$, take $X_t^* = \left[m\left(\frac{t}{n} \right) - \tilde{m}\left(\frac{t}{n} \right)\right] D_t \xi_t^*$ and note that, by \eqref{eq:m_norm}
\begin{align*}
&\sum_{i=0}^{n-1} i \sup_t \abs{\E \E^* \left[m\left(\frac{t}{n} \right) - \tilde{m}\left(\frac{t}{n} \right)\right] D_t \xi_t^* \left[m\left(\frac{t+i}{n} \right) - \tilde{m}\left(\frac{t+i}{n} \right)\right] D_{t+i} \xi_{t+i}^*}\\
& \leq \sum_{i=0}^{n-1} i \gamma^i \suptau \norm{m\left(\tau\right) - \tilde{m}\left(\tau \right)}^2 \leq C \ell \max \left\{\tilde{h}^4, \frac{1}{n \tilde{h}} \right\} < C_1,
\end{align*}
such that the bound follows directly from Lemma \ref{lem:SEb}. 

The remaining terms do not contain bootstrap quantities, and can be treated entirely analogously to their non-bootstrap counterparts to show that \eqref{eq:B_nb} holds.
\end{proof}

\clearpage
\section{Additional Simulation Results}
We present here additional simulation results that are part of the specifications mentioned in Section \ref{sec:simulation}. Specifically, we first show the homoskedastic results. Second, we present additional tables for heteroskedastic DGPs that vary the parameters $k$ and $a$. They define the volatility process given in equation \eqref{eq:variance} in the main paper. Third, we have further results with missing data. We only present results for one bandwidth $h=0.06$ in the paper. Here, the other bandwidths are considered. The results in this Appendix are obtained using 1000 Monte Carlo repetitions and $B=599$.

\begin{table}
\centering

\caption{Coverage with missing data ($h=0.04$)}
\label{missings2}
\end{table}

\clearpage

\section{Additional Empirical Results}

Following up on the frequency domain analysis in the paper, we perform the same analysis on the residual series after a regression of the ethane series on the Fourier terms. We increase the number of Fourier terms, $M$, from 1 to 4. The result is shown in Figure \ref{fig:terms} with an increasing $M$ from \ref{fig:1_term} to \ref{fig:4_terms}. The most important observation in comparison to Figure \ref{fig:perio} is that in all four graphs, the large peak at 1 has disappeared. Oscillations at this frequency have been captured. The peaks at 0 are now the most pronounced ones. In panel \ref{fig:2_terms} for $M=2$, we can see that the tallest peak around 2 vanished, if we compare it to panel \ref{fig:1_term}. The same holds for the remaining two panels: the peaks at 3 and 4 vanish for $M=3$ and $M=4$, respectively. Overall, due to the small scale of the peaks at higher frequencies, we can say that including a minimum of one Fourier term seems the best strategy according to this analysis.

\begin{figure}[hbt]
\centering
\subfigure[$M=1$]
	{
		 \includegraphics[width=0.47\linewidth]{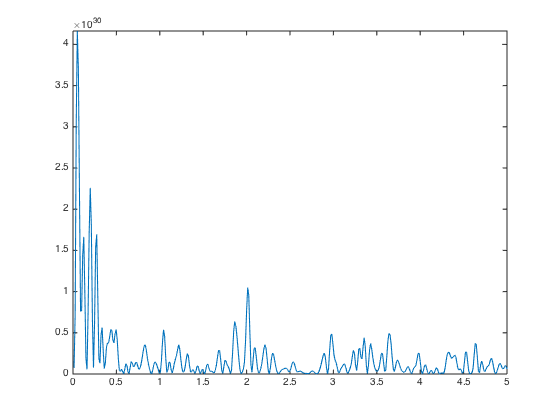}
		 \label{fig:1_term}
      }
\subfigure[$M=2$]
	{
		 \includegraphics[width=0.47\linewidth]{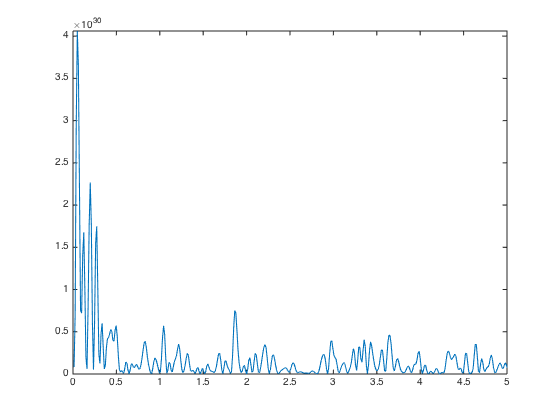}
 		 \label{fig:2_terms}
    }\\
\subfigure[$M=3$]
	{
		 \includegraphics[width=0.47\linewidth]{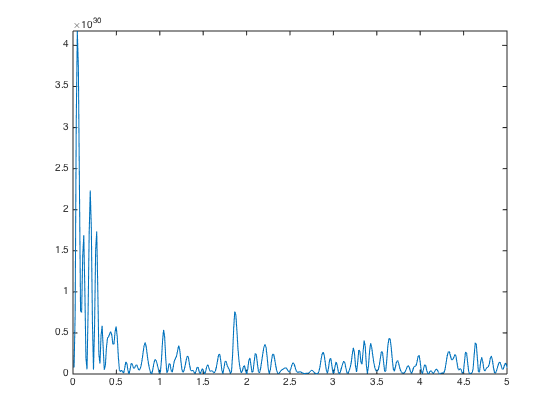}
  		 \label{fig:3_terms}
	}
\subfigure[$M=4$]
	{
		 \includegraphics[width=0.47\linewidth]{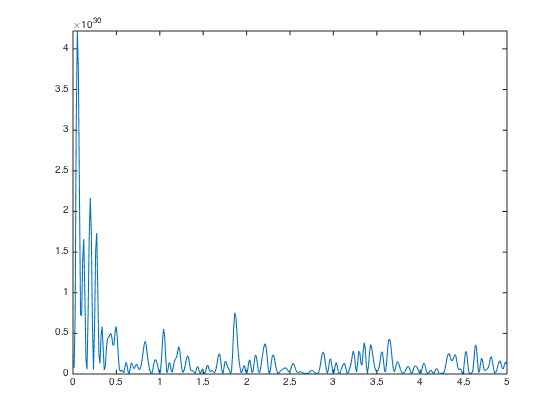}
  		 \label{fig:4_terms}
    }
\caption{The Lomb-Scargle periodogram after projection on an increasing $M$}
\label{fig:terms}
\end{figure}

Next, we show additional plots nonparametric trend plus confidence bands for a different number of Fourier terms ($M$) used to handle seasonality. In the paper, the results shown are for $M=3$. Here, we present additional results for $M=1,2,4$. They are plotted in Figures \ref{fig:1Term}, \ref{fig:2Terms} and \ref{fig:4Terms}, respectively. As pointed out there, we observe that the resulting trend curves and confidence bands are almost identical to those in the paper. Lastly, Figure \ref{fig:bandwidth6} shows that when we increase the bandwidth, there is enough smoothing due to the nonparametric estimator such that no initial regression on Fourier terms is needed. Again, the shape is very similar to the previous graphs.

\begin{figure}[hbt]
	\centering
	\includegraphics[width=\linewidth]{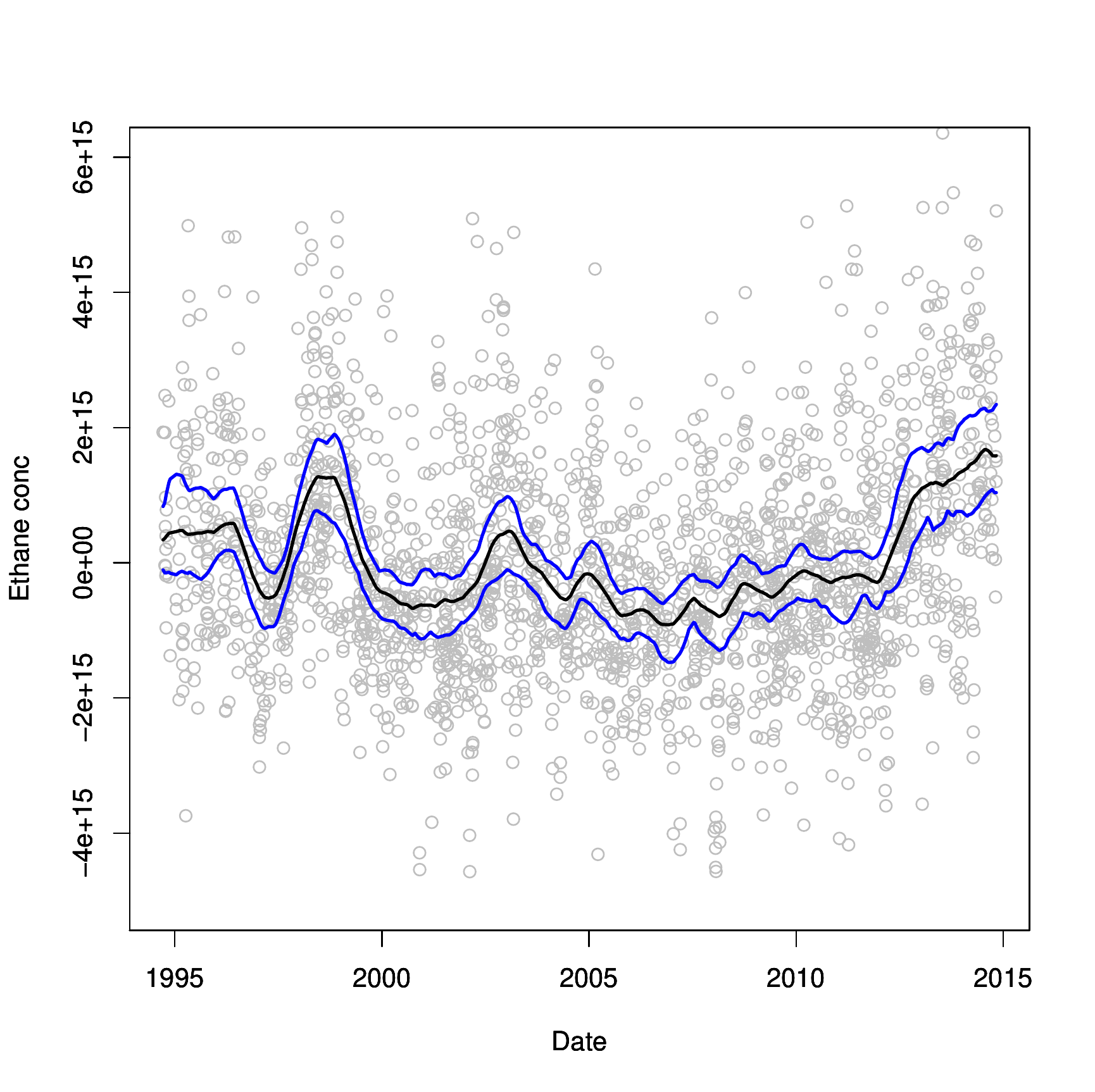}
	\caption{$M=1$, $h=0.03$}
	\label{fig:1Term}
\end{figure}

\begin{figure}
	\centering
	\includegraphics[width=\linewidth]{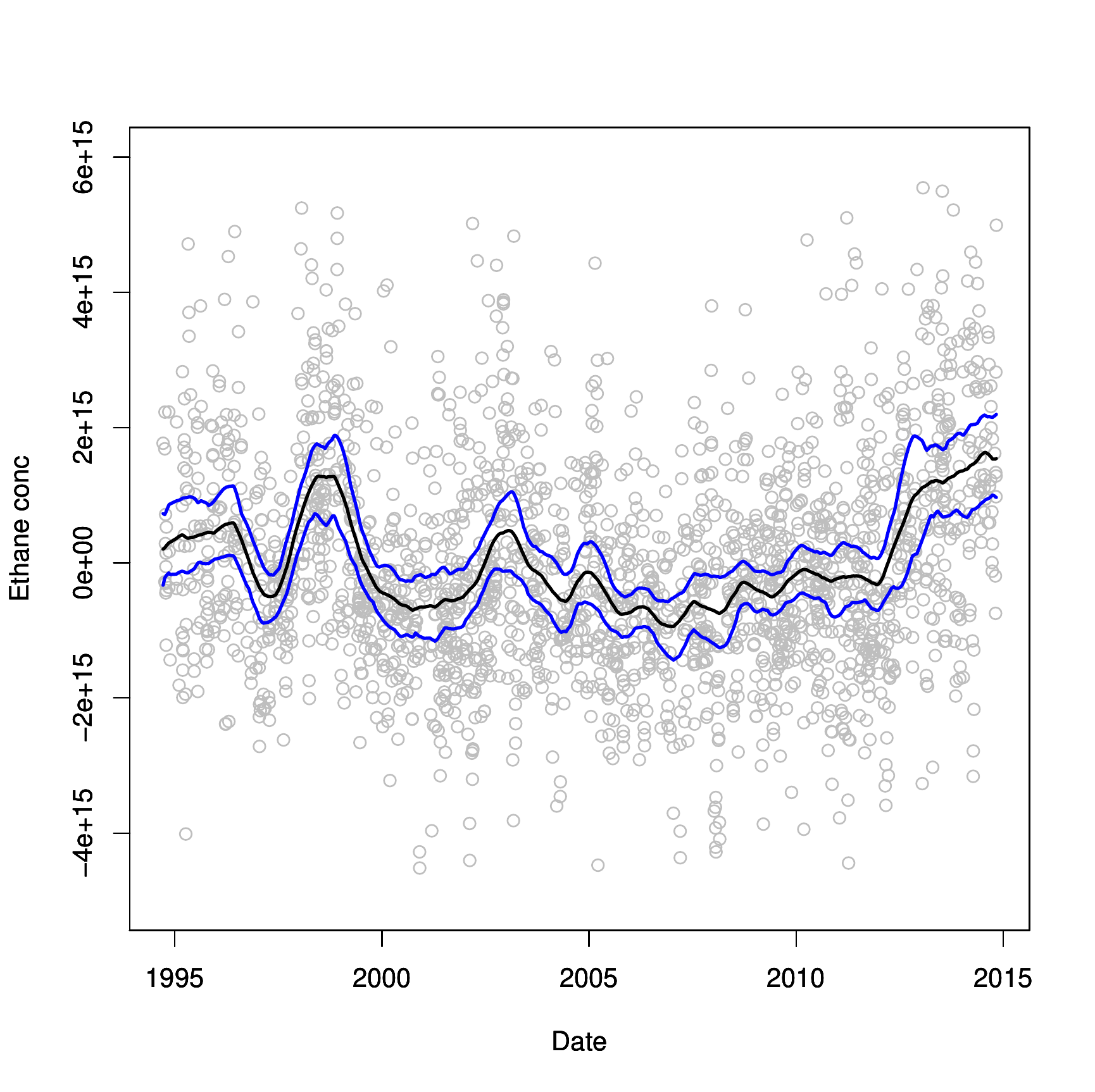}
	\caption{$M=2$, $h=0.03$}
	\label{fig:2Terms}
\end{figure}

\begin{figure}
	\centering
	\includegraphics[width=\linewidth]{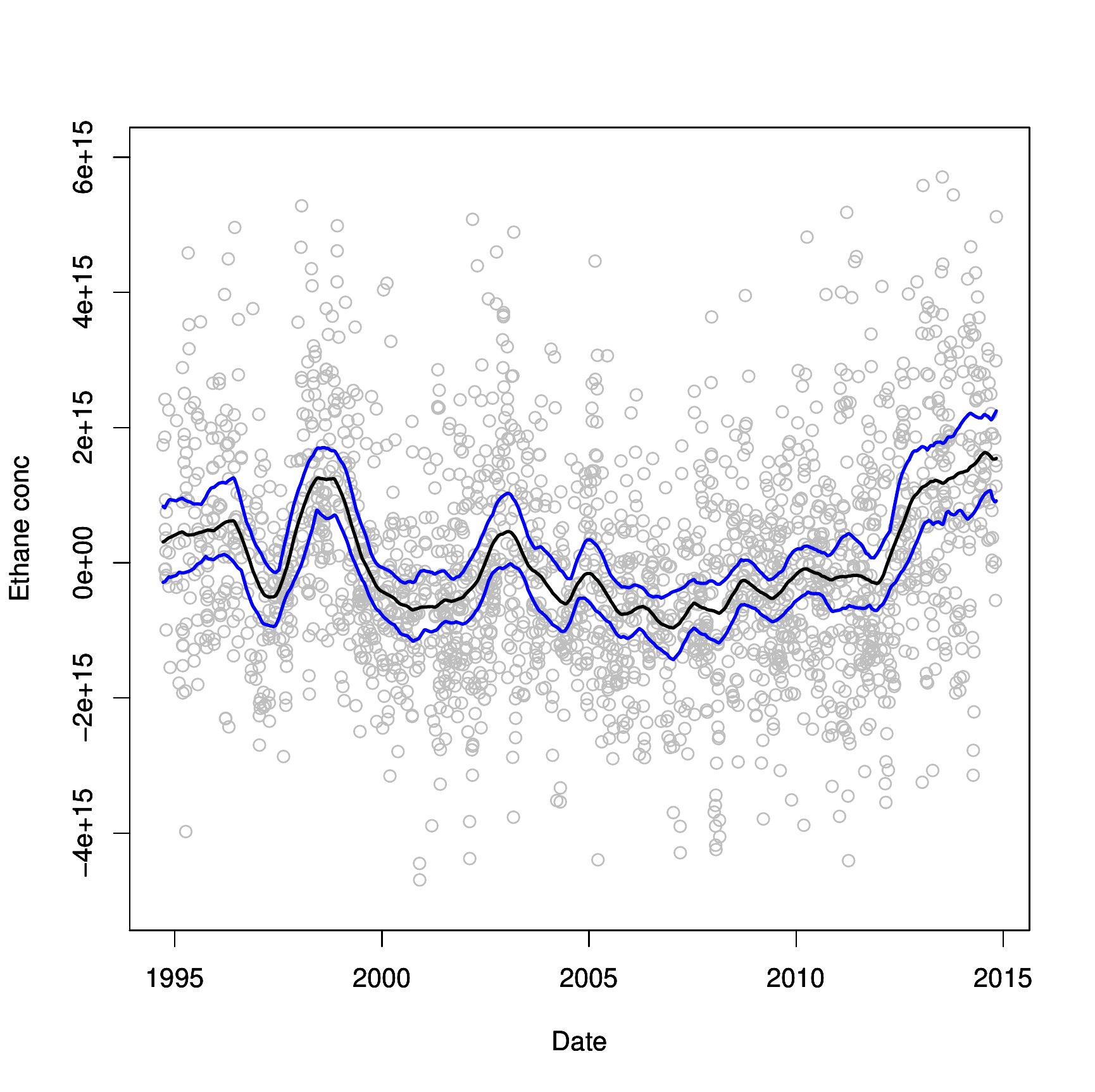}
	\caption{$M=4$, $h=0.03$}
	\label{fig:4Terms}
\end{figure}

\begin{figure}
	\centering
	\includegraphics[width=\linewidth]{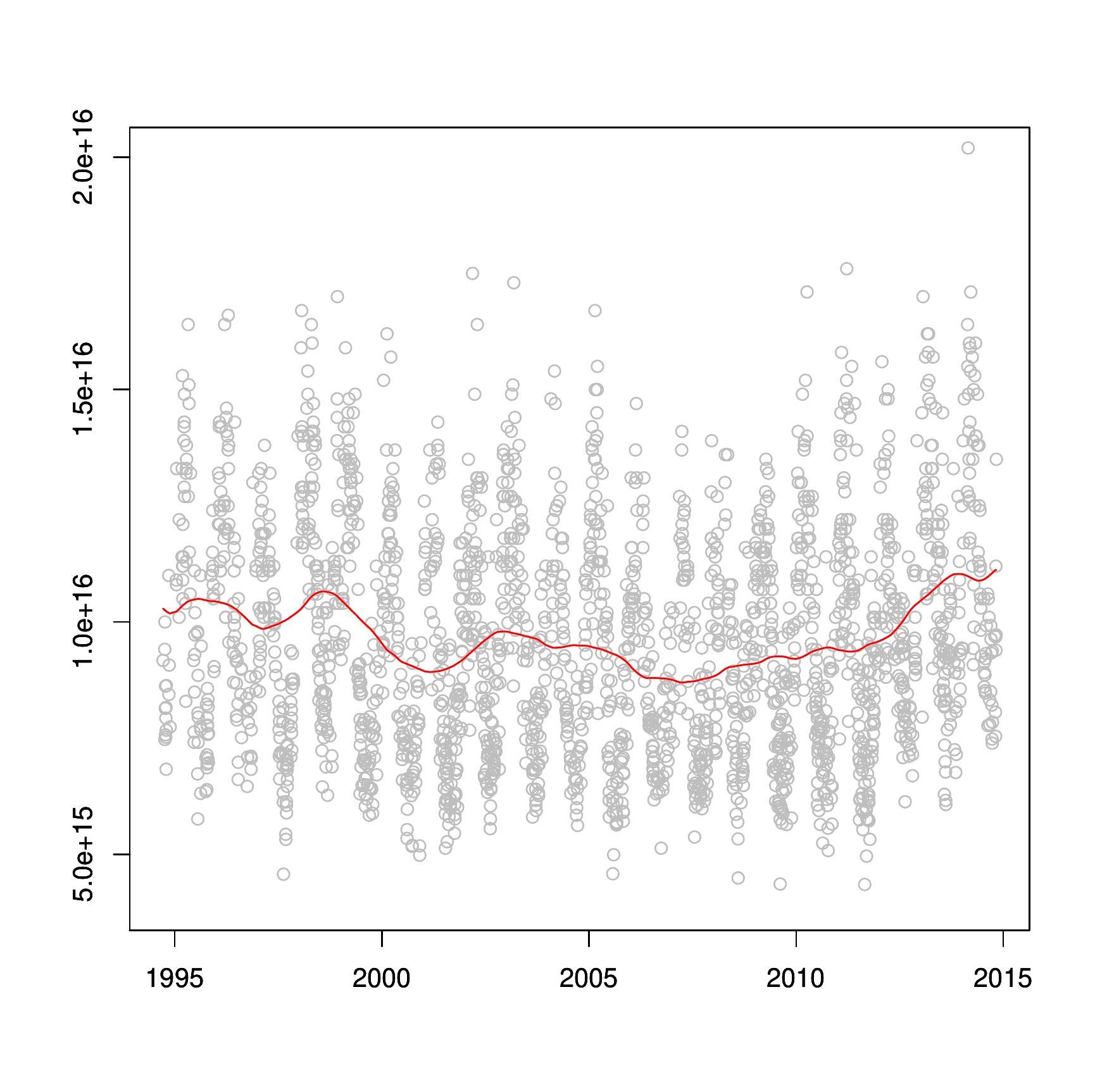}
	\caption{$h=0.06$}
	\label{fig:bandwidth6}
\end{figure}

\end{appendices}

\end{document}